\newtheorem{thm}{Theorem}[section]
\newtheorem{lem}[thm]{Lemma}
\newtheorem{assumption}[thm]{Assumption}
\newtheorem{definition}[thm]{Definition}
\newtheorem{example}[thm]{Example}
\newtheorem{remark}[thm]{Remark}
\newenvironment{rem}{\begin{remark}\rm}{\end{remark}}
\newcommand{\E}{\mathbb{E}}
\newcommand{\eps}{\epsilon}
\newcommand{\EX}{{\Bbb{E}}}
\newcommand{\PX}{{\Bbb{P}}}
\title{On Continuous-Time Gaussian Channels~\footnote{Results in this paper have been partially presented in the 2014 IEEE ISIT~\cite{LiuHan2014}.}}
\author{\small \begin{tabular}{ccc}
Xianming Liu & Guangyue Han\\
Huazhong University of Science and Technology & The University of Hong Kong\\
email: xmliu@hust.edu.cn & email: ghan@hku.hk\\
\end{tabular}}
\date{{\normalsize \today}}
\date{{\normalsize \today}}
\begin{document} \maketitle

\begin{abstract}
A continuous-time white Gaussian channel can be formulated using a white Gaussian noise, and a conventional way for examining such a channel is the sampling approach based on the Shannon-Nyquist sampling theorem, where the original continuous-time channel is converted to an equivalent discrete-time channel, to which a great variety of established tools and methodology can be applied. However, one of the key issues of this scheme is that continuous-time feedback and memory cannot be incorporated into the channel model. It turns out that this issue can be circumvented by considering the Brownian motion formulation of a continuous-time white Gaussian channel. Nevertheless, as opposed to the white Gaussian noise formulation, a link that establishes the information-theoretic connection between a continuous-time channel under the Brownian motion formulation and its discrete-time counterparts has long been missing. This paper is to fill this gap by establishing causality-preserving connections between continuous-time Gaussian feedback/memory channels and their associated discrete-time versions in the forms of sampling and approximation theorems, which we believe will play important roles in the long run for further developing continuous-time information theory.

As an immediate application of the approximation theorem, we propose the so-called approximation approach to examine continuous-time white Gaussian channels in the point-to-point or multi-user setting. It turns out that the approximation approach, complemented by relevant tools from stochastic calculus, can enhance our understanding of continuous-time Gaussian channels in terms of giving alternative and strengthened interpretation to some long-held folklore, recovering ``long known'' results from new perspectives, and rigorously establishing new results predicted by the intuition that the approximation approach carries. More specifically, using the approximation approach complemented by relevant tools from stochastic calculus, we first derive the capacity regions of continuous-time white Gaussian multiple access channels and broadcast channels, and we then analyze how feedback affects their capacity regions: feedback will increase the capacity regions of some continuous-time white Gaussian broadcast channels and interference channels, while it will not increase capacity regions of continuous-time white Gaussian multiple access channels.
\end{abstract}

{\bf Index Terms}: {\it continuous-time channel, Gaussian channel, feedback, memory, sampling theorem, network information theory, capacity, capacity region, mutual information}

\section{Introduction}  \label{intro}

Continuous-time Gaussian channels were considered at the very inception of information theory. In his celebrated paper~\cite{sh48} birthing information theory, Shannon studied the following point-to-point continuous-time white Gaussian channels:
\begin{equation} \label{white-Gaussian-noise-channel}
Y(t)=X(t)+Z(t), \quad t \in \mathbb{R},
\end{equation}
where $X(t)$ is the channel input with average power limit $P$, $Z(t)$ is the white Gaussian noise with flat power spectral density $1$ and $Y(t)$ are the channel output. Shannon actually only considered the case that the channel has bandwidth limit $\omega$, namely, the channel input $X$ and the noise $Z$, and therefore the output $Y$ all have bandwidth limit $\omega$ (alternatively, as in ($9.54$) of~\cite{co2006}, this can be interpreted as the original channel (\ref{white-Gaussian-noise-channel}) concatenated with an ideal bandpass filter with bandwidth limit $\omega$). Using the celebrated Shannon-Nyquist sampling theorem~\cite{ny24, sh49}, the continuous-time channel (\ref{white-Gaussian-noise-channel}) can be equivalently represented by a parallel Gaussian channel:
\begin{equation} \label{sampled-white-Gaussian-noise-channel}
Y_{n}^{(\omega)}=X_{n}^{(\omega)}+Z_{n}^{(\omega)}, \quad n \in \mathbb{Z},
\end{equation}
where the noise process $\{Z_n^{(\omega)}\}$ is i.i.d. with variance $1$~\cite{co2006}. Regarding the ``space'' index $n$ as time, the above parallel channel can be interpreted as a discrete-time Gaussian channel associated with the continuous-time channel (\ref{white-Gaussian-noise-channel}). It is well known from the theory of discrete-time Gaussian channels that the capacity of the channel (\ref{sampled-white-Gaussian-noise-channel}) can be computed as
\begin{equation} \label{finite-bandwidth-capacity}
C^{(\omega)}=\omega \log \left(1+\frac{P}{2 \omega}\right).	
\end{equation}
Then, the capacity $C$ of the channel (\ref{white-Gaussian-noise-channel}) can be computed by taking the limit of the above expression as $\omega$ tends to infinity:
\begin{equation} \label{infinite-bandwidth-capacity}
C=\lim_{\omega \to \infty} C^{(\omega)}=P/2.
\end{equation}

The {\em sampling approach} consisting of (\ref{white-Gaussian-noise-channel})-(\ref{infinite-bandwidth-capacity}) as above, which serves as a link between the continuous-time channel (\ref{white-Gaussian-noise-channel}) and the discrete-time channel (\ref{sampled-white-Gaussian-noise-channel}), typifies a conventional way to examine continuous-time Gaussian channels: convert them into associated discrete-time Gaussian channels, for which we have ample ammunition at hands. Note that when $P$ tends to $0$, using the fact that when $P$ is ``close'' to $0$,
$$
\omega \log  \left(1+\frac{P}{2 \omega}\right) \mbox{ is ``close'' to } \frac{P}{2},
$$
one also reaches (\ref{infinite-bandwidth-capacity}), which roughly explains the following long-held folklore within the information theory community:
\begin{quote}
{\em a continuous-time infinite-bandwidth Gaussian channel {\bf \em without feedback or memory} is ``equivalent'' to a discrete-time Gaussian channel without feedback or memory at low signal-to-noise ratio (SNR).} \hfill (A)
\end{quote}
Moments of reflection, however, reveals that the sampling approach for the channel capacity (with bandwidth limit or not) is heuristic in nature: For one thing, a bandwidth-limited signal cannot be time-limited, which renders it infeasible to define the data transmission rate if assuming a channel has bandwidth limit. In this regard, rigorous treatments coping with this issue and other technicalities can be found in~\cite{wy66, ga68}; see also~\cite{sl76} for a relevant in-depth discussion. Another issue is that, even disregarding the above technical nuisance arising from the bandwidth limit assumption, the sampling approach only gives a lower bound for the capacity of (\ref{white-Gaussian-noise-channel}): it shows that $P/2$ is achievable via a class of special coding schemes, but it is not clear that why transmission rate higher than $P/2$ cannot be achieved by other coding schemes. The capacity of (\ref{white-Gaussian-noise-channel}) was rigorously studied in~\cite{fo61, be62}, and a complete proof establishing $P/2$ as its de facto capacity can be found in~\cite{as63,as64}.

Alternatively, the continuous-time white Gaussian channel (\ref{white-Gaussian-noise-channel}) can be examined~\cite{ih93} under the Brownian motion formulation:
\begin{equation} \label{Brownian-motion-channel}
Y(t)=\int_0^t X(s)ds + B(t),
\end{equation}
where, slightly abusing the notation, we still use $Y(t)$ to denote the output corresponding to the input $X(s)$, and $B(t)$ denotes the standard Brownian motion ($Z(t)$ can be viewed as a generalized derivative of $B(t)$); equivalently, the channel (\ref{Brownian-motion-channel}) can be seen as the original channel (\ref{white-Gaussian-noise-channel}) concatenated with an integrator circuit. As opposed to white Gaussian noises, which only exist as generalized functions~\cite{po94}, Brownian motions are well-defined stochastic processes and have been extensively studied in probability theory. Here we remark that, via a routine orthonormal decomposition argument, both of the two channels are equivalent to a parallel channel consisting of infinitely many Gaussian sub-channels~\cite{as65}.

An immediate and convenient consequence of such a formulation is that many notions in discrete time, including mutual information and typical sets, carry over to the continuous-time setting, which will rid us of the nuisances arising from the bandwidth limit assumption. Indeed, such a framework yields a fundamental formula for the mutual information of the channel (\ref{Brownian-motion-channel})~\cite{Duncan70, ka71} and a clean and direct proof~\cite{ka71} that the capacity of (\ref{Brownian-motion-channel}) is $P/2$; moreover, as evidenced by numerous results collected in~\cite{ih93} and some recent representative work~\cite{we10, we13} on point-to-point Gaussian channels, the use of Brownian motions elevates the level of rigor of our treatment, and equip us with a wide range of established techniques and tools from stochastic calculus. Here we remark that Girsanov's theorem, one of the most important theorems in stochastic calculus, lays the foundation of our rigourous treatment; for those who are interested in the technical details in our proofs, we refer to~\cite{li01,ih93}, where Girsanov's theorem (and its numerous variants) and its wide range of applications in information theory are discussed in great details.

Furthermore, as elaborated in Remark~\ref{from-continuous-to-discrete}, the Brownian motion formulation is also versatile enough to accommodate feedback and memory; in particular, the point-to-point continuous-time white Gaussian memory/feedback channel can be characterized by the following stochastic differential equation:
\begin{equation} \label{feedback-memory}
Y(t)=\int_0^t g(s, W_0^s, Y_0^{s}) ds + B(t), \quad t \in [0, T],
\end{equation}
where $g$ is a function from $[0, T] \times C[0, T] \times C[0, T]$ to $\mathbb{R}$. Note that (\ref{feedback-memory}) can be interpreted
\begin{enumerate}
\item[1)] either as a feedback channel, where $W_0^s$ can be rewritten as $M$, interpreted as the message to be transmitted through the channel, and $g(s)$ can be rewritten as $X(s)$, interpreted as the channel input, which depends on $M$ and $Y_0^{s}$, the channel output up to time $s$ that is fed back to the sender,
\item[2)] or as a memory channel, where $W_0^s$ can rewritten as $X_0^s$, interpreted as the channel input, $g$ is ``part'' of the channel, and $Y(t)$, the channel output at time $t$, depends on $X_0^t$ and $Y_0^t$, the channel input and output up to time $t$ that are present in the channel as memory, respectively.
\end{enumerate}
Note that, strictly speaking, the third parameter of $g$ in (\ref{feedback-memory}) should be $Y_0^{s-}$, which, however, can be equivalently replaced by $Y_0^s$ due to the continuity of sample paths of $\{Y(t)\}$. Note that, with the presence of feedback/memory, the existence and uniqueness of $Y$ is in fact a tricky mathematical problem, however, we will in this paper simply assume that the input $X$ is appropriately chosen such that $Y$ uniquely exists. For more detailed discussion about the Brownian motion formulation, we refer the reader to~\cite{ih93}.

As opposed to the white Gaussian noise formulation, under the Brownian motion formulation, memory and feedback can be naturally translated to the discrete-time setting: the pathwise continuity of a Brownian motion allows the inheritance of temporal causality when the channel is sampled (see Section~\ref{sampling-theorems}) or approximated (see Section~\ref{approximation-theorems}). On the other hand, the white Gaussian noise formulation is facing inherent difficulty as far as inheriting temporal causality is concerned: in converting (\ref{white-Gaussian-noise-channel}) to (\ref{sampled-white-Gaussian-noise-channel}), while $X_n^{(w)}$ are obtained as ``time'' samples of $X(t)$, $Z_n^{(w)}$ are in fact ``space'' samples of $Z(t)$, as they are merely the coefficients of the (extended) Karhunen-Loeve decomposition of $Z(t)$~\cite{ge59, huang, johnson}; see also~\cite{yhk06} for an in-depth discussion on this.

On the other hand though, as opposed to the white Gaussian noise formulation, a link that establishes the information-theoretic connection between the continuous-time channel (\ref{feedback-memory}) and its discrete-time counterparts has long been missing, which may explain why discrete-time and continuous-time information theory (under the Brownian motion formulation) have largely gone separate ways with little interaction for the past several decades. In this paper, we will fill this gap by establishing causality-preserving connections between the channel (\ref{Brownian-motion-channel}) and its associated discrete-time versions in the forms of sampling and approximation theorems, which we believe will serve as the above-mentioned missing links and play important roles in the long run for further developing continuous-time information theory, particularly for the communication scenarios when feedback/memory is present.

As an immediate application of the approximation theorem, we propose the approximation approach to examine continuous-time Gaussian feedback channels with the average power constraint and infinite bandwidth (again, by comparison, the conventional sampling approach cannot handle feedback). It turns out that this approach, when complemented by relevant tools from stochastic calculus, can greatly enhance our understanding of continuous-time Gaussian channels in terms of giving alternative and strengthened interpretations to the low SNR equivalence in (A), recovering ``long known'' results (Theorems~\ref{Theorem-MAC}(for the non-feedback case),~\ref{Theorem-IC-Without-Feedback} and~\ref{Theorem-BC-Without-Feedback}) from new and rigorous perspectives, and deriving new results (Theorems~\ref{Theorem-MAC}(for the feedback case),~\ref{Theorem-IC-With-Feedback},~\ref{Theorem-BC-With-Feedback-1} and~\ref{Theorem-BC-With-Feedback-2}) inspired by the intuition that the approximation approach carries.

Below, we summarize the contributions of this paper in greater details.

In Section~\ref{sampling-theorems}, we prove Theorems~\ref{sampling-theorem-1} and~\ref{sampling-theorem-2}, sampling theorems for a continuous-time Gaussian feedback/memory channel, which naturally connect such a channel with their sampled discrete-time versions. And in Section~\ref{approximation-theorems}, we prove Theorems~\ref{approximation-theorem-1} and~\ref{approximation-theorem-2}, the so-called approximation theorems, which connect a continuous-time Gaussian feedback/memory channel with its approximated discrete-time versions (in the sense of the Euler-Maruyama approximation~\cite{kl92}). Roughly speaking, a sampling theorem says that a time-sampled channel is ``close'' to the original channel if the sampling is fine enough, and an approximation theorem says that an approximated channel is ``close'' to the original channel if the approximation is fine enough, both in an information-theoretic sense. Note that, as elaborated in Remark~\ref{sampling-approximation-theorems}, certain version of the approximation theorem boils down to the sampling theorem when there is no memory and feedback in the channel.

Apparently a sampling theorem, whose spirit is in line with the Shannon-Nyquist sampling theorem, is of practical and theoretical value due to the fact it deals with the ``real'' values of the channel output; and, as will be elaborated later, approximation theorems seem to be surprisingly useful in a number of respects despite the fact it only deals with the ``approximated'' values of the channel output: it can certainly provide alternative rigorous tools in translating results from discrete time to continuous time; more importantly, as elaborated in Section~\ref{infinite-bandwidth-revisited}, it lays the foundation for the approximation approach, which gives us the intuition in the point-to-point continuous-time setting, which will further help us to deliver rigorous treatments of multi-user continuous-time Gaussian channels in Section~\ref{multi-user-Gaussian-channels}.

More specifically, in Section~\ref{multi-user-Gaussian-channels}, we derive the capacity regions of a continuous-time white Gaussian multiple access channel (Theorem~\ref{Theorem-MAC}), a continuous-time white Gaussian interference channel (Theorem~\ref{Theorem-IC-Without-Feedback}), and a continuous-time white Gaussian broadcast channel (Theorem~\ref{Theorem-BC-Without-Feedback}). Here, we note that when there is no feedback, as discussed in Remark~\ref{sampling-heuristics}, the results above are ``long known'' in the sense that they are roughly suggested by the conventional sampling approach, or alternatively, the low SNR equivalence in (A). However, to the best of our knowledge, explicit formulations and statements of such results are missing in the literature and their rigourous proofs are non-trivial (for instance, when establishing Theorem~\ref{Theorem-BC-Without-Feedback}, we have to resort to the continuous-time I-MMSE relationship~\cite{gu05}, which has been established only recently). By comparison, the presence of feedback necessitates the use of the approximation approach, which help us to connect relevant results and proofs in discrete time to analyze how feedback affects the capacity regions of families of continuous-time multi-user one-hop Gaussian channels: feedback will increase the capacity regions of some continuous-time Gaussian broadcast channels (Theorem~\ref{Theorem-BC-With-Feedback-2}) and interference channels (Theorem~\ref{Theorem-IC-With-Feedback}), while it will not increase capacity regions of a continuous-time physically degraded Gaussian broadcast channel (Theorem~\ref{Theorem-BC-With-Feedback-1}) and a continuous-time Gaussian multiple access channels (Theorem~\ref{Theorem-MAC}).

\section{Sampling Theorems} \label{sampling-theorems}

A very natural question is whether, similarly for the white Gaussian noise formulation, sampling theorems hold for continuous-time white Gaussian channels under Brownian motion formulation. In this section, we will establish sampling theorems for the channel (\ref{feedback-memory}), which naturally connect such channels with their discrete-time versions obtained by sampling.

Consider the following regularity conditions for channel (\ref{feedback-memory}):
\begin{itemize}
\item[(a)] The solution $\{Y(t)\}$ to the stochastic differential equation (\ref{feedback-memory}) uniquely exists;
\item[(b)]
$$
\PX\left(\int_0^T g^2(t, W_0^t, Y_0^t) dt < \infty \right)=\PX\left(\int_0^T g^2(t, W_0^t, B_0^t) dt < \infty \right)=1;
$$
\item[(c)]
$$
\int_0^T \E[|g(t, W_0^t, Y_0^t)|] dt < \infty.
$$
\end{itemize}
Note that all the three above conditions are rather weak: Condition (a) is necessary for the channel to be meaningful, and Conditions (b) and (c) are very mild integrability assumptions.

Now, for any $n \in \mathbb{N}$, choose {\it time points} $t_{n, 0}, t_{n, 1}, \ldots, t_{n, n} \in \mathbb{R}$ such that
$$
0=t_{n, 0} < t_{n, 1} < \cdots < t_{n, n-1} < t_{n, n}=T,
$$
and let $\Delta_n \triangleq \{t_{n, 0}, t_{n, 1}, \ldots, t_{n, n}\}$. Sampling the channel (\ref{feedback-memory}) over the time interval $[0, T]$ with respect to $\Delta_n$, we obtain its sampled discrete-time version as follows:
\begin{equation}  \label{after-sampling-with-feedback}
Y(t_{n, i})=\int_0^{t_{n, i}} g(s, W_0^s, Y_0^s) ds + B(t_{n, i}), \quad i=0, 1, \ldots, n.
\end{equation}
For any time point sequence $\Delta_n$, we will use $\delta_{\Delta_n}$ to denote its minimal stepsize, namely,
$$
\delta_{\Delta_n} \triangleq \max_{i=1, 2, \dots, n} (t_{n, i}-t_{n, i-1}).
$$
$\Delta_n$ is said to be {\it evenly spaced} if $t_{n, i}-t_{n, i-1}=T/n$ for all feasible $i$, and we will use the shorthand notation $\delta_n$ to denote its stepsize, i.e., $\delta_n \triangleq t_{n, 1}-t_{n, 0}=T/n$. Apparently, evenly spaced time point sequences are natural candidates with respect to which a continuous-time Gaussian channel can be sampled.

We are primarily concerned with the mutual information for the channel (\ref{feedback-memory}), whose standard definition (see, e.g., ~\cite{pi64,ih93}) is given below:
\begin{equation} \label{definition-mutual-information}
I(W_0^T; Y_0^T)=\begin{cases}
\E\left[\log \frac{d \mu_{W Y}}{d \mu_W \times \mu_Y}(W_0^T, Y_0^T)\right], &\mbox{ if } \frac{d \mu_{W Y}}{d \mu_W \times \mu_Y} \mbox{ exists },\\
\infty, &\mbox{ otherwise },
\end{cases}
\end{equation}
where the subscripted $\mu$ denotes the measure induced on $C[0, T]$ or $C[0, T] \times C[0, T]$ by the corresponding stochastic process and $d \mu_{W Y}/d \mu_W \times \mu_Y$ denotes the Radon-Nikodym derivative of $\mu_{WY}$ with respect to $d \mu_W \times \mu_Y$.

Roughly speaking, the following sampling theorem states that for any sequence of ``increasingly refined'' samplings, the mutual information of the sampled discrete-time channel (\ref{after-sampling-with-feedback}) will converge to that of the original channel (\ref{feedback-memory}).
\begin{thm} \label{sampling-theorem-1}
Assume Conditions (a)-(c). Suppose that $\Delta_n \subset \Delta_{n+1}$ for all $n$ and that $\delta_{\Delta_n} \to 0$ as $n$ tends to infinity. Then, we have
$$
\lim_{n \to \infty} I(W_0^T; Y(\Delta_n))=I(W_0^T; Y_0^T),
$$
where $Y(\Delta_n) \triangleq \{Y(t_{n, 0}), Y(t_{n, 1}), \ldots, Y(t_{n, n})\}$.
\end{thm}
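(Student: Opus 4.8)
The plan is to reduce the statement to a monotone-convergence property of relative entropy along an increasing family of $\sigma$-algebras, and to exploit the pathwise continuity of $\{Y(t)\}$ — the very feature that the Brownian motion formulation provides — to identify the limiting $\sigma$-algebra. Throughout I write $\mathcal{F}_n \triangleq \sigma(Y(\Delta_n))$ for the $\sigma$-algebra generated by the sampled outputs and use the standard identity $I(W_0^T; Y(\Delta_n)) = D(\mu_{W, Y(\Delta_n)} \,\|\, \mu_W \times \mu_{Y(\Delta_n)})$, the relative entropy between the joint law and the product of the marginals.

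First I would dispose of the easy half. Because $\Delta_n \subset \Delta_{n+1}$, the vector $Y(\Delta_n)$ is a measurable function of $Y(\Delta_{n+1})$, so $\mathcal{F}_n \subset \mathcal{F}_{n+1} \subset \sigma(Y_0^T)$. By the data-processing inequality for mutual information this makes $n \mapsto I(W_0^T; Y(\Delta_n))$ non-decreasing and bounded above by $I(W_0^T; Y_0^T)$; hence the limit exists and
\[
\lim_{n \to \infty} I(W_0^T; Y(\Delta_n)) \le I(W_0^T; Y_0^T).
\]
The entire content of the theorem is therefore the matching lower bound.

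The second step is the $\sigma$-algebra identification $\bigvee_n \mathcal{F}_n = \sigma(Y_0^T)$. Since the partitions are nested and $\delta_{\Delta_n} \to 0$, the sampling set $D \triangleq \bigcup_n \Delta_n$ is dense in $[0, T]$. Almost every sample path of $\{Y(t)\}$ is continuous, so for each $t$ we have $Y(t) = \lim_{s \to t,\, s \in D} Y(s)$, which shows that $Y(t)$ is measurable with respect to $\sigma(\{Y(s) : s \in D\}) = \bigvee_n \mathcal{F}_n$; the reverse inclusion is trivial, giving the claim. This is the one place where path continuity (as opposed to the white-noise formulation) is indispensable.

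Finally I would establish the lower bound by martingale convergence of densities. Conditions (b)--(c), through Girsanov's theorem, guarantee that $\mu_{WY} \ll \mu_W \times \mu_Y$ with finite relative entropy (Duncan's formula), so the Radon--Nikodym derivative $\Lambda \triangleq d\mu_{WY}/d(\mu_W \times \mu_Y)$ exists and satisfies $\E_{\mu_W \times \mu_Y}[\Lambda \log \Lambda] < \infty$. Setting $\mathcal{G}_n \triangleq \mathcal{B}(C[0,T]) \otimes \mathcal{F}_n$, one checks that $I(W_0^T; Y(\Delta_n))$ equals the relative entropy of the restrictions of $\mu_{WY}$ and $\mu_W \times \mu_Y$ to $\mathcal{G}_n$, with density $\Lambda_n \triangleq \E_{\mu_W \times \mu_Y}[\Lambda \mid \mathcal{G}_n]$. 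By the identification just established, $\mathcal{G}_n$ increases to the full product $\sigma$-algebra, so $\Lambda_n \to \Lambda$ almost surely and in $L^1(\mu_W \times \mu_Y)$ by the martingale convergence theorem, while $I(W_0^T; Y(\Delta_n)) = \E_{\mu_W \times \mu_Y}[\Lambda_n \log \Lambda_n]$. The main obstacle is passing the limit through this last expectation: since $\varphi(x) = x \log x$ is convex and bounded below, conditional Jensen gives $\varphi(\Lambda_n) \le \E_{\mu_W \times \mu_Y}[\varphi(\Lambda) \mid \mathcal{G}_n]$, a martingale closed by $\varphi(\Lambda) \in L^1$ and hence uniformly integrable, which supplies exactly the uniform integrability needed to upgrade the almost-sure convergence $\varphi(\Lambda_n) \to \varphi(\Lambda)$ to convergence of expectations. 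This yields $\lim_n I(W_0^T; Y(\Delta_n)) \ge I(W_0^T; Y_0^T)$ and closes the argument. Equivalently, one may invoke the abstract monotone-convergence theorem for mutual information under increasing $\sigma$-algebras from~\cite{pi64,ih93}, of which the above is essentially the proof.
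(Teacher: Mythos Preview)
Your proof is correct and takes a genuinely different, more abstract route than the paper's. The paper works with the Brownian reference measure $\mu_B$: it invokes the Liptser--Shiryaev machinery to write $d\mu_{Y|W}/d\mu_B$ and $d\mu_Y/d\mu_B$ explicitly as Girsanov exponentials, represents the sampled densities as the conditional expectations $\E[e^{-\int g\,dY + \frac{1}{2}\int g^2\,ds} \mid Y(\Delta_n), W_0^T]$ (respectively without $W_0^T$), applies Doob's martingale convergence to these two martingales separately, and then passes to the limit under the expectation by a generalized dominated convergence theorem, the domination coming from the elementary bounds $\log x \le x$ and Jensen's inequality. Your approach bypasses the explicit Girsanov form entirely, working directly with the single density $\Lambda = d\mu_{WY}/d(\mu_W\times\mu_Y)$ and the increasing filtration $\mathcal{G}_n$; the domination step is replaced by your uniform-integrability argument via conditional Jensen for $x\log x$. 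Your route is cleaner and more general---indeed it is essentially the proof of the Dobrushin/Pinsker monotone-convergence theorem you cite at the end---whereas the paper's hands-on computation has the virtue of making visible exactly where Conditions~(b) and~(c) enter (through Theorem~7.14 and Lemma~7.7 of~\cite{li01}), and it sets up the explicit density formulas reused later in the proofs of Theorems~\ref{sampling-theorem-2} and~\ref{approximation-theorem-1}. One small caveat: your explicit third step tacitly assumes $I(W_0^T;Y_0^T)<\infty$ so that $\varphi(\Lambda)\in L^1$; Conditions~(b)--(c) alone do not obviously force this (Duncan's formula needs $\E\int_0^T g^2\,ds<\infty$, not merely the a.s.\ finiteness in~(b) or the $L^1$ bound in~(c)), but the abstract monotone-convergence result you invoke at the end does cover the case $I=\infty$, so the argument closes either way.
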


\begin{proof}
The proof is rather technical and thereby postponed to Appendix~\ref{proof-sampling-theorem-1}.
\end{proof}

Regarding the assumptions of Theorem~\ref{sampling-theorem-1}, as mentioned before, Conditions (a)-(c) are very weak, but the condition that ``$\Delta_n \subset \Delta_{n+1}$ for all $n$'' is somewhat restrictive, which, in particular, is not satisfied by the set $\{\Delta_n\}$ of all evenly spaced time point sequences. We next show that this condition can be replaced by some extra regularity conditions: The same theorem holds as long as the stepsize of the sampling tends to $0$, which, in particular, is satisfied by the set of all evenly spaced sampling sequences.

Below and hereafter, defining the distance $\|U_0^s-V_0^t\|$ between $U_0^s$ and $V_0^t$ with $0 \leq s \leq t$ as
\begin{equation} \label{UV-distance}
\|U_0^s-V_0^t\| \triangleq \sup_{r \in [0, s]} |U(r)-V(r)|+ \sup_{r \in [s, t]} |U(s)-V(r)|,
\end{equation}
we may assume the following three regularity conditions for the channel (\ref{feedback-memory}):
\begin{itemize}
\item[(d)] \textbf{Uniform Lipschitz condition:} There exists a constant $L > 0$ such that for any $0 \leq s_1, s_2, s_3, t_1, t_2, t_3 \leq T$, any $U_0^T, V_0^T, Y_0^T$ and $Z_0^T$,
$$
|g(s_1, U_0^{s_2}, Y_0^{s_3})-g(t_1, V_0^{t_2}, Z_0^{t_3})| \leq L (|s_1-t_1|+\|U_0^{s_2}-V_0^{t_2}\|+ \|Y_{0}^{s_3}-Z_0^{t_3}\|);
$$

\item[(e)] \textbf{Uniform linear growth condition:} There exists a constant $L > 0$ such that for any $W_0^T$ and any $Y_0^T$,
$$
|g(t, W_0^t, Y_0^t)| \leq L (1+\|W_0^t\|+\|Y_0^t\|),
$$
where
$$
\|W_0^t\| = \sup_{r \in [0, t]} |W(r)|, \quad \|Y_0^t\|= \sup_{r \in [0, t]} |Y(r)|;
$$

\item[(f)] \textbf{Regularity conditions on $W$:} There exists $\varepsilon > 0$ such that
$$
\E[e^{\varepsilon \|W_0^T\|^2}] < \infty,
$$
and for any $K > 0$, there exists $\varepsilon' > 0$ such that
$$
\E [e^{K \sup_{|s-t| \leq \varepsilon'} (W(s)-W(t))^2}] < \infty,
$$
and there exists a constant $L > 0$ such that for any $\varepsilon'' > 0$,
$$
\E[\sup\nolimits_{|s-t| \leq \varepsilon''} (W(s)-W(t))^4] \leq L (\varepsilon'')^2.
$$
\end{itemize}

The following lemma, whose proof is postponed to Appendix~\ref{proof-improved-liptser-1}, says that Conditions (d)-(f) are stronger than Conditions (a)-(c). We however remark that Conditions (d)-(f) are still rather mild assumptions: The uniform Lipschitz condition, uniform linear growth condition and their numerous variants are typical assumptions that can guarantee the existence and uniqueness of the solution to a given stochastic differential equation. In theory, these two conditions are considered mild in the sense there are examples that the corresponding stochastic differential equation may not have solutions at all if these two conditions are not satisfied (see, e.g.,~\cite{mao97}). Note that the third condition is a mild integrability condition; as a matter of fact, for a feedback channel where $W$ is interpreted as the message, this condition is trivially satisfied. All three conditions above will be taken for granted in most practical communication situations: as might be expected, the signals employed in practice will be much better-behaving.
\begin{lem} \label{improved-liptser-1}
Assume Conditions (d)-(f). Then, there exists a unique strong solution of (\ref{feedback-memory}) with initial value $Y(0)=0$. Moreover, there exists $\varepsilon > 0$ such that
\begin{equation} \label{exponential-finiteness-1}
\E [e^{\varepsilon \|Y_0^T\|^2}] < \infty,
\end{equation}
which immediately implies Conditions (b) and (c).
\end{lem}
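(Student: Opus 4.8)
The plan is to reduce (\ref{feedback-memory}) to a pathwise random integral equation, since the diffusion part is merely an additive Brownian motion. First I would set $\tilde Y(t) \triangleq Y(t) - B(t)$, so that (\ref{feedback-memory}) becomes
$$
\tilde Y(t) = \int_0^t g\big(s, W_0^s, (\tilde Y + B)_0^s\big)\, ds, \qquad t \in [0, T],
$$
in which the Brownian path enters only as a fixed continuous forcing term. For each fixed realization of the continuous paths $W$ and $B$, the uniform Lipschitz condition (d) makes the map $\tilde Y \mapsto \int_0^\cdot g(s, W_0^s, (\tilde Y + B)_0^s)\,ds$ a contraction on $C[0, \tau]$ for $\tau$ small enough, so Banach's fixed point theorem yields a unique local solution; the uniform linear growth condition (e) rules out finite-time blow-up and lets me patch these local solutions into a unique solution on all of $[0, T]$. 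Since each Picard iterate depends on $(W, B)$ only through their restrictions to $[0, t]$, the resulting $\tilde Y$, and hence $Y = \tilde Y + B$, is adapted, which gives a unique strong solution. (Alternatively, existence and uniqueness follow directly from the standard theory of functional stochastic differential equations under Lipschitz and linear growth hypotheses, as in~\cite{li01,mao97}.)

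For the estimate (\ref{exponential-finiteness-1}) I would argue pathwise. Writing $m(t) \triangleq \|Y_0^t\| = \sup_{r \le t}|Y(r)|$ and inserting the linear growth bound (e) into (\ref{feedback-memory}) gives, for every $r \le t$,
$$
|Y(r)| \le \|B_0^T\| + L\int_0^r \big(1 + \|W_0^T\| + m(s)\big)\, ds \le \|B_0^T\| + LT\big(1 + \|W_0^T\|\big) + L\int_0^t m(s)\, ds ,
$$
so that $m(t) \le \|B_0^T\| + LT(1 + \|W_0^T\|) + L\int_0^t m(s)\,ds$. Gronwall's inequality then yields the pathwise linear bound
$$
\|Y_0^T\| \le e^{LT}\big(\|B_0^T\| + LT(1 + \|W_0^T\|)\big) \le C_1 \|B_0^T\| + C_2 \|W_0^T\| + C_3
$$
for explicit constants $C_1, C_2, C_3$ depending only on $L$ and $T$. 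Squaring and using $(a+b+c)^2 \le 3(a^2+b^2+c^2)$ reduces (\ref{exponential-finiteness-1}) to the joint exponential integrability of $\|B_0^T\|^2$ and $\|W_0^T\|^2$.

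To close this I would use two inputs. The supremum $\|B_0^T\| = \sup_{r \le T}|B(r)|$ has Gaussian tails (by the reflection principle, or Fernique's theorem), so $\E[e^{\lambda \|B_0^T\|^2}] < \infty$ for all $\lambda < 1/(2T)$, while the first part of Condition (f) supplies $\E[e^{\varepsilon_W \|W_0^T\|^2}] < \infty$ for some $\varepsilon_W > 0$. Because $W$ and $B$ need not be independent, I would separate the two contributions with Hölder's inequality and then choose $\varepsilon > 0$ small enough that the resulting exponents fall below $1/(2T)$ and $\varepsilon_W$, respectively; this proves (\ref{exponential-finiteness-1}). Finally, (\ref{exponential-finiteness-1}) forces finite moments of all orders for $\|Y_0^T\|$, so Condition (c) follows from $\int_0^T \E|g|\,dt \le LT(1 + \E\|W_0^T\| + \E\|Y_0^T\|) < \infty$, and Condition (b) follows from the a.s. bounds $\int_0^T g^2\,dt \le 3L^2T(1 + \|W_0^T\|^2 + \|Y_0^T\|^2) < \infty$ and its analogue with $B_0^t$ in place of $Y_0^t$.

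I expect the main obstacle to be this exponential integrability step rather than existence and uniqueness: one must control $\E[e^{\varepsilon\|Y_0^T\|^2}]$ in spite of the possible statistical dependence between the message/input path $W$ and the driving noise $B$, which is precisely where the Hölder splitting, the sharp Gaussian constant $1/(2T)$ for the Brownian supremum, and the exponential control (f) on $W$ all become essential.
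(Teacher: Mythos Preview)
Your proposal is correct and follows essentially the same route as the paper: linear growth plus Gronwall gives a pathwise bound $\|Y_0^T\|\le C_1\|B_0^T\|+C_2\|W_0^T\|+C_3$, after which exponential integrability of $\|B_0^T\|^2$ and $\|W_0^T\|^2$ yields (\ref{exponential-finiteness-1}). The only noteworthy difference is that the paper factors $\E[e^{c\|W_0^T\|^2+c\|B_0^T\|^2}]=\E[e^{c\|W_0^T\|^2}]\,\E[e^{c\|B_0^T\|^2}]$ using the (implicit) independence of $W$ and $B$ and then controls the Brownian supremum via Doob's submartingale inequality, whereas you use H\"older's inequality and the reflection principle/Fernique; your version is slightly more robust but otherwise equivalent.
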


Roughly speaking, the following sampling theorem states that if the stepsizes of the samplings tend to $0$, the mutual information of the channel (\ref{after-sampling-with-feedback}) will converge to that of the channel (\ref{feedback-memory}). Note that in this theorem, we do not need the assumption that ``$\Delta_n \subset \Delta_{n+1}$ for all $n$'', which is required in Theorem~\ref{sampling-theorem-1}.

\begin{thm} \label{sampling-theorem-2}
Assume Conditions (d)-(f). For any sequence $\{\Delta_n\}$ with $\delta_{\Delta_n} \to 0$ as $n$ tends to infinity, we have
$$
\lim_{n \to \infty} I(W_0^T; Y(\Delta_n))=I(W_0^T; Y_0^T).
$$
\end{thm}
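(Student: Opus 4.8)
The plan is to reduce to the already-established nested case (Theorem~\ref{sampling-theorem-1}) and then remove the nesting hypothesis by a soft lower-semicontinuity argument. First I would record that, by Lemma~\ref{improved-liptser-1}, Conditions (d)-(f) imply Conditions (a)-(c); in particular $\{Y(t)\}$ is the unique strong solution, it has continuous sample paths, and $\E[e^{\varepsilon\|Y_0^T\|^2}]<\infty$. Continuity of paths is the crucial structural fact, since it guarantees $Y(s_n)\to Y(t)$ almost surely whenever $s_n\to t$. The easy half is the upper bound: each $Y(\Delta_n)$ is a fixed finite-dimensional projection, hence a measurable function, of the whole path $Y_0^T$, so the data-processing inequality gives $I(W_0^T;Y(\Delta_n))\le I(W_0^T;Y_0^T)$ for every $n$, and therefore $\limsup_n I(W_0^T;Y(\Delta_n))\le I(W_0^T;Y_0^T)$.

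The substance is the matching lower bound. I would fix once and for all an auxiliary \emph{nested} sequence $\{\Delta_k'\}$, say the dyadic partitions $\Delta_k'=\{jT/2^k: 0\le j\le 2^k\}$, which satisfy $\Delta_k'\subset\Delta_{k+1}'$ and $\delta_{\Delta_k'}\to 0$; since Conditions (a)-(c) hold, Theorem~\ref{sampling-theorem-1} applies to this sequence and yields $I(W_0^T;Y(\Delta_k'))\to I(W_0^T;Y_0^T)$. Now fix $k$. For each sampling time $t\in\Delta_k'$ let $\pi_n(t)$ be the nearest point of $\Delta_n$, and set $\tilde Y_k^{(n)}\triangleq (Y(\pi_n(t)))_{t\in\Delta_k'}$; this merely selects certain coordinates of $Y(\Delta_n)$, so data processing gives $I(W_0^T;Y(\Delta_n))\ge I(W_0^T;\tilde Y_k^{(n)})$. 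Because $\delta_{\Delta_n}\to 0$ forces $\pi_n(t)\to t$, path continuity yields $\tilde Y_k^{(n)}\to Y(\Delta_k')$ almost surely, hence the joint laws of $(W_0^T,\tilde Y_k^{(n)})$ on $C[0,T]\times\mathbb{R}^{|\Delta_k'|}$ converge weakly to that of $(W_0^T,Y(\Delta_k'))$. Invoking the lower semicontinuity of mutual information (equivalently, of relative entropy) under weak convergence of joint distributions gives $\liminf_n I(W_0^T;\tilde Y_k^{(n)})\ge I(W_0^T;Y(\Delta_k'))$, and therefore $\liminf_n I(W_0^T;Y(\Delta_n))\ge I(W_0^T;Y(\Delta_k'))$. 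Letting $k\to\infty$ and using the convergence of the reference sequence gives $\liminf_n I(W_0^T;Y(\Delta_n))\ge I(W_0^T;Y_0^T)$, which combined with the upper bound closes the argument; note the scheme degrades gracefully to the case $I(W_0^T;Y_0^T)=\infty$.

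The step I expect to be the main obstacle is the rigorous invocation of lower semicontinuity in this infinite-dimensional, possibly infinite-valued setting: one must confirm that the finite-dimensional mutual informations are bona fide quantities in the sense of (\ref{definition-mutual-information}) and that the weak convergence of the $C[0,T]\times\mathbb{R}^{|\Delta_k'|}$-valued pairs indeed holds, which rests on the almost-sure convergence above and hence on the path continuity supplied by Lemma~\ref{improved-liptser-1}. If one preferred a self-contained, quantitative proof that does not appeal to semicontinuity, the difficulty would instead concentrate in bounding the conditional mutual information $I(W_0^T;Y(\Delta_k')\mid Y(\Delta_n))$ and showing it vanishes as $\delta_{\Delta_n}\to 0$; it is precisely here that the uniform Lipschitz and linear-growth conditions (d)-(e) together with the exponential moment bound of Lemma~\ref{improved-liptser-1} become indispensable, since they are exactly what is needed to control the relevant Girsanov likelihood ratios and to secure the uniform integrability required to interchange limit and expectation.
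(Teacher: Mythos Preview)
Your argument is correct and takes a genuinely different route from the paper's own proof. The paper never reduces to Theorem~\ref{sampling-theorem-1}; instead it works directly with the Girsanov likelihood ratios, writing $I(W_0^T;Y(\Delta_n))=-\E[F_n]+\E[G_n]$ where $F_n,G_n$ are built from conditional expectations of $\exp\bigl(-\int_0^T g\,dY+\tfrac12\int_0^T g^2\,ds\bigr)$, and then painstakingly shows (via piecewise-linear/flat approximants $\bar Y_{\Delta_n},\bar g_{\Delta_n}$) that these converge to their continuous-time counterparts. Conditions (d)--(f) are used in an essential, quantitative way there: the Lipschitz and linear-growth bounds control $|g-\bar g_{\Delta_n}|$ pointwise, and the exponential moment in Lemma~\ref{improved-liptser-1} together with Condition~(f) secures the uniform integrability needed for the generalized dominated convergence theorem. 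A truncation by stopping times $\tau_k$ then removes an interim boundedness assumption.

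By contrast, your scheme is purely soft: data processing for the upper bound, and for the lower bound a nearest-point selection $\tilde Y_k^{(n)}$ from $Y(\Delta_n)$, almost-sure convergence $\tilde Y_k^{(n)}\to Y(\Delta_k')$ by path continuity, and weak lower semicontinuity of mutual information on the Polish space $C[0,T]\times\mathbb{R}^{|\Delta_k'|}$, finally feeding into Theorem~\ref{sampling-theorem-1} along the nested dyadic sequence. Two points worth noting. First, your approach actually uses Conditions (d)--(f) only through Lemma~\ref{improved-liptser-1} to obtain (a)--(c); since path continuity of $Y$ already follows from the form of~(\ref{feedback-memory}) under Condition~(a) alone, your argument in fact delivers the conclusion of Theorem~\ref{sampling-theorem-2} under the weaker hypotheses (a)--(c), which is strictly more than the paper states. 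Second, what the paper's longer computation buys is machinery: the explicit control of $F_n,G_n$ and the piecewise approximants is exactly what is recycled in Appendix~\ref{proof-approximation-theorem-1} to prove the approximation theorems, where there is no pre-existing nested analogue to lean on; and the estimates there are in principle quantitative (one sees $O(\delta_{\Delta_n}^{1/2})$ rates appearing), whereas your semicontinuity step is inherently non-quantitative.
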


\begin{proof}
The proof is rather technical and lengthy, and thereby postponed to Appendix~\ref{proof-sampling-theorem-2}. We note that, as detailed in Remark~\ref{mmse-1}, the arguments in the proof can be adapted to yield a sampling theorem in estimation theory.
\end{proof}

\section{Approximation Theorems} \label{approximation-theorems}

In this section, we will establish approximation theorems for the channel (\ref{feedback-memory}), which naturally connect such channels with their discrete-time versions obtained by approximation. As elaborated in later sections, the approximation theorem will underpin the approximation approach that will be introduced in Section~\ref{infinite-bandwidth-revisited}.

An application of the Euler-Maruyama approximation~\cite{kl92} with respect to $\Delta_n$ to (\ref{feedback-memory}) will yield a discrete-time sequence $\{Y^{(n)}(t_{n, i}): i=0, 1, \dots, n\}$ and a continuous-time process $\{Y^{(n)}(t): t \in [0, T]\}$, a linear interpolation of $\{Y(t_{n, i})\}$, as follows: Initializing with $Y^{(n)}(0)=0$, we recursively compute, for each $i=0, 1, \dots, n-1$,
\begin{equation} \label{Euler-Maruyama-Sequence}
Y^{(n)}(t_{n, i+1})=Y^{(n)}(t_{n, i})+ \int_{t_{n, i}}^{t_{n, i+1}} g(s, W_0^{t_{n, i}}, Y_0^{(n), t_{n, i}}) ds + B(t_{n, i+1})-B(t_{n, i}),
\end{equation}
\begin{equation} \label{linear-interpolation}
Y^{(n)}(t)=Y^{(n)}(t_{n, i})+\frac{t-t_{n, i}}{t_{n, i+1}-t_{n, i}} (Y^{(n)}(t_{n, i+1})-Y^{(n)}(t_{n, i})), \quad t_{n, i} \leq t \leq t_{n, i+1}.
\end{equation}

We are now ready to prove the following theorem:
\begin{thm} \label{approximation-theorem-1}
Assume Conditions (d)-(f). Then, we have
$$
\lim_{n \to \infty} I(W_0^T; Y^{(n)}(\Delta_n))=I(W_0^T; Y_0^T),
$$
where $Y^{(n)}(\Delta_n) \triangleq \{Y^{(n)}(t_{n, 0}), Y^{(n)}(t_{n, 1}), \ldots, Y^{(n)}(t_{n, n})\}$.
\end{thm}

\begin{proof}
The proof is rather technical and lengthy, and thereby postponed to Appendix~\ref{proof-approximation-theorem-1}. We note that, as detailed in Remark~\ref{mmse-2}, the arguments in the proof can be adapted to yield an approximation theorem in estimation theory.
\end{proof}

For any $\{\Delta_n\}$, let $W^{(n)}(t)$ denote the piecewise linear version of $W_0^T$ with respect to $\Delta_n$; more precisely, for any $i=0, 1, \dots, n$, $W^{(n)}(t_{n,i})=W(t_{n,i})$, and for any $t_{n, i-1} < s < t_{n, i}$ with $s=\lambda t_{n, i-1}+(1-\lambda) t_{n, i}$ where $0 < \lambda < 1$, $W^{(n)}(s)=\lambda W(t_{n, i-1})+(1-\lambda) W(t_{n, i})$. The following modified Euler-Maruyama approximation with respect to $\Delta_n$ applied to the channel (\ref{feedback-memory}) yields a discrete-time sequences $\{Y^{(n)}(t_{n, i}): i=0, 1, \dots, n\}$ and a continuous-time processes $\{Y^{(n)}(t): t \in [0, T]\}$ as follows: Initializing with $Y^{(n)}(0)=0$, we recursively compute, for each $i=0, 1, \dots, n-1$,
\begin{equation} \label{Modified-Euler-Maruyama-Sequence}
Y^{(n)}(t_{n, i+1})=Y^{(n)}(t_{n, i})+ \int_{t_{n, i}}^{t_{n, i+1}} g(s, W_0^{(n), t_{n, i}}, Y_0^{(n), t_{n, i}}) ds + B(t_{n, i+1})-B(t_{n, i}),
\end{equation}
\begin{equation} \label{modified-linear-interpolation}
Y^{(n)}(t)=Y^{(n)}(t_{n, i})+\frac{t-t_{n, i}}{t_{n, i+1}-t_{n, i}} (Y^{(n)}(t_{n, i+1})-Y^{(n)}(t_{n, i})), \quad t_{n, i} \leq t \leq t_{n, i+1}.
\end{equation}
Now, using a parallel argument in the proof of Theorem~\ref{approximation-theorem-1}, we have the following approximation theorem.
\begin{thm} \label{approximation-theorem-2}
Assume Conditions (d)-(f). Then, we have
$$
\lim_{n \to \infty} I(W^{(n)}(\Delta_n); Y^{(n)}(\Delta_n))=I(W_0^T; Y_0^T).
$$
\end{thm}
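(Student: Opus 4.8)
The plan is to mimic the proof of Theorem~\ref{approximation-theorem-1} almost verbatim, the only structural change being that in the recursion (\ref{Modified-Euler-Maruyama-Sequence}) the drift is evaluated along the piecewise-linear path $W^{(n)}$ rather than along $W$ itself, and that the input is now observed as $W^{(n)}(\Delta_n)$ rather than as $W_0^T$. First I would note that, since $W^{(n)}$ is the linear interpolation of its own samples and $Y^{(n)}$ is the linear interpolation (\ref{modified-linear-interpolation}) of its samples, each of $W^{(n)}(\Delta_n)$ and $Y^{(n)}(\Delta_n)$ generates the same $\sigma$-algebra as the corresponding full path; hence
$$
I(W^{(n)}(\Delta_n); Y^{(n)}(\Delta_n)) = I(W^{(n)}; Y^{(n)}),
$$
and it suffices to show that the right-hand side converges to $I(W_0^T; Y_0^T)$.

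The argument then rests on two strong-convergence facts. First I would show that $W^{(n)} \to W$ uniformly on $[0,T]$ with the quantitative control $\E[\|W^{(n)}-W\|^2] \to 0$; this is exactly where the third part of Condition~(f) enters, since the interpolation error of a continuous path is dominated by its modulus of continuity $\sup_{|s-t| \le \delta_{\Delta_n}}(W(s)-W(t))^2$, whose expectation is $O(\delta_{\Delta_n})$. Second, feeding this into (\ref{Modified-Euler-Maruyama-Sequence}) and running the same Gronwall-type estimate as in Theorem~\ref{approximation-theorem-1}, the uniform Lipschitz Condition~(d) lets me absorb the extra discrepancy between $g(s, W^{(n)}, \cdot)$ and $g(s, W, \cdot)$, which is bounded by $L\|W^{(n)}-W\|$ and therefore vanishes in the limit; this yields $\E[\|Y^{(n)}-Y\|^2] \to 0$, with the exponential moment bound (\ref{exponential-finiteness-1}) of Lemma~\ref{improved-liptser-1} supplying the requisite integrability. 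In effect, every estimate in the proof of Theorem~\ref{approximation-theorem-1} carries over after replacing $W_0^{t_{n,i}}$ by $W_0^{(n), t_{n,i}}$, the new terms being uniformly controlled by $\|W^{(n)}-W\|$.

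With the two limits in hand I would express both $I(W^{(n)}; Y^{(n)})$ and $I(W_0^T; Y_0^T)$ through their Girsanov likelihood ratios, reducing the claim to the convergence of the log-likelihood ratios $\log R_n \to \log R$ in $L^1$. The hard part will be this passage to the limit at the level of information rather than of the processes, since both arguments of the mutual information vary simultaneously. Convergence in probability of $\log R_n$ follows from the already-established path convergences of $W^{(n)}$ and $Y^{(n)}$, which carry over to the stochastic integral $\int_0^T g(s, W^{(n)}, Y^{(n)})\,dB(s)$ and the quadratic term $\int_0^T g^2\,ds$ appearing in $\log R_n$; the genuine obstacle is the uniform integrability of $\{\log R_n\}$, which I would secure from the exponential integrability of $\|W_0^T\|$ and of the modulus of continuity in Condition~(f), combined with (\ref{exponential-finiteness-1}) and the linear-growth Condition~(e). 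These bound the relevant exponential martingales uniformly in $n$, upgrading convergence in probability of the integrands to convergence of expectations; as a consistency check, the lower semicontinuity of mutual information under the in-probability convergence $(W^{(n)}, Y^{(n)}) \to (W, Y)$ independently delivers $\liminf_n I(W^{(n)}; Y^{(n)}) \ge I(W_0^T; Y_0^T)$, matching the limit obtained from the $L^1$ convergence.
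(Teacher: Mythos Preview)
Your proposal is correct and matches the paper's own treatment: the paper simply states that Theorem~\ref{approximation-theorem-2} follows from ``a parallel argument in the proof of Theorem~\ref{approximation-theorem-1}'', and what you outline is precisely that parallel argument, with the additional $\|W^{(n)}-W\|$ discrepancy absorbed via Conditions~(d) and~(f). One small imprecision: $\log R_n$ is a finite sum arising from the discrete conditional density (as in (\ref{YgivenM})--(\ref{justY})), not a genuine stochastic integral, so the convergence step is really the analogue of (\ref{first-half})--(\ref{second-half}) with $W_0^{t_{n,i-1}}$ replaced by $W_0^{(n),t_{n,i-1}}$; but this does not affect the argument.
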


\begin{rem}
When the channel (\ref{feedback-memory}) is interpreted as a feedback channel, both $W^{(n)}$ and $W$ are precisely $M$. When the channel (\ref{feedback-memory}) is interpreted as a memory channel, Theorem~\ref{approximation-theorem-2} states that the mutual information between its input and output is the limit of that of its approximated input and output (in the sense of the above-mentioned modified Euler-Maruyama approximation).
\end{rem}

Other variants of the Euler-Maruyama approximation can also be applied to the channel to yield variants of the approximation theorem. For instance, under Conditions (d)-(f), for the following variant of the Euler-Maruyama approximation,
\begin{equation}  \label{variant-1}
Y^{(n)}(t_{n, i+1})=Y^{(n)}(t_{n, i})+ \int_{t_{n, i}}^{t_{n, i+1}} g(t_{n, i}, W_0^{t_{n, i}}, Y_0^{(n), t_{n, i}}) ds + B(t_{n, i+1})-B(t_{n, i}),
\end{equation}
a parallel argument as in the proof of Theorem~\ref{approximation-theorem-1} will give the following variant of Theorem~\ref{approximation-theorem-1}:
\begin{thm} \label{approximation-theorem-3}
Assume Conditions (d)-(f). Then, we have
\begin{equation} \label{theorem-variant-1}
\lim_{n \to \infty} I(W_0^T; Y^{(n)}(\Delta_n))=I(W_0^T; Y_0^T).
\end{equation}
\end{thm}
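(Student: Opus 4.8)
The plan is to retrace the proof of Theorem~\ref{approximation-theorem-1} essentially line by line, exploiting the fact that the variant scheme (\ref{variant-1}) differs from the Euler--Maruyama scheme (\ref{Euler-Maruyama-Sequence}) only in that the time argument of the drift is also frozen at the left endpoint $t_{n,i}$ (here $X$ denotes the same drift function $g$, now read as the channel input in the sense of interpretation 1)). Concretely, over the subinterval $[t_{n,i}, t_{n,i+1}]$ the two schemes integrate $g(s, W_0^{t_{n,i}}, Y_0^{(n),t_{n,i}})$ and $g(t_{n,i}, W_0^{t_{n,i}}, Y_0^{(n),t_{n,i}})$ respectively, and by the uniform Lipschitz condition~(d) their difference is bounded pathwise by
$$
|g(s, W_0^{t_{n,i}}, Y_0^{(n),t_{n,i}}) - g(t_{n,i}, W_0^{t_{n,i}}, Y_0^{(n),t_{n,i}})| \le L|s-t_{n,i}| \le L\,\delta_{\Delta_n}.
$$
Summing the resulting per-step errors over $i$ gives a total additional drift discrepancy of order $L T \delta_{\Delta_n}$, which vanishes as $n \to \infty$ (the mesh condition $\delta_{\Delta_n} \to 0$ being in force, as throughout this section). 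This is the only new phenomenon the variant introduces.

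First I would re-establish, for the variant process, the two quantitative estimates underpinning Theorem~\ref{approximation-theorem-1}: the uniform exponential integrability $\sup_n \E[e^{\varepsilon \|Y_0^{(n),T}\|^2}] < \infty$ (obtained, as in Lemma~\ref{improved-liptser-1}, from the uniform linear growth condition~(e) and the regularity conditions~(f) on $W$), and the strong convergence $\E[\sup_{t \in [0,T]} |Y^{(n)}(t)-Y(t)|^2] \to 0$. Both rest on a Gronwall estimate driven by Conditions~(d) and~(e); the displayed $O(\delta_{\Delta_n})$ correction enters the recursion additively and is absorbed without affecting the limit, so these estimates transfer from Theorem~\ref{approximation-theorem-1} with only cosmetic changes.

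With these in hand, I would run the same Girsanov-based mutual-information argument. Since $Y^{(n)}$ is piecewise linear, its sample vector $Y^{(n)}(\Delta_n)$ generates the same $\sigma$-algebra as the whole interpolated path, so $I(W_0^T; Y^{(n)}(\Delta_n))$ equals the mutual information of that path; Conditions (d)--(f) guarantee, via Girsanov's theorem, the existence of the relevant Radon--Nikodym derivative and a Duncan-type representation of this quantity, and the strong convergence together with the uniform exponential moments justifies passing the limit inside the expectation to obtain (\ref{theorem-variant-1}).

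The hard part is not the Lipschitz bound above, which is routine, but confirming that the uniform integrability needed to exchange limit and expectation in the mutual-information formula survives the modification. Because the correction term is bounded pathwise by $L\delta_{\Delta_n}$ independently of the randomness, this verification is essentially mechanical, which is exactly what licenses the claim that (\ref{theorem-variant-1}) follows by a parallel argument as in the proof of Theorem~\ref{approximation-theorem-1}.
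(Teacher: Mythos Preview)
Your proposal is correct and matches the paper's own treatment: the paper states only that (\ref{theorem-variant-1}) follows by ``a parallel argument as in the proof of Theorem~\ref{approximation-theorem-1}'', and you have accurately unpacked what that parallel argument entails---namely, that freezing the time argument of $g$ at $t_{n,i}$ introduces only an $O(\delta_{\Delta_n})$ perturbation controlled pathwise by Condition~(d), so that the analogues of Lemmas~\ref{improved-liptser-2} and~\ref{Y-Y} and the density-comparison argument of Appendix~\ref{proof-approximation-theorem-1} go through with cosmetic changes.
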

\noindent Moreover, for
\begin{equation}  \label{variant-2}
Y^{(n)}(t_{n, i+1})=Y^{(n)}(t_{n, i})+ \int_{t_{n, i}}^{t_{n, i+1}} g(t_{n, i}, W_0^{(n), t_{n, i}}, Y_0^{(n), t_{n, i}}) ds + B(t_{n, i+1})-B(t_{n, i}),
\end{equation}
we have the following variant of Theorem~\ref{approximation-theorem-2}:
\begin{thm} \label{approximation-theorem-4}
Assume Conditions (d)-(f). Then, we have
\begin{equation} \label{theorem-variant-2}
\lim_{n \to \infty} I(W^{(n)}(\Delta_n); Y^{(n)}(\Delta_n))=I(W_0^T; Y_0^T).
\end{equation}
\end{thm}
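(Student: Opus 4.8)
The plan is to follow the blueprint of the proof of Theorem~\ref{approximation-theorem-2}, since the scheme (\ref{variant-2}) differs from the modified Euler--Maruyama scheme (\ref{Modified-Euler-Maruyama-Sequence}) in exactly one respect: the time slot of the drift is frozen at the left endpoint $t_{n,i}$ instead of running over $s\in[t_{n,i},t_{n,i+1}]$ (indeed (\ref{variant-2}) is the standard left-endpoint Euler--Maruyama scheme, whereas (\ref{Modified-Euler-Maruyama-Sequence}) is a mild variant). Writing $\tilde Y^{(n)}$ for the process of Theorem~\ref{approximation-theorem-2} and $Y^{(n)}$ for the process defined by (\ref{variant-2}), both driven by the same Brownian path and the same piecewise-linear $W^{(n)}$, I would show that this single change is asymptotically harmless and then borrow the convergence already established in Theorem~\ref{approximation-theorem-2}.

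First I would record the change-of-measure representation of mutual information used for Theorem~\ref{approximation-theorem-1}. Conditionally on the source, each of the relevant output laws is absolutely continuous with respect to the Wiener measure $\mu_B$ via Girsanov's theorem, with density equal to the stochastic exponential of the corresponding drift integrated against $dB$. Feeding these densities into the definition (\ref{definition-mutual-information}) rewrites each mutual information as the expectation of a $\log$-likelihood functional of the drift, reducing the theorem to two ingredients: convergence of the drifts in $L^2(\mathbb{P}\times dt)$ and enough uniform integrability to pass the limit through the expectation. Since $W^{(n)}(\Delta_n)=W(\Delta_n)$, the source side is identical to that of Theorem~\ref{approximation-theorem-2}, so only the output side needs to be tracked.

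For the drift convergence I would first control the frozen-time perturbation. By the uniform Lipschitz condition~(d), replacing $g(s,\cdot,\cdot)$ by $g(t_{n,i},\cdot,\cdot)$ on $[t_{n,i},t_{n,i+1}]$ alters the integrated drift by at most $L\,\delta_{\Delta_n}(t_{n,i+1}-t_{n,i})$, which sums to $O(\delta_{\Delta_n})$; a discrete Gronwall argument, again using~(d) to absorb the discrepancy between the two processes' own values, then bounds $\max_i|Y^{(n)}(t_{n,i})-\tilde Y^{(n)}(t_{n,i})|$ by a vanishing multiple of $\delta_{\Delta_n}$. Combined with the strong convergence $\tilde Y^{(n)}\to Y$ underlying Theorem~\ref{approximation-theorem-2} (itself a consequence of standard Euler--Maruyama theory under Conditions~(d)--(f) together with the exponential moment bound (\ref{exponential-finiteness-1}) of Lemma~\ref{improved-liptser-1}), this yields $\E[\sup_{t\in[0,T]}|Y^{(n)}(t)-Y(t)|^2]\to 0$ and hence $L^2(\mathbb{P}\times dt)$ convergence of the drifts.

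The step I expect to be the main obstacle --- and the one genuinely inherited from the proof of Theorem~\ref{approximation-theorem-2} that we invoke, rather than created by the frozen time slot --- is the uniform integrability needed to interchange the limit with the expectation in the $\log$-likelihood functional, whose stochastic-integral term $\int_0^T g\,dB$ and quadratic term $\tfrac12\int_0^T g^2\,dt$ are not bounded. The remedy is to establish uniform integrability of the family of densities and of the log-likelihoods by a de la Vall\'ee--Poussin argument powered precisely by the exponential estimate (\ref{exponential-finiteness-1}) and the linear growth condition~(e), which jointly furnish uniform exponential moments along the approximating sequence. Once this is in place, the drift convergence from the previous step and the continuity of the functional deliver $\lim_{n\to\infty} I(W^{(n)}(\Delta_n);Y^{(n)}(\Delta_n))=I(W_0^T;Y_0^T)$, completing the proof.
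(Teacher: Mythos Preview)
Your proposal is correct and matches the paper's own proof, which simply asserts that a parallel argument to that of Theorem~\ref{approximation-theorem-1} (adapted as in Theorem~\ref{approximation-theorem-2} to accommodate the piecewise-linear $W^{(n)}$) yields the result. Your sketch fills in exactly what such a parallel argument requires --- explicit conditional-density computation, $L^2$ drift convergence via the Lipschitz condition~(d), and uniform integrability via the exponential moment bound of Lemma~\ref{improved-liptser-1} --- with the harmless organizational twist of routing the strong convergence through the process of Theorem~\ref{approximation-theorem-2} rather than comparing directly to $Y$ as in Lemma~\ref{Y-Y}.
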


Regarding the approximation theorem and its variants, we make the following several remarks.

\begin{rem}  \label{directed-information}
Continuous-time directed information has been defined in~\cite{we13} for continuous-time white Gaussian channels with positively delayed feedback. In this remark, we show that our approximation theorem can be used to give an alternative definition of continuous-time directed information, even for the case that the feedback is instantaneous.

Consider the following continuous-time Gaussian feedback channel:
\begin{equation} \label{g-f}
Y(t)=\int_0^t X(s, M, Y_0^{s}) ds + B(t), \quad t \in [0, T].
\end{equation}
For any $\Delta_n$, we define $\tilde{X}^{(n)}(\cdot)$ as follows: for any $t$ with $t_{n, i} \leq t < t_{n, i+1}$,
$$
\tilde{X}^{(n)}(t)=\sum_{j=0}^{i-1} \int_{t_{n, j}}^{t_{n, j+1}} X(s, M, Y_0^{(n), t_{n, j}}) ds + \int_{t_{n, i}}^{t} X(s, M, Y_0^{(n), t_{n, i}}) ds,
$$
Writing $\tilde{X}(t, M, Y_0^{(n), t})$ as $\tilde{X}^{(n)}(t)$ for simplicity, (\ref{Euler-Maruyama-Sequence}) can be rewritten as
$$
Y^{(n)}(t_{n, i+1})=Y^{(n)}(t_{n, i})+ \tilde{X}^{(n)}(t_{n, i+1})-\tilde{X}^{(n)}(t_{n, i}) + B(t_{n, i+1})-B(t_{n, i}),
$$
for which it can be readily checked that
\begin{equation} \label{input-message}
I(\tilde{X}^{(n)}(\Delta_n) \to Y^{(n)}(\Delta_n))=I(M; Y^{(n)}(\Delta_n)).
\end{equation}
Theorem~\ref{approximation-theorem-1} and the above observation can be used to define continuous-time directed mutual information. To be more precise, the continuous-time directed information from $X_0^T$ to $Y_0^T$ of the channel (\ref{g-f}) can be defined as
\begin{equation}  \label{our-def}
I(X_0^T \to Y_0^T) \triangleq \lim_{n \to \infty} I(\tilde{X}^{(n)}(\Delta_n) \to Y^{(n)}(\Delta_n)).
\end{equation}

Consider the following continuous-time Gaussian channel with possibly delayed feedback:
\begin{equation}
Y(t)=\int_0^t X(s, M, Y_0^{s-D}) ds + B(t), \quad t \in [0, T],
\end{equation}
where $D \geq 0$ denotes the delay of the feedback. In~\cite{we13}, the notion of continuous-time directed information from $X_0^T$ to $Y_0^T$ is defined as follows:
\begin{equation}  \label{Tsachy-Definition}
I_D(X_0^T \to Y_0^T) = \inf_{\Delta_n} \sum_{i=1}^n I(X_{t_{n, 0}}^{t_{n, i}}; Y_{t_{n, i-1}}^{t_{n, i}}|Y_{t_{n, 0}}^{t_{n, i-1}}).
\end{equation}
It is proven that for the case $D > 0$, using this notion, a connection between information theory and estimation theory can be established as follows:
\begin{equation}  \label{Tsachy-Claim}
I_D(X_0^T \to Y_0^T) = \frac{1}{2} \int_0^T \E[(X(t)-\E[X(t)|Y_0^t])^2] dt.
\end{equation}
On the other hand though, it is easy to see that for the case $D=0$, i.e., there is no delay in the feedback as in (\ref{g-f}), the definition in (\ref{Tsachy-Definition}) and the equality
as in (\ref{Tsachy-Claim}) may run into some problems: Consider the extreme scenario and choose $X(t)=-Y(t)$ for any feasible $t$, then clearly the right hand side of (\ref{Tsachy-Claim}) should be equal to $0$.  On the other hand though, for the left hand side, each small interval in (\ref{Tsachy-Definition}) will yield
$$
I_{D=0}(X_{t_{n, 0}}^{t_{n, i+1}}; Y_{t_{n, i}}^{t_{n, i+1}}|Y_{t_{n, 0}}^{t_{n, i}})=I_{D=0}(Y_{t_{n, i}}^{t_{n, i+1}}; Y_{t_{n, i}}^{t_{n, i+1}}|Y_{t_{n, 0}}^{t_{n, i}})=I_{D=0}(Y_{t_{n, i}}^{t_{n, i+1}}; Y_{t_{n, i}}^{t_{n, i+1}}|Y(t_{n, i})),
$$
where for the last equality, we have used the fact that under the assumption that $X(t)=-Y(t)$, $\{Y(t)\}$ is an Ornstein-Uhlenbeck process, which is a Gaussian Markov process. Noting that given $Y(t_i)=y(t_i)$, the Radon-Nikodym derivative $d \mu_{Y_{t_i}^{t_{i+1}} Y_{t_i}^{t_{i+1}}}/d \mu_{Y_{t_i}^{t_{i+1}}} \times \mu_{Y_{t_i}^{t_{i+1}}}$ does not exist, we conclude, by referring to the definition in (\ref{definition-mutual-information}), that
$$
I_{D=0}(Y_{t_{n, i}}^{t_{n, i+1}}; Y_{t_{n, i}}^{t_{n, i+1}}|Y(t_{n, i})) = \infty, \mbox{ and thereby, } I_{D=0}(X_{t_{n, 0}}^{t_{n, i+1}}; Y_{t_{n, i}}^{t_{n, i+1}}|Y_{t_{n, 0}}^{t_{n, i}})=\infty,
$$
which further implies that, $I_{D=0}(X_0^T \to Y_0^T)$, the left-hand side of (\ref{Tsachy-Claim}) at $D=0$ is infinite, a contradiction. On the other hand, be it the case $D > 0$ or $D=0$, with the definition in (\ref{our-def}), Theorem~\ref{approximation-theorem-1} however promises:
$$
I(X_0^T \to Y_0^T) = I(M_0^T; Y_0^T)=\frac{1}{2} \int_0^T \E[(X(t)-\E[X(t)|Y_0^t])^2] dt.
$$
\end{rem}

\begin{rem} \label{sampling-approximation-theorems}
When there is no feedback or memeory, Theorem~\ref{approximation-theorem-1} boils down to Theorem~\ref{sampling-theorem-2}: obviously we will have for any feasible $i$
$$
Y^{(n)}(t_{n, i})=Y(t_{n, i}),
$$
which means that Theorem~\ref{approximation-theorem-1} actually states
$$
\lim_{n \to \infty} I(W^T_0; Y(\Delta_n))=I(W_0^T; Y_0^T),
$$
which is precisely the conclusion of Theorem~\ref{sampling-theorem-2}. And moreover, by Remark~\ref{directed-information}, we also have
$$
\lim_{n \to \infty} I(\tilde{X}^{(n)}(\Delta_n); Y(\Delta_n))=\lim_{n \to \infty} I(\tilde{X}^{(n)}(\Delta_n) \to Y(\Delta_n))=I(W_0^T; Y_0^T).
$$
\end{rem}

\begin{rem}  \label{from-continuous-to-discrete}
In this remark, we briefly discuss the possible applications of our sampling and approximation theorems, both of which we believe will important roles in the long run for further developing continuous-time information theory, particularly for scenarios where feedback and memory are present.

Taking advantage of the pathwise continuity of a Brownian motion, our sampling theorems, Theorems~\ref{sampling-theorem-1} and~\ref{sampling-theorem-2}, naturally connect continuous-time Gaussian memory/feedback channels with their discrete-time counterparts, whose outputs are precisely sampled outputs of the original continuous-time Gaussian channel. In discrete time, the Shannon-McMillan-Breiman theorem provides an effective way to approximate the entropy rate of a stationary ergodic process, and numerical computation and optimization of mutual information of discrete-time channel using the Shannon-McMillan-Breiman theorem and its extensions have been extensively studied (see, e.g.,~\cite{Han-Limit-Theorem, Han-Randomized} and references therein), which suggests our sampling theorems may well serve as a bridge to capitalize on relevant results in discrete time to numerically compute and optimize the mutual information of continuous-time Gaussian channels. In short, despite numerous technical barriers that one needs to overcome, we believe that in the long run the sampling theorems can help us in terms of numerically computing the mutual information and capacity of continuous-time Gaussian channels.

By comparison, our approximation theorems, Theorems~\ref{approximation-theorem-1} and~\ref{approximation-theorem-2}, are  somewhat ``artificial'' in the sense that the outputs of the associated discrete-time channels are only approximated outputs of the original continuous-time channels. Nonetheless, as the Euler-Maruyama approximation of a continuous-time channel yields the form of a discrete-time channel typically takes, our approximation theorems allow a smooth translation from the results and ideas from the discrete-time setting to the continuous-time setting. As a result, the approximation theorems underpin the so-called approximation approach (to be introduced in Section~\ref{infinite-bandwidth-revisited}) and readily yield results for continuous-time Gaussian channels in the multi-user setting, which will be elaborated in the following sections.
\end{rem}

\section{The Approximation Approach}  \label{infinite-bandwidth-revisited}

Consider the following continuous-time white Gaussian channel with feedback
\begin{equation}  \label{i-b}
Y(t)=\int_0^t X(s, M, Y_0^s) ds+ B(t), \quad t \geq 0,
\end{equation}
satisfying the power constraint: there exists $P > 0$ such that for any $T$, with probability $1$
\begin{equation}  \label{power}
\frac{1}{T} \int_0^T X^2(s, M, Y_0^s) ds \leq P.
\end{equation}
As mentioned in Section~\ref{intro}, it is well-known that the capacity of the above channel is $P/2$ (The same result can be established under alternative power constraints; see, e.g., ~\cite{ih93}).

As elaborated in Section~\ref{intro}, when there is no feedback in the channel, i.e., the channel (\ref{i-b}) is actually equivalent to (\ref{white-Gaussian-noise-channel}), and one can ``derive'' the non-feedback capacity heuristically using the conventional sampling approach as in (\ref{white-Gaussian-noise-channel})-(\ref{infinite-bandwidth-capacity}). But this approach is unable to tackle feedback since an application of the Shannon-Nyquist sampling theorem will destroy the temporal causality.

In this section, we use our approximation theorems to give an alternative way to ``derive'' the capacity of (\ref{i-b}), which will be referred to as {\it the approximation approach} in the remainder of the paper. Compared to the sampling approach, the approximation approach can handle feedback due to the fact the Euler-Maruyama approximation preserves temporal causality. Below we briefly explain this new approach, which will be further developed and used, either heuristically or rigorously, in Section~\ref{multi-user-Gaussian-channels}, where multiple users may be involved in a communication system.

For fixed $T > 0$, consider the evenly spaced sequence $\Delta_n$ with stepsize $\delta_n = T/n$. Applying the Euler-Maruyama approximation (\ref{variant-1}) to the channel (\ref{i-b}) over the time window $[0, T]$, we obtain
\begin{equation} \label{revisited}
Y^{(n)}(t_{n, i+1})=Y^{(n)}(t_{n, i})+ \int_{t_{n, i}}^{t_{n, i+1}} X(t_{n, i}, M, Y_0^{(n), t_{n, i}}) ds + B(t_{n, i+1})-B(t_{n, i}).
\end{equation}
By Theorem~\ref{approximation-theorem-3}, we have
\begin{equation}  \label{another-heuristic}
I(M; Y_0^T) = \lim_{n \to \infty} I(M; Y^{(n)}(\Delta_n)).
\end{equation}
Our strategy is to ``establish'' the capacity for the discrete-time channel (\ref{revisited}) first, and then the capacity for the continuous-time channel (\ref{i-b}) using the ``closeness'' between the two channels, as claimed by approximation theorems.

For the converse part, we first note that
\begin{align*}
I(M; Y^{(n)}(\Delta_n)) &=\sum_{i=1}^{n} h(Y^{(n)}(t_{n, i})-Y^{(n)}(t_{n, i-1})|Y^{(n),t_{n, i-2}}_{t_{n, 0}})-\sum_{i=1}^{n} h(B(t_{n, i})-B(t_{n, i-1}))\\
& \leq \sum_{i=1}^{n} h(Y^{(n)}(t_{n, i})-Y^{(n)}(t_{n, i-1}))-\sum_{i=1}^{n} h(B(t_{n, i})-B(t_{n, i-1})).
\end{align*}
It then follows from the fact
\begin{align*}
Var(Y^{(n)}(t_{n, i})-Y^{(n)}(t_{n, i-1})) &= \E[(Y^{(n)}(t_{n, i})-Y^{(n)}(t_{n, i-1}))^2]\\
&=\E[\delta_n^2 (X^{(n)}(t_{n, i-1}))^2]+\E[(B(t_{n, i})-B(t_{n, i-1}))^2]\\
&=\E[\delta_n^2 (X^{(n)}(t_{n, i-1}))^2]+\delta_n,
\end{align*}
that
\begin{align}
\label{ineq-1} I(M; Y_{\Delta_n}^{(n)}) & \leq \frac{1}{2} \sum_{i=0}^{n} \log (1+\delta_n \EX[(X^{(n)}(t_{n, i}))^2]) \\
\label{ineq-2} & \leq \frac{1}{2} \sum_{i=0}^{n} \delta_n \EX[(X^{(n)}(t_{n, i}))^2],
\end{align}
which, by (\ref{another-heuristic}), immediately yields
\begin{equation} \label{ineq-3}
I(M; Y_0^T) \leq \frac{1}{2} \int_0^T \EX[X^2(s)] ds \leq \frac{P T}{2},
\end{equation}
which establishes the converse part.

For the availability part, note that if we assume all $X^{(n)}(t_{n, i-1})$ are independent of the Brownian motion $B$ with $\E[(X^{(n)}(t_{n, i-1}))^2]=P$, then the inequalities in (\ref{ineq-1}) and (\ref{ineq-2}) will become equalities. The part then follows from a usual random coding argument with codes generated by the distribution of $X^{(n)}$ (or more precisely, a linear interpolation of $X^{(n)}$). It is clear that as $n$ tends to infinity, the process $X^{(n)}$ behaves more and more like a white Gaussian process. This observation echoes Theorem $6.4.1$ in~\cite{ih93}, whose proof rigorously shows that an Ornstein-Uhlenbeck process that oscillates ``extremely'' fast will achieve the capacity of (\ref{i-b}).

Roughly speaking, similar to the conventional sampling approach, the above approximation approach establishes a continuous-time Gaussian feedback channel as the limit of the associated discrete-time channels as the SNR for each channel use shrink to zero proportionately (note that in the above arguments, the SNR for each channel use is $\delta_n$). In other words, we have strengthened the low SNR equivalence in (A) as follows:
\begin{quote}
{\em a continuous-time infinite-bandwidth Gaussian channel {\bf \em with feedback} is ``equivalent'' to a discrete-time Gaussian channel with feedback at low SNR.} \hfill (B)
\end{quote}
We remark however that for the purpose of deriving the capacity of (\ref{i-b}) though, the approximation approach, like the conventional sampling approach, is heuristic in nature: Theorem~\ref{approximation-theorem-1} does require Conditions (d)-(f), which are much stronger than the power constraint (\ref{power}). Nevertheless, this approach is of fundamental importance to our treatment of continuous-time Gaussian channels: as elaborated in Section~\ref{multi-user-Gaussian-channels}, not only can it channel the ideas and techniques in discrete time to rigorously establish new results in continuous time, more importantly, it can also provide insights and intuition for our rigorous treatments where we will employ established tools and develop new tools in stochastic calculus.

\section{Continuous-Time Multi-User Gaussian Channels} \label{multi-user-Gaussian-channels}

Extending Shannon's fundamental theorems on point-to-point communication channels to general networks with multiple sources and destinations, network information theory aims to establish the fundamental limits on information flows in networks and the optimal coding schemes that achieve these limits. The vast majority of researches on network information theory to date have been focusing on networks in discrete time. In a way, this phenomenon can find its source from Shannon's original treatment of continuous-time point-to-point channels, where such channels were examined through their associated discrete-time versions. This insightful viewpoint has exerted major influences on the bulk of the related literature on continuous-time Gaussian channels, oftentimes prompting a model shift from the continuous-time setting to the discrete-time one right from the beginning of a research attempt.

The primary focus of this section is to illustrate the possible applications of the approximation approach: 1) Guided by this approach, we will rigorously derive the capacity regions of families of continuous-time multi-user one-hop white Gaussian channels, including continuous-time multi-user white Gaussian multiple access channels (MACs) and broadcast channels (BCs). To deliver the rigourous proofs of our results, we will directly work within the continuous-time setting, employing established tools and developing new tools (see Theorems~\ref{independent-OUs} and~\ref{Xianming-Lemma}) in stochastic calculus to complement the approximation approach. 2) We can also rigorously apply this approach to examine how feedback affects the capacity regions of the above-mentioned channels via translations of results and techniques in discrete time. It turns out that some results can be translated from the discrete-time setting to the continuous-time setting, such as that feedback increases the capacity region of Gaussian BCs, and that feedback does not increase the capacity region of physically degraded BCs. Nevertheless, there is a seeming ``exception'': as opposed to discrete-time Gaussian MACs, feedback does not increase the capacity region some of continuous-time Gaussian MACs, which, somewhat surprisingly, can also be explained by the approximation approach as well.

Below, we summarize the results in this section. To put our results into a relevant context, we will first list some related results in discrete time, and for obvious reasons, we can only list those that are most relevant to ours.

{\bf Gaussian MACs.} When there is no feedback, the capacity region of a discrete-time memoryless MAC is relatively better understood: a single-letter characterization has been established by Ahlswede~\cite{ah73} and the capacity region of a Gaussian MAC was explicitly derived in Wyner~\cite{wy74} and Cover~\cite{co75}. On the other hand, the capacity region of MACs with feedback still demands more complete understanding, despite several decades of great effort by many authors: Cover and Leung~\cite{co81} derived an achievable region for a memoryless MAC with feedback. In~\cite{wi82}, Willems showed that Cover and Leung's region is optimal for a class of memoryless MACs with feedback where one of the inputs is a deterministic function of the output and the other input. More recently, Bross and Lapidoth~\cite{br05} improved Cover and Leung's region, and Wu {\em et al.}~\cite{wu} extended Cover and Leung's region for the case where non-causal state information is available at both senders. An interesting result has been obtained by Ozarow~\cite{oz84}, who derived the capacity region of a memoryless Gaussian MAC with two users via a modification of the Schalkwijk-Kailath scheme~\cite{sc66}; moreover, Ozarow's result showed that in general, the capacity region for a discrete memoryless MAC is increased by feedback.

In Section~\ref{MAC}, guided by the approximation approach, we first establish Lemma~\ref{independent-OUs}, a key lemma which roughly says that ``other users can be simply treated as noises'', and we then employ established tools from stochastic calculus to derive the capacity region of a continuous-time white Gaussian MAC with $m$ senders and with/without feedback. It turns out that for such a channel, feedback does not increase the capacity region, which, at first sight, may seem at odds with the aforementioned Ozarow's result and the conclusion of our approximation theorems. This however can be roughly explained by the well-known fact that ``$>$'' may become ``$=$'' when taking the limit (indeed, $a_n > b_n$ does not necessarily imply $\lim_{n \to \infty} a_n > \lim_{n \to \infty} b_n$); see Remark~\ref{approximation-heuristics} for a more detailed explanation. \hfill $\blacksquare$

{\bf Gaussian ICs.} The capacity regions of discrete-time Gaussian ICs are largely unknown except for certain special scenarios: The capacity region of Gaussian ICs with strong interference has been established in Sato~\cite{sa78}, Han and Kobayashi~\cite{ha81}. The sum-capacity of Gaussian ICs with weak interference has been simultaneously derived in~\cite{sh09, an09, mo09}. The half-bit theorem on the tightness of the Han-Kobayashi bound~\cite{ha81} was proven in~\cite{et08}. The approximation of the Gaussian IC by the $q$-ary expansion deterministic channel was first proposed by Avestimehr, Diggavi, and Tse~\cite{av11}. Outer and inner bounds on the feedback capacity region of Gaussian interference channels are established by Suh and Tse~\cite{su11}. Note that all the above-mentioned work deal with ICs with two pairs of senders and receivers. For more than two user pairs, special classes of Gaussian ICs have been examined using the scheme of interference alignment; see an extensive list of references in~\cite{el11}.

In Section~\ref{IC}, using a similar approach that we developed for continuous-time Gaussian MACs, we derive the capacity region of a continuous-time white Gaussian IC with $m$ pairs of senders and receivers and without feedback. And we also use a translated version of the argument in~\cite{su11} and the approximation approach to show that feedback does increase the capacity region of certain continuous-time white Gaussian IC.\hfill $\blacksquare$

{\bf Gaussian BCs.} The capacity regions of discrete-time Gaussian BCs without feedback are well known~\cite{co72, be73}. And it has been shown by El Gamal~\cite{el81} that feedback cannot increase the capacity region of a physically degraded Gaussian BC. On the other hand, it was shown by Ozarow and Leung~\cite{ozle84} that feedback can increase the capacity of stochastically degraded Gaussian BCs, whose capacity regions are far less understood.

In Section~\ref{BC}, we first establish a continuous-time version of entropy power inequality (Theorem~\ref{Xianming-Lemma}) and then derive the capacity region of a continuous-time Gaussian non-feedback BC with $m$ receivers. Employing the approximation approach, we use a modified argument in~\cite{el81} to show that feedback does not increase the capacity region of a physically degraded continuous-time Gaussian BC, and on the other hand, a translated version of the argument in~\cite{ozle84} to show that feedback does increase the capacity region of certain continuous-time Gaussian BC. \hfill $\blacksquare$

Here we remark that the above-mentioned capacity results for the non-feedback case (Theorems~\ref{Theorem-MAC} (for the non-feedback case) and~\ref{Theorem-BC-Without-Feedback}) are ``long known'' in the sense that they are ``predicted'' by the conventional sampling approach and their proofs follow from the usual framework. On the other hand though, explicit formulations and statements of these results and their rigorous and complete proofs, to the best of our knowledge, do not exist in the literature. The reason, we believe, is that there are a number of technical difficulties that one has to overcome to prove such results: Lemma~\ref{independent-OUs} and Lemma~\ref{Xianming-Lemma} (which is based on the I-MMSE relationship that has only been established in~\cite{gu05}) are newly developed in this work and their proofs are non-trivial.

In contrast, the approximation approach can be applied to continuous-time Gaussian feedback channels, either heuristically or rigorously. More specifically, it can be heuristically applied to ``explain'' Theorems~\ref{Theorem-MAC} (on feedback capacity) and~\ref{Theorem-BC-Without-Feedback}) and give us intuition (as elaborated in Remark~\ref{approximation-heuristics}, it helps ``predict'' the optimal channel input distribution, which has been made rigorous in Lemma~\ref{independent-OUs}), and it can also be applied to establish Theorems~\ref{Theorem-BC-With-Feedback-1} and~\ref{Theorem-BC-With-Feedback-2}.

\subsection{Gaussian MACs}\label{MAC}

Consider a continuous-time white Gaussian MAC with $m$ users, which can be characterized by
\begin{equation}  \label{Equation-MAC}
Y(t)=\int_0^t X_1(s, M_1, Y_0^s) ds+\int_0^t X_2(s, M_2, Y_0^s) ds + \cdots +\int_0^t X_m(s, M_m, Y_0^s) ds+B(t), \quad t \geq 0,
\end{equation}
where $X_i$ is the channel input from sender $i$, which depends on $M_i$, the message sent from sender $i$, which is independent of all messages from other senders, and possibly on the feedback $Y_0^{s}$, the channel output up to time $s$.

For $T, R_1, \ldots, R_m, P_1, \ldots, P_m > 0$, a $(T, (e^{T R_1}, \ldots, e^{T R_m}), (P_1, \ldots, P_m))$-code for the MAC (\ref{Equation-MAC}) consists of $m$ sets of integers $\mathcal{M}_i=\{1, 2, \ldots, e^{T R_i}\}$, the {\it message alphabet} for user $i$, $i=1, 2, \ldots, m$, and $m$ {\it encoding functions}, $X_i: \mathcal{M}_i \rightarrow C[0, T]$, which satisfy the following power constraint: for any $i=1, 2, \ldots, m$, with probability $1$,
\begin{equation}  \label{PowerConstraint-MAC}
\frac{1}{T} \int_0^T X^2_i(s, M_i, Y_0^s) ds \leq P_i,
\end{equation}
and a {\it decoding function},
$$
g: C[0, T] \rightarrow \mathcal{M}_1 \times \mathcal{M}_2 \times \cdots \times \mathcal{M}_m.
$$
The average probability of error for the above code is defined as
$$
\hspace{-1cm} P_e^{(T)}=\frac{1}{e^{T (\sum_{i=1}^m R_i)}} \sum_{(M_1, M_2, \ldots, M_m) \in \mathcal{M}_1 \times \mathcal{M}_2 \times \cdots \times \mathcal{M}_m} P\{g(Y_0^T) \neq (M_1, M_2, \ldots, M_m)~|~(M_1, M_2, \ldots, M_m) \mbox{ sent}\}.
$$
A rate tuple $(R_1, R_2, \ldots, R_m)$ is said to be {\bf achievable} for the MAC if there exists a sequence of $(T, (e^{T R_1}, \ldots, e^{T R_m}), (P_1, \ldots, P_m))$-codes with $P_e^{(T)} \rightarrow 0$ as $T \rightarrow \infty$. The {\bf capacity region} of the MAC is the closure of the set of all the achievable $(R_1, R_2, \ldots, R_m)$ rate tuples.

The following theorem, whose proof is postponed to Appendix~\ref{proof-Theorem-MAC}, gives an explicit characterization of the capacity region of (\ref{Equation-MAC}).
\begin{thm}  \label{Theorem-MAC}
Whether there is feedback or not, the capacity region of the continuous-time white Gaussian MAC (\ref{Equation-MAC}) is
$$
\{(R_1, R_2, \ldots, R_m) \in \mathbb{R}_+^m: R_i \leq P_i/2, \quad i=1, 2, \ldots, m\}.
$$
\end{thm}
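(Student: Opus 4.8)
The plan is to prove the two inclusions separately, noting at the outset that the feature to be explained is the \emph{absence of any sum-rate constraint}: the claimed region is the full rectangle $\prod_{i=1}^m [0, P_i/2]$ rather than a pentagon. This is the continuous-time, infinite-bandwidth manifestation of the low-SNR picture underlying the approximation approach: in the discrete-time Gaussian MAC obtained from (\ref{Equation-MAC}) by Euler--Maruyama discretization with stepsize $\delta_n$, the per-use SNR of user $j$ is of order $\delta_n P_j \to 0$, so after replacing $\log(1+x)$ by $x$ the sum constraint $\sum_j R_j \le \tfrac{1}{2} \sum_j P_j$ becomes \emph{exactly} the sum of the individual constraints $R_j \le P_j/2$ and is therefore never binding. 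I would argue both inclusions uniformly in the feedback and non-feedback cases, which is precisely what makes feedback irrelevant to the region.

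For the converse, I would start from Fano's inequality, which gives $T R_i \le I(M_i; Y_0^T) + o(T)$ for each $i$. Using that $M_1, \dots, M_m$ are mutually independent, one has $I(M_i; Y_0^T) \le I(M_i; Y_0^T \mid M_{-i})$, where $M_{-i}$ denotes all messages other than $M_i$. Conditioning on a realization $M_{-i} = m_{-i}$ turns the other senders' inputs $X_j(s, m_j, Y_0^s)$, $j \ne i$, into known causal functionals of the output path, so the channel seen by user $i$ is a single-user Gaussian feedback channel whose total drift is $X_i(s, M_i, Y_0^s) + \sum_{j \ne i} X_j(s, m_j, Y_0^s)$. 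Applying the fundamental mutual-information formula of Remark~\ref{directed-information} to this drift and observing that the $Y_0^t$-measurable interference term cancels inside the causal MMSE, I obtain
$$
I(M_i; Y_0^T \mid M_{-i} = m_{-i}) = \frac{1}{2} \int_0^T \E\big[(X_i(t) - \E[X_i(t) \mid Y_0^t, m_{-i}])^2 \;\big|\; m_{-i}\big]\, dt \le \frac{1}{2} \int_0^T \E[X_i^2(t) \mid m_{-i}]\, dt.
$$
Averaging over $M_{-i}$ and invoking the power constraint (\ref{PowerConstraint-MAC}) gives $I(M_i; Y_0^T) \le P_i T/2$, hence $R_i \le P_i/2$; the feedback dependence of the inputs is harmless since the formula of Remark~\ref{directed-information} already accommodates it.

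For achievability, guided by the approximation approach, I would let each sender $i$ transmit an independent, fast-oscillating Ornstein--Uhlenbeck process of power $P_i$, mutually independent and independent of $B$, and decode $M_i$ while treating the remaining senders as additional noise. The heuristic is that the discrete-time pentagon degenerates to the rectangle as the per-use SNR vanishes, so interference costs nothing in the limit; the rigorous content is the key lemma foreshadowed as Lemma~\ref{independent-OUs}, asserting that in continuous time user $i$'s achievable rate nonetheless tends to $P_i/2$. Granting this, a standard random-coding argument with codewords drawn from the OU law shows that every rate tuple with $R_i < P_i/2$ for all $i$ is achievable, and taking the closure yields the rectangle. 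Since these inputs use no feedback, feedback cannot enlarge the region, matching the converse.

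The main obstacle I expect is the achievability, concretely the rigorous proof of Lemma~\ref{independent-OUs}: one must show in continuous time that the interference generated by the other users --- which is correlated with $Y$ through the feedback and carries nonnegligible power --- genuinely behaves like part of the white noise and does not depress user $i$'s rate below $P_i/2$. This requires controlling the causal MMSE of $X_i$ in the presence of the other OU signals via Girsanov's theorem and the I-MMSE machinery, rather than a soft limiting argument; the converse, by contrast, is essentially a conditional application of the single-user bound and should be routine.
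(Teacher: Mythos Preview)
Your proposal is correct and matches the paper's proof closely. The converse is essentially identical: Fano, then $I(M_i;Y_0^T)\le I(M_i;Y_0^T\mid M_{-i})$ by independence of the messages, then the causal-MMSE identity (the paper cites Theorem~6.2.1 in Ihara rather than Remark~\ref{directed-information}, but the content is the same) to reduce to the single-user power bound.

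For achievability there is a minor difference worth noting. The paper's main proof uses \emph{joint} typicality decoding: it defines a three-way typical set $\mathcal{T}_\varepsilon^{(T)}$ and bounds the three error events $E_{i1}$, $E_{1j}$, $E_{ij}$ via $I_T(X_1;Y\mid X_2)$, $I_T(X_2;Y\mid X_1)$, $I_T(X_1,X_2;Y)$, appealing to all three conclusions (\ref{comma-sum})--(\ref{conditional}) of Lemma~\ref{independent-OUs} together with an information-stability lemma for the continuous-time AEP. Your variant --- decode each user by treating the others as noise, using only (\ref{treat-as-noise}) --- is exactly the alternative the paper sketches in the Remark immediately following the proof. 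Your route is slightly leaner (one error event per user instead of three), while the paper's route makes the degeneration of the pentagon explicit; both hinge on the same OU construction and on Lemma~\ref{independent-OUs}, which you correctly identify as the substantive step.
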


\begin{rem} \label{sampling-heuristics}
When there is no feedback, Theorem~\ref{Theorem-MAC} can be heuristically explained using the sampling approach as in (\ref{sampled-white-Gaussian-noise-channel})-(\ref{infinite-bandwidth-capacity}) (this heuristical approach in this example should be well-known; see, e.g., Exercise $15.26$ in~\cite{co2006}).

For simplicity only, we consider the following continuous-time white Gaussian multiple access channel with two senders:
\begin{equation} \label{example-mac}
Y(t)=X_1(t)+X_2(t)+Z(t), \quad t \in \mathbb{R},
\end{equation}
where $X_i$, $i=1, 2$, is the input from the $i$-th user with average power limit $P_i$. Similarly as before, consider its associated discrete-time version corresponding to bandwidth limit $\omega$:
$$
Y_n=X_{1, n}^{(\omega)}+X_{2, n}^{(\omega)}+Z_n^{(\omega)}, \quad n \in \mathbb{Z}.
$$
Then, it is well known~\cite{el11} that the outer bound on the capacity region can be computed as
$$
\left\{(R_1, R_2) \in \mathbb{R}_+^2: R_1 \leq W \log \left(1+\frac{P_1}{2 \omega} \right), R_2 \leq W \log \left(1+\frac{P_2}{2 \omega} \right)\right\},
$$
and the inner bound as
$$
\hspace{-0.6cm} \left\{(R_1, R_2) \in \mathbb{R}_+^2: R_1 \leq \omega \log \left(1+\frac{P_1}{2 \omega} \right), R_2 \leq \omega \log \left(1+\frac{P_2}{2 \omega} \right), R_1+R_2 \leq \omega \log \left(1+\frac{P_1+P_2}{2 \omega} \right)\right\}.
$$
(Here, it is known~\cite{wy74, co75} that the outer bound can be tightened to coincide with the inner bound, which, however, is not needed for this example.) It is easy to verify that the two bounds also collapse into the same region as $\omega$ tends to infinity:
$$
\left\{(R_1, R_2) \in \mathbb{R}_+^2: R_1 \leq P_1/2, R_2 \leq P_2/2\right\},
$$
which is ``expected'' to be the capacity region of (\ref{example-mac}); or alternatively, one can apply the low SNR equivalence in (A) and take the limit as $P$ tends to $0$, reaching the same conclusion. Note that similar arguments hold for more than two senders as well through a parallel extension.
\end{rem}

\begin{rem} \label{approximation-heuristics}
When the feedback is present in the channel, the approximation approach, rather than the conventional sampling approach, is necessary to explain Theorem~\ref{Theorem-MAC}.

Again, for simplicity only, we consider the following continuous-time Gaussian MAC with two senders:
\begin{equation} \label{heur-1}
Y(t)=\int_0^t X_1(s, M_1, Y_0^s) ds+\int_0^t X_2(s, M_2, Y_0^s) ds + B(t), \quad t \geq 0,
\end{equation}
with the power constraints: there exist $P_1, P_2 > 0$ such that for all $T$,
\begin{equation}  \label{powers}
\int_0^T X_1^2(s, M_1, Y_0^s) ds \leq P_1 T, \quad \int_0^T X_2^2(s, M_2, Y_0^s) ds \leq P_2 T.
\end{equation}
Applying the Euler-Maruyama approximation to the above channel over the time window $[0, T]$ with respect to the evenly spaced $\Delta_n$ with $\delta_n=T/n$, we obtain
$$
\hspace{-1cm} Y^{(n)}(t_{n, i+1})=Y^{(n)}(t_{n, i})+ \int_{t_{n, i}}^{t_{n, i+1}} X_1(s, M_{1}, Y_0^{(n), t_{n, i}}) ds + \int_{t_{n, i}}^{t_{n, i+1}} X_2(s, M_{2}, Y_0^{(n), t_{n, i}}) ds + B(t_{n, i+1})-B(t_{n, i}).
$$
Now, straightforward computations and a usual concavity argument then yields that for large $n$,
{\small \begin{align*}
I(M_1; Y^{(n)}(\Delta_n)|M_2) & = h(Y^{(n)}(\Delta_n)|M_2)-h(Y^{(n)}(\Delta_n)|M_1, M_2) \\
&=\sum_{i=1}^{n} h(Y^{(n)}(t_{n, i})|Y_{t_{n, 0}}^{(n), t_{n, i-1}}, M_2)-\sum_{i=1}^n h(Y^{(n)}(t_{n, i})|Y_{t_{n, 0}}^{(n), t_{n, i-1}}, M_1, M_2)\\
& \leq \sum_{i=1}^{n} \frac{1}{2} \log \left(\E\left(\int_{t_{n, i-1}}^{t_{n, i}} X_1(s, M_1, Y_0^{(n), t_{n, i-1}}) ds\right)^2+\delta_n \right) - \frac{1}{2} \log (\delta_n)\\
& \leq \sum_{i=1}^{n} \frac{1}{2} \log \left(\left(\int_{t_{n, i-1}}^{t_{n, i}} \E X_1(s, M_1, Y_0^{(n), t_{n, i-1}})^2 ds\right) \delta_n+\delta_n \right) - \frac{1}{2} \log (\delta_n)\\
& = \sum_{i=1}^{n} \frac{1}{2} \log \left(\int_{t_{n, i-1}}^{t_{n, i}} \E X_1(s, M_1, Y_0^{(n), t_{n, i-1}})^2 ds+1 \right)\\
& \leq \sum_{i=1}^{n} \sum_{i=1}^{n} \frac{1}{2} \int_{t_{n, i-1}}^{t_{n, i}} \E X_1(s, M_1, Y_0^{(n), t_{n, i-1}})^2 ds\\
& \approx \sum_{i=1}^{n} \sum_{i=1}^{n} \frac{1}{2} \int_{t_{n, i-1}}^{t_{n, i}} \E X_1(s, M_1, Y_0^{s})^2 ds\\
& \leq \frac{P_1 T}{2}.
\end{align*}}
A completely parallel argument will yield that
$$
I(M_2; Y^{(n)}(\Delta_n)|M_1) \leq \frac{P_2 T}{2}.
$$
It then follows from Theorem~\ref{approximation-theorem-1} the region below give an outer bound of the capacity region:
\begin{equation} \label{heur-2}
\{(R_1, R_2): 0 \leq R_1 \leq P_1/2, \quad 0 \leq R_2 \leq P_2/2\}.
\end{equation}

To see that this outer bound can be achieved, set $X_1(s), X_2(s)$, $t_{n, i} \leq s \leq t_{n, i+1}$, in (\ref{heur-1}) to be independent Gaussian random variables with variances $P_1$, $P_2$, respectively. Then, one verifies that for large $n$,
\begin{align}
I(M_1; Y(\Delta_n)) & = h(Y(\Delta_n))-h(Y(\Delta_n)|M_1) \nonumber \\
&=\sum_{i=1}^{n} h(Y(t_{n, i}))-\sum_{i=1}^n h(Y(t_{n, i})|M_1) \nonumber\\
&=\sum_{i=1}^{n} \frac{1}{2} \log (P_1 \delta_n^2 + P_2 \delta_n^2+ \delta_n)-\frac{1}{2} \log (P_2 \delta_n^2 + \delta_n) \label{ignore-1}\\
&=\sum_{i=1}^{n} \frac{1}{2} \log \left(1+ \frac{P_1 \delta_n^2}{P_2 \delta_n^2+\delta_n}\right) \label{ignore-2}\\
&\approx P_1 T/2, \label{new-low-SNR-equivalence}
\end{align}
where we have used the fact that $\delta_n$ is ``close'' to $0$ for large enough $n$ for (\ref{new-low-SNR-equivalence}), which, parallelly as in Remark~\ref{sampling-heuristics}, can be alternatively explained by taking $P_1$ to $0$ and then applying the low SNR equivalence in (B). With a similar argument, one can prove that
$$
I(M_2; Y(\Delta_n)) \approx P_2 T/2.
$$
It then follows that the outer bound in (\ref{heur-2}) can be achieved.

Here we remark that similarly as in Section~\ref{infinite-bandwidth-revisited}, for $n$ large enouch, the constructed processes $X_1$ and $X_2$ behave like ``fast-oscillating'' Ornstein-Uhlenbeck processes, and moreover, from (\ref{ignore-1}) and (\ref{ignore-2}), one can tell that for one user to achieve the maximum transmission rate, the other user can simply be ignored. Predicting the optimal channel input, these facts echo Remark~\ref{xianming-explanation} and give another explanation to Lemma~\ref{independent-OUs}, a key lemma in our rigourous proof of Theorem~\ref{Theorem-MAC} in Appendix~\ref{proof-Theorem-MAC}.
\end{rem}

\subsection{Gaussian ICs} \label{IC}

Consider the following continuous-time white Gaussian interference channel having no feedback and with $m$ pairs of senders and receivers: for $i=1, 2, \ldots, m$,
\begin{align}  \label{Equation-IC}
Y_i(t) =a_{i1} \int_0^t X_1(s, M_1) ds + a_{i2} \int_0^t X_2(s, M_2) ds + \cdots + a_{im} \int_0^t X_m(s, M_m) ds+ B_i(t), \quad t \geq 0,
\end{align}
where $X_i$ is the channel input from sender $i$, which depends on $M_i$, the message sent from sender $i$, which is independent of all messages from other senders, and $a_{ij} \in \mathbb{R}$, $i, j=1, 2, \ldots, m$, is the channel gain from sender $j$ to receiver $i$, all $B_i(t)$ are (possibly correlated) standard Brownian motions.

For $T, R_1, \ldots, R_m, P_1, \ldots, P_m > 0$, a $(T, (e^{T R_1}, \ldots, e^{T R_m}), (P_1, \ldots, P_m))$-code for the IC (\ref{Equation-IC}) consists of $m$ sets of integers $\mathcal{M}_i=\{1, 2, \ldots, e^{T R_i}\}$, the {\it message alphabet} for user $i$, $i=1, 2, \ldots, m$, and $m$ {\it encoding functions}, $X_i: \mathcal{M}_i \rightarrow C[0, T]$ satisfying the following power constraint: for any $i=1, 2, \ldots, m$, with probability $1$,
\begin{equation} \label{PowerConstraint-IC}
\frac{1}{T} \int_0^T X_i^2(s, M_i) ds \leq P_i,
\end{equation}
and $m$ {\it decoding functions}, $g_i: C[0, T] \rightarrow \mathcal{M}_i$, $i=1, 2, \ldots, m$.

The average probability of error for the $(T, (e^{T R_1}, \ldots, e^{T R_m}), (P_1, \ldots, P_m))$-code is defined as
$$
\hspace{-1cm} P_e^{(T)}=\frac{1}{e^{T(\sum_{i=1}^m R_i)}} \sum_{(M_1, M_2, \ldots, M_m) \in \mathcal{M}_1 \times \mathcal{M}_2 \times \cdots \times \mathcal{M}_m} P\{g_i(Y_{i, 0}^T) \neq M_i, i=1, 2, \ldots, m~|~(M_1, M_2, \ldots, M_m) \mbox{ sent}\}.
$$
A rate tuple $(R_1, R_2, \ldots, R_m)$ is said to be {\bf achievable} for the IC if there exists a sequence of $(T, (e^{T R_1}, \ldots, e^{T R_m}),(P_1, \ldots, P_m))$-codes with $P_e^{(T)} \rightarrow 0$ as $T \rightarrow \infty$. The {\bf capacity region} of the IC is the closure of the set of all the achievable $(R_1, R_2, \ldots, R_m)$ rate tuples.

The following theorem explicitly characterizes the capacity region of the above IC, whose proof has been postponed to Appendix~\ref{proof-Theorem-IC-Without-Feedback}.
\begin{thm} \label{Theorem-IC-Without-Feedback}
The capacity region of the continuous-time white Gaussian IC (\ref{Equation-IC}) is
$$
\{(R_1, R_2, \ldots, R_m) \in \mathbb{R}_+^m: R_i \leq a_{ii}^2 P_i/2, \quad i=1, 2, \ldots, m\}.
$$
\end{thm}

\begin{rem}
Theorem~\ref{Theorem-IC-Without-Feedback} can be heuristically derived using a similar argument employing the approximation approach as in Remark~\ref{approximation-heuristics}.
\end{rem}

With the explicit non-feedback capacity region stated in Theorem~\ref{Theorem-IC-Without-Feedback}, we are now ready to use the approximation approach to analyze the effects of feedback on continuous-time Gaussian ICs.

The following theorem says that feedback does help continuous-time Gaussian ICs, whose proof uses a translated version of the argument in~\cite{su11} coupled with the approximation approach as in Section~\ref{infinite-bandwidth-revisited}, and so we only provide a sketch of the proof.
\begin{thm} \label{Theorem-IC-With-Feedback}
Feedback strictly increases the capacity region of certain continuous-time Gaussian interference channel.
\end{thm}

\begin{proof}
Consider the following symmetric continuous-time Gaussian interference channel with two pairs of senders and receivers:
$$
Y_1(t)=\sqrt{snr} \int_0^t X_1(s) ds + \sqrt{inr} \int_0^t X_2(s) ds + B_1(t),
$$
$$
Y_2(t)=\sqrt{inr} \int_0^t X_1(s) ds + \sqrt{snr} \int_0^t X_2(s) ds + B_2(t),
$$
where $snr, inr$ denote the signal-to-noise, interference-to-noise ratios, respectively, $B_1(t), B_2(t)$ are independent standard Brownian motions, and the average power of $X_1, X_2$ are assumed to be $1$.

Following~\cite{su11}, we consider the following coding scheme over two stages, each of length $T_0$.
In the first stage, transmitters $1$ and $2$ send codewords $X_{1, 0}^{T_0}$ and $X_{2, 0}^{T_0}$ with rates $R_1$ and $R_2$, respectively. In the second stage, using feedback, transmitters $1$ and $2$ decode $X_{2, 0}^{T_0}$ and $X_{1, 0}^{T_0}$, respectively. This can be decoded if
$$
R_1, R_2 \leq \frac{inr}{2}.
$$
Then, transmitters $1$ and $2$ send $X_{1, T_0}^{2 T_0}$ and $X_{2, T_0}^{2 T_0}$, respectively such that for any $0 \leq t \leq T_0$,
$$
X_1(T_0+t)=X_2(t), \quad X_2(T_0+t)=-X_1(t).
$$
Then during the two stages, receiver $1$ receives
$$
Y_1(t)=\sqrt{snr} \int_0^t X_1(s) ds+\sqrt{inr} \int_0^t X_2(s) ds + B_1(t), \quad 0 \leq t \leq T_0,
$$
and
$$
Y_1(T_0+t)=\sqrt{snr} \int_0^{T_0+t} X_1(s) ds + \sqrt{inr} \int_0^{T_0+t} X_2(s) ds + B_1(T_0+t), \quad  0 \leq t \leq T_0,
$$
which immediately gives rise to
\begin{align*}
Y_1(T_0+t)-Y_1(T_0) &= \sqrt{snr} \int_{T_0}^{T_0+t} X_1(s) ds + \sqrt{inr} \int_{T_0}^{T_0+t} X_2(s) ds + B_1(T_0+t)-B_1(T_0)\\
                &= \sqrt{snr} \int_0^t X_2(s) ds - \sqrt{inr} \int_0^t X_1(s) ds + B_1(T_0+t)-B_1(T_0).
\end{align*}
We then have that for any $0 \leq t \leq T_0$,
$$
\hspace{-1.5cm} \sqrt{snr} Y_1(t)-\sqrt{inr} (Y_1(T_0+t)-Y_1(T_0))=(snr+inr) \int_0^t X_1(s) ds + \sqrt{snr} B_1(t)-\sqrt{inr} (B_1(T_0+t)-B(T_0)),
$$
which means the codeword $X_{1, 0}^{T_0}$ can be decoded at the second stage if
$$
R_1 \leq \frac{snr+inr}{2}.
$$
A completely parallel argument yields that the codeword $X_{2, 0}^{T_0}$ can be decoded at the second stages if
$$
R_2 \leq \frac{snr+inr}{2}.
$$
All in all, after the two stages, the two codewords $X_{1, 0}^{T_0}$ and $X_{2, 0}^{T_0}$ can be decoded as long as
$$
R_1, R_2 \leq \frac{inr}{2};
$$
in other words, coding rate $(\frac{inr}{2}, \frac{inr}{2})$ is achievable, which, if assuming $inr > snr$, will imply that feedback strictly increases the capacity region.
\end{proof}

\subsection{Gaussian BCs} \label{BC}

In this section, we consider a continuous-time white Gaussian BC with $m$ receivers, which is characterized by: for $i=1, 2, \ldots, m$,
\begin{equation} \label{Equation-BC}
Y_i(t) = \sqrt{snr_i} \int_0^t X(s, M_1, M_2, \ldots, M_m) ds + B_i(t), \quad t \geq 0,
\end{equation}
where $X$ is the channel input, which depends on $M_i$, the message sent from sender $i$, which is uniformly distributed over a finite alphabet $\mathcal{M}_i$ and independent of all messages from other senders, $snr_i$ is the signal-to-noise ratio in the channel for user $i$, $B_i(t)$ are (possibly correlated) standard Brownian motions.

For $T, R_1, R_2, \ldots, R_m, P> 0$, a $(T, (e^{T R_1}, \ldots, e^{T R_m}), P)$-code for the BC (\ref{Equation-BC}) consists of $m$ set of integers $\mathcal{M}_i=\{1, 2, \ldots, e^{T R_i}\}$, the {\it message set} for receiver $i$, $i=1, 2, \ldots, m$,  and an {\it encoding function}, $X: \mathcal{M}_1 \times \mathcal{M}_2 \times \cdots \times \mathcal{M}_m \rightarrow C[0, T]$, which satisfies the following power constraint: with probability $1$,
\begin{equation} \label{PowerConstraint-BC}
\frac{1}{T} \int_0^T X^2(s, M_1, M_2, \ldots, M_m) ds \leq P,
\end{equation}
and $m$ {\it decoding functions}, $g_i: C[0, T] \rightarrow \mathcal{M}_i$, $i=1, 2, \ldots, m$.

The average probability of error for the $(T, (e^{T R_1}, e^{T R_2}, \ldots, e^{T R_m}), P)$-code is defined as
$$
\hspace{-1cm} P_e^{(T)}=\frac{1}{e^{T(\sum_{i=1}^m R_i)}} \sum_{(M_1, M_2, \ldots, M_m) \in \mathcal{M}_1 \times \mathcal{M}_2 \times \cdots \times \mathcal{M}_m} P\{g_i(Y_0^T) \neq M_i, i=1, 2, \ldots, m~|~(M_1, M_2, \ldots, M_m) \mbox{ sent}\}.
$$
A rate tuple $(R_1, R_2, \ldots, R_m)$ is said to be {\bf achievable} for the BC if there exists a sequence of $(T, (e^{T R_1}, e^{T R_2}, \ldots, e^{T R_m}), P)$-codes with $P_e^{(T)} \rightarrow 0$ as $T \rightarrow \infty$. The {\bf capacity region} of the BC is the closure of the set of all the achievable $(R_1, R_2, \ldots, R_m)$ rate tuples.

The following theorem explicitly characterizes the capacity region of the above BC, whose proof is postponed to Appendix~\ref{proof-Theorem-BC-Without-Feedback}.
\begin{thm} \label{Theorem-BC-Without-Feedback}
The capacity region of the continuous-time white Gaussian BC (\ref{Equation-BC}) is
$$
\left\{(R_1, R_2, \ldots, R_m) \in \mathbb{R}_+^m: \frac{R_1}{snr_1}+\frac{R_2}{snr_2}+\cdots+\frac{R_m}{snr_m} \leq \frac{P}{2}\right\}.
$$
\end{thm}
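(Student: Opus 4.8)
The plan is to prove both inclusions---achievability and converse---after first normalizing the channel and reducing to the physically degraded case. Since the channel (\ref{Equation-BC}) carries no feedback, its capacity region depends only on the marginal laws of the individual $Y_i$ and not on the joint coupling of the noises $\{B_i\}$; I would therefore relabel the receivers so that $snr_1 \geq snr_2 \geq \cdots \geq snr_m$ and couple the Brownian motions so that the channel is physically degraded, $X \to \bar Y_1 \to \bar Y_2 \to \cdots \to \bar Y_m$, where $\bar Y_i(t)= \int_0^t X\, ds + B_i(t)/\sqrt{snr_i}$ is $\bar Y_{i-1}$ further corrupted by independent Brownian noise of intensity $1/snr_i - 1/snr_{i-1}$. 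This rescaling is the crux of the whole argument: dividing each output by $\sqrt{snr_i}$ puts all users on a common signal $\int_0^t X\,ds$ and exposes exactly the weights $1/snr_i$ appearing in the claimed region.

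For achievability I would use superposition coding driven by the approximation approach of Section~\ref{infinite-bandwidth-revisited}. Given a target tuple with $\sum_i R_i/snr_i < P/2$, write $R_i = snr_i \beta_i P/2$ with $\beta_i \ge 0$ and $\sum_i \beta_i < 1$, and set $X=\sum_i X_i$ as a sum of independent, fast-oscillating Ornstein--Uhlenbeck layers of average powers $\beta_i P$, keeping the total power below $P$. Applying the Euler--Maruyama approximation (Theorem~\ref{approximation-theorem-3}) together with successive decoding along the degradation chain, the computation is entirely parallel to Remark~\ref{approximation-heuristics}: each conditional mutual information $I(M_i;Y_i^{(n)}(\Delta_n)\mid \text{stripped layers})$ is a sum of $\tfrac12\log(1+\text{per-use SNR})$ terms whose per-use SNR is $O(\delta_n)$, so the logarithm linearizes and the interference from the undecoded layers drops out in the limit. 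This yields $R_i<snr_i\beta_iP/2$ for all users simultaneously, hence $\sum_i R_i/snr_i<P/2$; taking the closure over all admissible splits $\{\beta_i\}$ fills out the region.

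For the converse I would combine Fano's inequality with the continuous-time I-MMSE relationship~\cite{gu05} and the degraded ordering. By Fano, $TR_m \le I(M_m;Y_m^T)+o(T)$ and $TR_i \le I(M_i;Y_i^T\mid M_{i+1},\dots,M_m)+o(T)$ for $i<m$. Writing each (conditional) mutual information through Duncan's formula~\cite{Duncan70,ka71} and its I-MMSE generalization gives $\frac{1}{snr_i}I(M_i;Y_i^T\mid M_{i+1}^m)=\tfrac12\int_0^T\big(A_i(t)-B_i(t)\big)\,dt$, where $A_i(t)$ is the causal MMSE of $X(t)$ from $Y_i$ given $M_{i+1}^m$ and $B_i(t)$ is the same quantity with $M_i^m$ additionally revealed. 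Degradedness yields the monotonicity $B_i(t)\ge A_{i-1}(t)$ (conditioning on the less noisy $Y_{i-1}$ cannot increase the MMSE, since the two condition on the identical message set $M_i^m$), so the telescoping sum collapses to $\sum_i\big(A_i-B_i\big)\le A_m-B_1$. In the non-feedback setting $X$ is a deterministic function of all messages, so $B_1\equiv 0$, while $A_m(t)\le \E[X^2(t)]$ because an MMSE never exceeds the prior second moment. Dividing by $T$ and invoking the power constraint (\ref{PowerConstraint-BC}) gives precisely $\sum_i R_i/snr_i\le \tfrac12\cdot\frac1T\int_0^T\E[X^2(t)]\,dt\le P/2$.

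The hard part will be making these continuous-time entropy and MMSE manipulations rigorous. Path-space differential entropies are infinite, so the identities above cannot be read off directly; I expect to route the argument through the approximation theorem (Theorem~\ref{approximation-theorem-1}), perform the Bergmans-type step and the causal-MMSE monotonicity on the finite-dimensional Euler--Maruyama approximations, and then pass to the limit---this limiting statement is exactly what the continuous-time entropy power inequality (Theorem~\ref{Xianming-Lemma}) and the continuous-time I-MMSE relationship are meant to supply. Establishing the conditional I-MMSE formula and the monotonicity of causal MMSE along the degradation chain with full rigor, rather than the telescoping bookkeeping itself, is where I anticipate the real difficulty to lie.
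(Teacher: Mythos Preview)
Your proposal is correct, and the converse takes a route that is closely related to---but genuinely distinct from---the paper's. Both arguments start with Fano and the conditional form of Duncan's formula, but the key monotonicity step differs. The paper packages the needed inequality as Lemma~\ref{Xianming-Lemma}, which asserts that $I_T(snr)/snr$ is decreasing in $snr$; its proof differentiates and invokes the non-causal I-MMSE identity $I'_T(snr)=\tfrac12\int_0^T\E[(X-\E[X|Y_0^T])^2]\,dt$, so that $(I/snr)'\le 0$ reduces to ``non-causal MMSE $\le$ causal MMSE''. Your telescoping argument instead couples the receivers into a physically degraded chain and uses data processing for the \emph{causal} MMSE directly: $A_{i-1}(t)\le B_i(t)$ because $\bar Y_{i,0}^t$ is a coarsening of $(\bar Y_{i-1,0}^t,\text{independent noise})$. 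Unwinding Duncan, both routes amount to ``causal MMSE is monotone in SNR'', but yours is the more elementary proof of that fact: it never touches the non-causal I-MMSE formula and it telescopes cleanly to general $m$, whereas the paper states only the $m=2$ case and would have to iterate Lemma~\ref{Xianming-Lemma}. On the achievability side the two plans coincide in substance: superposition of independent fast-oscillating Ornstein--Uhlenbeck layers, with the paper routing the analysis through Lemma~\ref{independent-OUs} and a direct random-coding argument rather than the Euler--Maruyama approximation you invoke.

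One remark on your final paragraph: the rigor worry is somewhat misplaced. You do not need to pass through the approximation theorem or finite-dimensional Bergmans steps for the converse; the identities and monotonicities you use---conditional Duncan for $I(M_i;Y_i^T\mid M_{i+1}^m)$, and causal MMSE data processing along the degraded coupling---hold directly in continuous time under the power constraint~(\ref{PowerConstraint-BC}). Likewise, Lemma~\ref{Xianming-Lemma} is not needed in your scheme at all, since your degradation-chain argument already supplies the same monotonicity without it.
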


\begin{rem}
Theorem~\ref{Theorem-BC-Without-Feedback} can be heuristically derived using a similar argument employing the approximation approach as in Remark~\ref{approximation-heuristics}.
\end{rem}

We are now ready to use the approximation approach to analyze the effects of feedback on continuous-time Gaussian BCs.

The following theorem says that feedback does not help physically degraded Gaussian BCs, whose proof is inspired by the ideas in Section~\ref{infinite-bandwidth-revisited} and parallels the argument in~\cite{el81}.
\begin{thm}  \label{Theorem-BC-With-Feedback-1}
Consider the following continuous-time physically degraded Gaussian broadcast channel with one sender and two receivers:
$$
Y_1(t)=\int_0^t X(s, M_1, Y_{1, 0}^s, Y_{2, 0}^s) ds+\sqrt{N_1} B_1(t),
$$
$$
Y_2(t)=\int_0^t X(s, M_2, Y_{1, 0}^s, Y_{2, 0}^s) ds+\sqrt{N_1} B_1(t)+\sqrt{N_2} B_2(t),
$$
where $N_1, N_2 > 0$, and $B_1, B_2$ are independent standard Brownian motions, and the channel input $X(s)$ is assumed to satisfy Conditions (d)-(f). Then, feedback does not increase the capacity region of the above channel.
\end{thm}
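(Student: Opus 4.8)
The plan is to establish the two inclusions separately, with essentially all the effort going into the converse. The ``$\supseteq$'' direction is free: a feedback encoder may always ignore the feedback, so the feedback capacity region contains the no-feedback region, which by Theorem~\ref{Theorem-BC-Without-Feedback} (specialized to $m=2$ with $snr_1 = 1/N_1$ and $snr_2 = 1/(N_1+N_2)$, since $\sqrt{N_1}B_1$ and $\sqrt{N_1}B_1+\sqrt{N_2}B_2$ have variances $N_1 t$ and $(N_1+N_2)t$) equals $\{(R_1,R_2)\in\mathbb{R}_+^2: N_1 R_1 + (N_1+N_2)R_2 \le P/2\}$. Thus it suffices to prove the matching converse, namely that no feedback code can make an achievable pair escape this region; note at the outset that a naive additive combination of the two Duncan/MMSE bounds $I(M_1;Y_{1,0}^T\mid M_2)$ and $I(M_2;Y_{2,0}^T)$ is loose by a factor of two, which is precisely why physical degradedness together with an entropy power inequality must enter.

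For the converse I would transfer El~Gamal's discrete-time argument~\cite{el81} through the approximation approach of Section~\ref{infinite-bandwidth-revisited}. Fix $T$ and apply the causal Euler--Maruyama scheme (\ref{variant-1}) on an equidistant grid $\Delta_n$ to both coordinates. Writing $S_i = \int_{t_{n,i-1}}^{t_{n,i}} X\,ds$ for the common signal increment, the approximated increments obey $\Delta Y_1^{(n)} = S_i + \sqrt{N_1}\,\Delta B_{1,i}$ and $\Delta Y_2^{(n)} = \Delta Y_1^{(n)} + \sqrt{N_2}\,\Delta B_{2,i}$ with $\Delta B_{2,i}$ independent of everything observed up to $t_{n,i-1}$; hence the discretization is itself a discrete-time \emph{physically degraded} Gaussian broadcast channel with feedback, and causality is preserved because the scheme is causal. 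On this discrete channel I would run the El~Gamal converse: Fano's inequality gives $TR_1 \lesssim I(M_1; Y_1^{(n)}(\Delta_n)\mid M_2)$ and $TR_2 \lesssim I(M_2; Y_2^{(n)}(\Delta_n))$, each expanding into a telescoping sum of conditional differential entropies; physical degradedness is what permits replacing the feedback $Y_2^{i-1}$ by $Y_1^{i-1}$ in the relevant conditioning, and the per-step entropy power inequality $e^{2h(\Delta Y_2^{(n)}\mid\cdot)} \ge e^{2h(\Delta Y_1^{(n)}\mid\cdot)} + e^{2h(\sqrt{N_2}\Delta B_{2,i})}$ couples the two rate bounds into a single trade-off governed by a power split $\beta\in[0,1]$.

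Finally I would let $n\to\infty$, invoking Theorem~\ref{approximation-theorem-1} together with the directed-information identity of Remark~\ref{directed-information} so that $I(M_j; Y_j^{(n)}(\Delta_n)) \to I(M_j; Y_{j,0}^T)$, while the discrete entropy-power relations pass to the continuous-time entropy power inequality of Theorem~\ref{Xianming-Lemma}. Because the per-step signal energy is $O(\delta_n^2)$ whereas the noise energy is $O(\delta_n)$, the $\tfrac12\log(1+\cdot)$ trade-off linearizes in the infinite-bandwidth regime (this is the low-SNR equivalence~(B)), and letting $T\to\infty$ collapses the $\beta$-parametrized family onto $N_1 R_1 + (N_1+N_2)R_2 \le P/2$. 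The main obstacle I anticipate is the feedback step of the El~Gamal argument: unlike the no-feedback case, $X$ depends causally on both $Y_{1,0}^s$ and $Y_{2,0}^s$, so one must exploit \emph{physical} (not merely stochastic) degradedness to argue that $Y_{2,0}^s$ is redundant given $Y_{1,0}^s$, and the delicate point is making this redundancy survive the continuous-time limit while ensuring the entropy-power bounds converge uniformly enough to interchange the limits in $n$ and in $T$.
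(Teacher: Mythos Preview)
Your proposal is essentially the paper's own proof: discretize via Euler--Maruyama, run El~Gamal's converse (Fano plus entropy power inequality with a power-split parameter) on the resulting discrete physically degraded Gaussian BC with feedback, and pass to the limit via Theorem~\ref{approximation-theorem-1}. One small correction: the entropy power step stays entirely at the discrete level---the paper applies Lemma~1 of~\cite{el81} to the sampled increments and then only sends the \emph{mutual informations} to their continuous-time limits, so Lemma~\ref{Xianming-Lemma} (which belongs to the no-feedback proof of Theorem~\ref{Theorem-BC-Without-Feedback}) plays no role here, and there is no interchange of limits in $n$ and $T$ to worry about since the bounds are already linear in $T$ for each fixed $n$.
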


\begin{proof}
Let $X$ be a $(T, (e^{T R_1}, e^{T R_2}), P)$-code. By the code construction, for $i=1, 2$, it is possible to estimate the messages $M_i$ from the channel output $Y_{i, 0}^T$ with an arbitrarily low probability of error. Hence, by Fano's inequality, for $i=1, 2$,
$$
H(M_i|Y_{i, 0}^T) \leq T R_i P^{(T)}_e +H(P^{(T)}_e) = T \varepsilon_{i, T},
$$
where $\varepsilon_{i, T} \rightarrow 0$ as $T \rightarrow \infty$. It then follows that
$$
T R_1 = H(M_1) = H(M_1|M_2) \leq I(M_1; Y_{1, 0}^T|M_2) + T \varepsilon_{1, T},
$$
$$
T R_2 = H(M_2) \leq I(M_2; Y_{2, 0}^T) + T \varepsilon_{2, T}.
$$
Now the Euler-Maruyama approximation with respect to the evenly spaced $\Delta_n$ of stepsize $\delta_n = T/n$ applied to the continuous-time physically degraded Gaussian BC yields:
{\small $$
\hspace{-1cm} Y_1^{(n)}(t_{n, i})-Y_1^{(n)}(t_{n, i-1})=\int_{t_{n, i-1}}^{t_{n, i}} X(s, M, Y_{1, t_{n, 0}}^{(n), t_{n, i-1}}, Y_{2, t_{n, 0}}^{(n), t_{n, i-1}}) ds+\sqrt{N_1} B_1(t_{n, i})-\sqrt{N_1} B_1(t_{n, i-1}),
$$
$$
\hspace{-2.4cm} Y_2^{(n)}(t_{n, i})-Y_2^{(n)}(t_{n, i-1})=\int_{t_{n, i-1}}^{t_{n, i}} X(s, M, Y_{1, t_{n, 0}}^{(n), t_{n, i-1}}, Y_{2, t_{n, 0}}^{(n), t_{n, i-1}}) ds+\sqrt{N_1} B_1(t_{n, i})-\sqrt{N_1} B_1(t_{n, i-1})+ \sqrt{N_2} B_2(t_{n, i})- \sqrt{N_2} B_2(t_{n, i-1}).
$$}
Then, by Theorem~\ref{approximation-theorem-1}, we have
\begin{align*}
I(M_2; Y_{2,0}^T) & = \lim_{n \to \infty} I(M_2; Y^{(n)}_{2}(\Delta_n))\\
& = \lim_{n \to \infty} I(M_2; \Delta Y^{(n)}_{2}(\Delta_n))\\
& = \lim_{n \to \infty} h(\Delta Y^{(n)}_{2}(\Delta_n))-h(\Delta Y^{(n)}_{2}(\Delta_n)|M_2),
\end{align*}
where $\Delta Y^{(n)}_{2}(\Delta_n) \triangleq \{Y_2^{(n)}(t_{n, i})-Y_2^{(n)}(t_{n, i-1}): i=1, 2, \cdots, n\}$. Note that
$$
H(\Delta Y^{(n)}_{2}(\Delta_n)) \leq \sum_{i=1}^n \log (2 \pi e(P \delta_n^2+N_2 \delta_n)),
$$
and
\begin{align*}
H(\Delta Y^{(n)}_{2}(\Delta_n)|M_2)& =\sum_{i=1}^n h(Y^{(n)}_{2}(t_{n, i})-Y^{(n)}_{2}(t_{n, i-1})|Y_{2, t_{n, 0}}^{(n), t_{n, i-1}}, M_2)\\
& \geq \sum_{i=1}^n h(\sqrt{N_2} B_{2}(t_{n, i})-\sqrt{N_2} B_{2}(t_{n, i-1}))\\
&=\sum_{i=1}^n \log (2 \pi e N_2 \delta_n),
\end{align*}
which implies that there exists an $\alpha \in [0, 1]$ such that
$$
h(\Delta Y^{(n)}_{2}(\Delta_n)|M) = \sum_{i=1}^n \frac{n}{2} \log (2 \pi e (\alpha P \delta_n^2 + N_2 \delta_n)).
$$
It then follows from Theorem~\ref{approximation-theorem-1} that
$$
I(M_2; Y_{2, 0}^T) \leq \frac{1}{2} \lim_{n \to \infty} \sum_{i=1}^n \log \frac{P \delta_n^2+N_2 \delta_n}{\alpha P \delta_n^2+N_2 \delta_n} = \frac{(1-\alpha)P T}{2 N_2}.
$$
Next we consider
\begin{align*}
I(M_1; Y^{(n)}_{1}(\Delta_n)|M_2) & = h(Y^{(n)}_{1}(\Delta_n)|M_2)-h(Y^{(n)}_{1}(\Delta_n)|M_1, M_2)\\
&=h(Y^{(n)}_{1}(\Delta_n)|M_2)-\sum_{i=1}^n h(Y^{(n)}_{1}(t_{n, i})|M_1, M_2, Y_{1, t_{n, 0}}^{(n), t_{n, i-1}})\\
&\leq h(Y^{(n)}_{1}(\Delta_n)|M_2)-\sum_{i=1}^n h(Y^{(n)}_{1}(t_{n, i})|M_1, M_2, Y_{1, t_{n, 0}}^{(n), t_{n, i-1}}, Y_{1, t_{n, 0}}^{(n), t_{n, i-1}})\\
& = h(Y^{(n)}_{1}(\Delta_n)|M_2)-\frac{1}{2} \sum_{i=1}^n \log(2 \pi e N_1 \delta_n).
\end{align*}
Now, using Lemma $1$ in~\cite{el81} (an extension of the entropy power inequality), we obtain
\begin{align*}
h(Y^{(n)}(\Delta_n)|M_2) \geq \frac{n}{2} \log (2^{2 h(Y^{(n)}_{1}(\Delta_n))|M_2)/n}+2 \pi e (N_2-N_1) \delta_n),
\end{align*}
which immediately implies that
$$
h(Y_{1, t_{n, 0}}^{(n), t_{n, i}}|M_2) \leq \frac{1}{2} \sum_{i=1}^n \log (2 \pi e (\alpha P \delta_n^2 + N_1 \delta_n))
$$
and furthermore, by Theorem~\ref{approximation-theorem-1},
\begin{align*}
I(M_1; Y_{1, 0}^T|M_2) & \leq \lim_{n \to \infty} \frac{1}{2} \sum_{i=1}^n \log (2 \pi e (\alpha P \delta_n^2 + N_1 \delta_n)) - \frac{1}{2} \sum_{i=1}^n \log (2 \pi e N_1 \delta_n)\\
& = \lim_{n \to \infty} \frac{1}{2} \sum_{i=1}^n \log \left(1+ \frac{\alpha P \delta_n}{N_1}\right)\\
&=\frac{\alpha P T}{2N_1}.
\end{align*}
Now, by Theorem~\ref{Theorem-BC-Without-Feedback}, we conclude that feedback capacity region is exactly the same non-feedback capacity region; in other words, feedback does not increase the capacity region of a physically degraded continuous-time Gaussian BC.
\end{proof}

The following theorem says that feedback does help some stochastically degraded Gaussian BCs, whose proof, instead of directly employing the approximation theorem, uses the connections between continuous-time and discrete-time Gaussian channels and the notion of continuous-time directed information in Remark~\ref{directed-information}, both of which can find their source from the approximation theorem. We only provide the sketch of the proof, since it is largely based on a translated version of the argument in~\cite{ozle84}.
\begin{thm} \label{Theorem-BC-With-Feedback-2}
Feedback increases the capacity region of certain continuous-time stochastically degraded Gaussian broadcast channel.
\end{thm}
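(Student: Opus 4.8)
The plan is to exhibit one explicit continuous-time \emph{stochastically} degraded Gaussian broadcast channel and to display a feedback coding scheme achieving a rate pair that provably lies strictly outside the non-feedback capacity region of Theorem~\ref{Theorem-BC-Without-Feedback}. Concretely, I would take the two-receiver channel
\[
Y_1(t) = \int_0^t X(s, M_1, M_2, Y_{1,0}^s, Y_{2,0}^s)\, ds + \sqrt{N_1}\, B_1(t), \qquad Y_2(t) = \int_0^t X(s)\, ds + \sqrt{N_1+N_2}\, B_2(t),
\]
with $B_1, B_2$ \emph{independent} standard Brownian motions and $N_1, N_2 > 0$. This channel has exactly the same marginals as the physically degraded channel of Theorem~\ref{Theorem-BC-With-Feedback-1}, so by Theorem~\ref{Theorem-BC-Without-Feedback} its non-feedback capacity region is identical to the non-feedback region of that physically degraded channel (namely $N_1 R_1 + (N_1+N_2)R_2 \leq P/2$); the independence of $B_1$ and $B_2$, however, is precisely what feedback will exploit.

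First I would discretize via the Euler--Maruyama approximation (\ref{variant-1}) on an equidistant $\Delta_n$ of stepsize $\delta_n = T/n$, obtaining a discrete-time stochastically degraded Gaussian BC whose per-step noise increments have variances $N_1\delta_n$ and $(N_1+N_2)\delta_n$ and are \emph{independent} across the two users. To this discrete channel I would apply the Ozarow--Leung scheme~\cite{ozle84}, a Schalkwijk--Kailath--type feedback code: the encoder transmits two message points and then, using the fed-back outputs, repeatedly sends linearly processed versions of the two receivers' current estimation errors, maintaining a controlled correlation coefficient between the residual errors. This produces an achievable rate pair $(R_1^{(n)}, R_2^{(n)})$ that, in the vanishing per-step SNR regime $\delta_n/N_i \to 0$, converges to a continuous-time pair lying \emph{strictly outside} the region of Theorem~\ref{Theorem-BC-Without-Feedback}.

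Next I would transfer achievability back to continuous time. Because the feedback is instantaneous ($D=0$), the naive directed-information definition (\ref{Tsachy-Definition}) breaks down, as demonstrated in Remark~\ref{directed-information}; I would therefore work with the approximation-based quantities, using the identity (\ref{input-message}) together with Theorem~\ref{approximation-theorem-1} to pass from $I(M_i; Y_i^{(n)}(\Delta_n))$ to $I(M_i; Y_{i,0}^T)$. A standard random-coding and typicality argument, followed by $T\to\infty$, then upgrades the limiting discrete rate pair to a genuinely achievable continuous-time rate pair under feedback, and comparing it with Theorem~\ref{Theorem-BC-Without-Feedback} yields the asserted strict enlargement.

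The hard part will not be importing the Ozarow--Leung scheme but surviving the double limit. One must verify that the feedback-dependent input process produced by the continuous-time limit of the linear correction scheme satisfies the regularity Conditions (d)--(f) required by Theorem~\ref{approximation-theorem-1}, and that the recursion governing the residual-error correlation coefficient admits a nondegenerate continuous-time limit as $\delta_n \to 0$. The real subtlety is that the per-step SNR shrinks to zero exactly as in the MAC calculation of Remark~\ref{approximation-heuristics}, where the feedback gain \emph{collapses}; the crux of the proof is to show that for the stochastically degraded BC the correlation-driven gain does \emph{not} collapse in this vanishing-SNR scaling but persists as a strict improvement over the no-feedback region. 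Establishing this non-collapse---the continuous-time analogue of Ozarow--Leung's strict inequality---is where the essential work lies.
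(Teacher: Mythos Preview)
Your strategy---exhibit a stochastically degraded Gaussian BC with independent noises, discretize, run the Ozarow--Leung scheme, and show the resulting rate pair lies strictly outside the non-feedback region of Theorem~\ref{Theorem-BC-Without-Feedback}---is exactly the paper's. Two differences in execution are worth noting.

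First, the paper takes the \emph{symmetric} channel (both receivers see unit-variance Brownian noise), which makes the Ozarow--Leung recursion and its limiting correlation $\rho^*$ explicit: one gets $R_1=R_2=P(1+\rho^*)/4$ with $\rho^*>0$ determined by a single fixed-point equation, and the comparison with the non-feedback region $R_1+R_2\le P/2$ is immediate. Your asymmetric choice is not wrong, but it forces a messier recursion analysis for no gain.

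Second, and more substantively, you plan to pass from discrete to continuous time via Theorem~\ref{approximation-theorem-1}, which obliges you to verify Conditions (d)--(f) for the Ozarow--Leung input---nontrivial, since the normalizing gains $\gamma_i$ grow as the residual errors shrink, so a uniform Lipschitz bound is not obvious. The paper sidesteps this entirely. For each fixed $n$ the piecewise-constant input $X^{(n)}$ is already a legitimate continuous-time feedback code, so one only needs the \emph{inequality}
\[
I(X_0^{(n),T}\to Y_{i,0}^{(n),T}) \;\ge\; I(X^{(n)}(\Delta_n)\to Y_i^{(n)}(\Delta_n)),
\]
which follows from Remark~\ref{directed-information} (the left side equals $I(M;Y_{i,0}^{(n),T})$, the right equals $I(M;Y_i^{(n)}(\Delta_n))$ by (\ref{input-message}), and sampling the output cannot increase information). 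The limit $n\to\infty$ is then taken only on the right-hand side, a purely discrete-time Ozarow--Leung computation. Your worry about (d)--(f) is thus an artifact of the route you chose; the paper's one-sided bound removes it and makes the ``non-collapse'' of the feedback gain the only real computation, handled by the explicit $\rho^*$.
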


\begin{proof}
Consider the following symmetric continuous-time Gaussian broadcast channel:
$$
Y_1(t)=\int_0^t X(s) ds + B_1(t),
$$
$$
Y_2(t)=\int_0^t X(s) ds + B_2(t),
$$
where $B_1, B_2$ are independent standard Brownian motions, and $X$ satisfies the average power constraint $P$. By Theorem~\ref{Theorem-BC-Without-Feedback}, without feedback, the capacity region is the set of rate pairs $(R_1, R_2)$ such that
\begin{equation} \label{R1R2-1}
R_1 + R_2 \leq \frac{P}{2}.
\end{equation}
With feedback, one can use the following variation~\cite{ozle84} of the Schalkwijk-Kailath coding scheme~\cite{sc66} over $[0, T]$ at discrete time points $\{t_{n, i}\}$ that form an evenly spaced $\Delta_n$ of stepsize $\delta_n$: For the channel input, after some proper initialization, at time $t \in [t_{n, i}, t_{n, i+1})$, we send $X^{(n)}(t)=X_1^{(n)}(t)+X_2^{(n)}(t)$, where
$$
X_1^{(n)}(t)=\gamma_i (X_1^{(n)}(t_{n, i-1})-\E[X_1^{(n)}(t_{n, i-1})|Y_1^{(n)}(t_{n, i-1})]),
$$
$$
X_2^{(n)}(t)=-\gamma_i (X_2^{(n)}(t_{n, i-1})-\E[X_2^{(n)}(t_{n, i-1})|Y_2^{(n)}(t_{n, i-1})]),
$$
where $\gamma_i$ is chosen so that $\E[X_i^2(t)] = P$ for each $i$; and for the channel outputs, we have, for any $t \in [t_{n, i}, t_{n, i+1}]$,
\begin{equation} \label{BC-approximiated-1}
Y_1^{(n)}(t)=Y_1^{(n)}(t_{n, i})+ \int_{t_{n, i}}^{t} X^{(n)}(t_{n, i}) ds + B_1(t)-B_1(t_{n, i}),
\end{equation}
and
\begin{equation} \label{BC-approximiated-2}
Y_2^{(n)}(t)=Y_2^{(n)}(t_{n, i})+ \int_{t_{n, i}}^{t} X^{(n)}(t_{n, i}) ds + B_2(t)-B_2(t_{n, i}).
\end{equation}
Going through a completely parallel argument as in~\cite{ozle84} and capitalizing on the fact that the SNR in the channels (\ref{BC-approximiated-1}) and (\ref{BC-approximiated-2}) tend to $0$ as $n$ tends to infinity, we derive that
$$
\lim_{n \to \infty} I(X^{(n)}(\Delta_n) \to Y_1^{(n)}(\Delta_n))= \frac{1}{2}\sum_{n \to \infty} \sum_{i=1}^n \frac{\log\left(1+ \frac{P \delta_n (1+\rho^*)/2}{(1+P \delta_n (1-\rho^*))/2}\right)}{T}=\frac{P(1+\rho^*)}{4},
$$
and parallelly,
$$
\lim_{n \to \infty} I(X^{(n)}(\Delta_n) \to Y_2^{(n)}(\Delta_n))=\frac{P(1+\rho^*)}{4},
$$
where $\rho^* > 0$ satisfies the condition
$$
\rho^* (1+(P+1)(1+P(1-\rho^*)/2))=\frac{P(P+2)(1-\rho^*)}{2}.
$$
Note that, by Remark~\ref{directed-information}, we have
$$
I(X_0^{(n), T} \to Y_{1, 0}^{(n), T}) \geq I(X^{(n)}(\Delta_n) \to Y_1^{(n)}(\Delta_n)), \quad I(X_0^{(n), T} \to Y_{2, 0}^{(n), T}) \geq I(X^{(n)}(\Delta_n) \to Y_2^{(n)}(\Delta_n)),
$$
which immediately implies that
\begin{equation} \label{R1R2-2}
R_1=R_2=\frac{P(1+\rho^*)}{4}
\end{equation}
are achievable. The claim that feedback strictly increases the capacity region then follows from (\ref{R1R2-1}), and (\ref{R1R2-2}) and the fact that $\rho^* > 0$.
\end{proof}

\section{Conclusions and Future Work}

For a continuous-time white Gaussian channel without feedback, the classical Shannon-Nyquist sampling theorem can convert it to a discrete-time Gaussian channel, however such a link has long been missing when feedback/memory is present in the channel. In this paper, we establish sampling and approximation theorems as the missing links, which we believe will play important roles in the long run for further developing continuous-time information theory, particularly for the communication scenarios where feedback/memory is present.

As an immediate application of our approximation theorem, we propose the approximation approach, an analog of the conventional sampling approach, for Gaussian feedback channels. It turns out that, like its non-feedback counterpart, the approximation approach can bring insights and intuition to investigation of continuous-time Gaussian channels with possible feedback, and moreover, when complemented with relevant tools from stochastic calculus, can deliver rigorous treatments in the point-to-point or multi-user setting.

On the other hand though, there are many questions that remain unanswered and a number of directions that need to be further explored. Below we list a number of research directions that look promising in the near future.

1) The first direction is to strengthen and generalize our sampling and approximation theorems.

Note that both Theorem~\ref{sampling-theorem-2} and Theorem~\ref{approximation-theorem-1} require Conditions (d)-(f), which are stronger than the typical average power constraint. While Conditions (d)-(f) are rather mild for practical considerations, the stronger assumptions in our theorems will narrow their reach in some theoretical situations. For instance, despite the fact that our approximation theorem gives intuitive explanations to the rigorous treatment of continuous-time multi-user Gaussian channels in Section~\ref{multi-user-Gaussian-channels}, it fails to rigorously establish Theorems~\ref{Theorem-MAC} and~\ref{Theorem-BC-Without-Feedback}. The stochastic calculus approach employed in Section~\ref{multi-user-Gaussian-channels} requires only the power constraints, which can be loosely explained by the fact that Girsanov's theorem (or, more precisely, its several variants) only requires as weak conditions. It is certainly worthwhile to explore whether the assumptions in our sampling and approximation theorems can be relaxed either in general or for some special settings.

Another topic in this direction is the rate of convergence in the sampling and approximation theorems. While the current versions of our theorems have merely established some limits, the rate of convergence will certainly yield a more quantitative description of how fast those limits will be approached.

One can also consider generalizing these two theorems to general Gaussian channels~\cite{rozanov, hida93}. For this topic, note that there exist in-depth studies~\cite{hi74, ja74, hi75, ih80, ih85, ih90, ba91, ih94} on continuous-time point-to-point general Gaussian channels with possible feedback, for which  information-theoretic connections with the discrete-time setting are somehow lacking. A first step in this direction can be establishing sampling or approximation theorems for stationary Gaussian processes. Obviously, such theorems for stationary Gaussian processes can connect continuous-time stationary Gaussian channels to their discrete-time counterparts, for which the variational formulation of discrete-time stationary Gaussian feedback capacity in~\cite{kim10} proves to rather effective.

2) The second direction is to further explore the possible applications of our sampling and approximation theorems in the following respects.

We have shown that feedback may increase the capacity region of some continuous-time Gaussian BC, but the capacity regions of such channels remain unknown in general. An immediate problem is to explicitly find the exact capacity regions of continuous-time Gaussian BCs using the approach employed in this work, as we have done for continuous-time Gaussian MACs. Of course, further topics also include exploring whether the ideas and techniques in this paper can be applied to other families of continuous-time multi-user Gaussian channels with possible feedback.

So far we have implicitly assumed infinite bandwidth and average power constraints, but our theorems can certainly go beyond these assumptions. For instance, one can consider examining continuous-time Gaussian channels with both bandwidth limit and peak power constraint, which are more reasonable assumptions for many practical communication scenarios as they give a more accurate description of the limitations of the communication system. Little is known about the capacity of continuous-time Gaussian channels with such constraints except some upper and lower bounds established in~\cite{ozarow85, Shamai-David-1989}. In stark contrast, discrete-time peak power constrained channels (including, but not limited to Gaussian channels) have been better investigated: there has been a series of work on their capacity, such as~\cite{smith71, Shamai-David-1995, abou01, chan05, raginsky08, sharma10, fahs12, Dytso}, which feature relatively thorough discussions about different aspects of channel capacity including capacity achieving distribution, bounds and asymptotics of capacity, and numerical computation of capacity. An immediate question is to explore whether the approximation approach can translate the aforementioned existing results in discrete time, or more probably, help channel the ideas and techniques therein to the continuous-time setting. A next question is to explore whether there exists any randomized algorithm for computation of the capacity of such a channel, for which, as discussed in Remark~\ref{from-continuous-to-discrete}, we believe our sampling theorems can be particularly helpful in terms of numerically computing and optimizing the mutual information of a continuous-time Gaussian channel with bandwidth limit and peak power constraint.

\bigskip
{\bf Acknowledgement.} We would like to thank Ronit Bustin, Jun Chen, Young-Han Kim, Haim Permuter, Shlomo Shamai, Tsachy Weissman and Wenyi Zhang for insightful suggestions and comments, and for pointing out relevant references.

\section*{Appendices} \appendix

\section{Proof of Theorem~\ref{sampling-theorem-1}} \label{proof-sampling-theorem-1}

First of all, an application of Theorem $7.14$ of~\cite{li01} with Conditions (b) and (c) yields that
\begin{equation} \label{crazy}
P\left(\int_0^T \EX^2[g(t, W_0^t, Y_0^t)|Y_0^t] dt < \infty \right)=1.
\end{equation}
Then one verifies that the assumptions of Lemma $7.7$ of~\cite{li01} are all satisfied (this lemma is stated under very general assumptions, which are exactly Conditions (b), (c) and (\ref{crazy}) when restricted to our settings), which implies that for any $w$,
$$
\mu_Y \sim \mu_{Y|W=w} \sim \mu_B,
$$
where ``$\sim$'' is the standard notation for two measures being equivalent (i.e., one is absolutely continuous with respect to the other and vice versa), and moreover, with probability $1$,
{\small \begin{equation}  \label{RN-1}
\hspace{-0.5cm} \frac{d\mu_{Y|W}}{d\mu_B}(Y_0^T)=\frac{1}{\E[e^{-\int_0^T g(s) dY(s)+\frac{1}{2} \int_0^T g(s)^2 ds}|Y_0^T, W_0^T]}, \quad \frac{d\mu_{Y}}{d\mu_B}(Y_0^T)=\frac{1}{\E[e^{-\int_0^T g(s) dY(s)+\frac{1}{2} \int_0^T g(s)^2 ds}|Y_0^T]},
\end{equation}}
where we have rewritten $g(s, W_0^s, Y_0^s)$ as $g(s)$ for notational simplicity. Here we remark that $\E[e^{-\int_0^T g(s) dY(s)+\frac{1}{2} \int_0^T g(s)^2 ds}|Y_0^T, W_0^T]$ is in fact equal to $e^{-\int_0^T g(s) dY(s)+\frac{1}{2} \int_0^T g(s)^2 ds}$, but we keep it the way it is as above for an easy comparison.

Note that it follows from $\E[d\mu_B/d\mu_Y(Y_0^T)]=1$ that
$$
\E[e^{-\int_0^T g(s) dY(s)+\frac{1}{2} \int_0^T g(s)^2 ds}]=1,
$$
which is equivalent to
$$
\E[e^{-\int_0^T g(s) dB(s)-\frac{1}{2} \int_0^T g(s)^2 ds}]=1.
$$
Then, a parallel argument as in the proof of Theorem $7.1$ of~\cite{li01} further implies that for any $\Delta_n$, with probability $1$,
{\small \begin{equation}  \label{RN-2}
\hspace{-1.5cm} \frac{d\mu_{Y|W}}{d\mu_B}(Y(\Delta_n))=\frac{1}{\E[e^{-\int_0^T g(s) dY(s)+\frac{1}{2} \int_0^T g(s)^2 ds}|Y(\Delta_n), W_0^T]}, \quad \frac{d\mu_{Y}}{d\mu_B}(Y(\Delta_n))=\frac{1}{\E[e^{-\int_0^T g(s) dY(s)+\frac{1}{2} \int_0^T g(s)^2 ds}|Y(\Delta_n)]},
\end{equation}}
where we have defined
$$
\quad B(\Delta_n) \triangleq \{B(t_{n, 0}), B(t_{n, 1}), \dots, B(t_{n, n})\}, $$
and moreover,
$$
\frac{d\mu_{Y|W}}{d\mu_B}(Y(\Delta_n)) \triangleq \frac{d\mu_{Y(\Delta_n)|W}}{d\mu_{B(\Delta_n)}}(Y(\Delta_n)), \quad \frac{d\mu_{Y}}{d\mu_B}(Y(\Delta_n)) \triangleq
\frac{d\mu_{Y(\Delta_n)}}{d\mu_{B(\Delta_n)}}(Y(\Delta_n)).
$$
Then, by definition, we have
$$
I(W_0^T; Y(\Delta_n)) = \E\left[\log \frac{d\mu_{Y|W}}{d\mu_B}(Y(\Delta_n))\right]-\E\left[\log \frac{d\mu_{Y}}{d\mu_B}(Y(\Delta_n))\right].
$$
Notice that it can be easily checked that $e^{-\int_0^T g(s) dY(s)+\frac{1}{2} \int_0^T g(s)^2 ds}$ integrable, which, together with the fact that $\Delta_n \subset \Delta_{n+1}$ for all $n$, further implies that
$$
\E[e^{-\int_0^T g(s) dY(s)+\frac{1}{2} \int_0^T g(s)^2 ds}|Y(\Delta_n), W_0^T], \quad \E[e^{-\int_0^T g(s) dY(s)+\frac{1}{2} \int_0^T g(s)^2 ds}|Y(\Delta_n)]
$$
are both martingales, and therefore, by Doob's martingale convergence theorem~\cite{Durrett},
$$
\frac{d\mu_{Y|W}}{d\mu_B}(Y(\Delta_n)) \to \frac{d\mu_{Y|W}}{d\mu_B}(Y_0^T), \quad \frac{d\mu_{Y}}{d\mu_B}(Y(\Delta_n)) \to \frac{d\mu_{Y}}{d\mu_B}(Y_0^T), \mbox{ a.s.}
$$

Now, by Jensen's inequality, we have
\begin{equation} \label{xianming-inequality-1}
\E\left[ \left. -\int^T_0 g(s) dY(s)+\frac{1}{2}\int^T_0
g(s)^2ds \right| Y(\Delta_n), W_0^T \right] \leq \log\E[e^{-\int^T_0 g(s)
dY(s)+\frac{1}{2}\int^T_0 g(s)^2ds}|Y(\Delta_n), W_0^T],
\end{equation}
and, by the fact that $\log x \leq x$ for any $x > 0$, we have
\begin{equation} \label{xianming-inequality-2}
\log\E[e^{-\int^T_0 g(s) dY(s)+\frac{1}{2}\int^T_0 g(s)^2ds}|Y(\Delta_n), W_0^T] \leq
\E[e^{-\int^T_0 g(s) dY(s)+\frac{1}{2}\int^T_0 g(s)^2ds}|Y(\Delta_n), W_0^T].
\end{equation}
It then follows from (\ref{xianming-inequality-1}) and (\ref{xianming-inequality-2}) that
\begin{align*}
\left|\log\E[e^{-\int^T_0 g(s) dY(s)+\frac{1}{2}\int^T_0 g(s)^2ds}|Y(\Delta_n), W_0^T]\right| & \leq
\left|\E\left[\left. -\int^T_0 g(s) dY(s)+\frac{1}{2}\int^T_0 g(s)^2 ds \right|Y(\Delta_n), W_0^T \right]\right|\\
&+\E[e^{-\int^T_0 g(s) dY(s)+\frac{1}{2}\int^T_0 g(s)^2ds}|Y(\Delta_n), W_0^T].
\end{align*}
Applying the general Lebesgue dominated convergence theorem (see, e.g., Theorem $19$ on Page $89$ of~\cite{ro10}), we then have
$$
\lim_{n \to \infty} \E\left[\log \frac{d\mu_{Y|W}}{d\mu_B}(Y(\Delta_n))\right] = \E [\log  \E[e^{-\int^T_0 g(s) dY(s)+\frac{1}{2}\int^T_0 g(s)^2ds}|Y^{T}_{0}, W_0^T]] = \E\left[\log \frac{d\mu_{Y|W}}{d\mu_B}(Y_{0}^{T})\right].
$$
A completely parallel argument yields that
$$
\lim_{n \to \infty} \E\left[\log \frac{d\mu_{Y}}{d\mu_B}(Y(\Delta_n))\right] = \E[\log  \E[e^{-\int^T_0 g(s) dY(s)+\frac{1}{2}\int^T_0 g(s)^2ds}|Y^{T}_{0}]] = \E\left[\log \frac{d\mu_{Y}}{d\mu_B}(Y_{0}^{T})\right].
$$
So, with the definition
$$
I(W_0^T; Y_0^T)=\E\left[\log \frac{d\mu_{Y|W}}{d\mu_B}(Y_0^T)\right]-\E\left[\log \frac{d\mu_{Y}}{d\mu_B}(Y_0^T)\right],
$$
we conclude that
$$
\lim_{n \to \infty} I(W_0^T; Y(\Delta_n))=\E\left[\log \frac{d\mu_{Y|W}}{d\mu_B}(Y_0^T)\right]-\E\left[\log \frac{d\mu_{Y}}{d\mu_B}(Y_0^T)\right]=I(W_0^T;Y_0^T).
$$

\section{Proof of Lemma~\ref{improved-liptser-1}} \label{proof-improved-liptser-1}

With Conditions (d)-(f), the proof of the existence and uniqueness of the solution to (\ref{feedback-memory}) is somewhat standard; see, e.g., Section $5.4$ in~\cite{mao97}. So, in the following, we will only prove (\ref{exponential-finiteness-1}).

For the stochastic differential equation (\ref{feedback-memory}), applying Condition (e), we deduce that there exists $L_1 > 0$ such that
\begin{align*}
\|Y_0^T\| & \leq \int^T_0 L_1(1+\|W_0^t\| + \|Y_0^t\|) dt+ \|B_0^T\|\\
& \leq L_1 T+L_1 T \|W_0^T\| + \|B_0^T\| + \int_0^T L_1 \|Y_0^t\| dt.
\end{align*}
Then, applying the Gronwall inequality followed by a straightforward bounding analysis, we deduce that there exists $L_2 > 0$ such that
\begin{align*}
\|Y_0^T\| & \leq (L_1 T+L_1 T \|W_0^T\| + \|B_0^T\|) e^{\int_0^T L_1 dt} \\
& = e^{L_1 T} (L_1 T+L_1 T \|W_0^T\| + \|B_0^T\|)\\
& = L_2 + L_2 \|W_0^T\| + L_2 \|B_0^T\|.
\end{align*}
Now, for any $\varepsilon > 0$, applying Doob's submartingale inequality, we have
\begin{align*}
\E[e^{\varepsilon \|Y_0^T\|^2}] & \leq \E[e^{\varepsilon(L_2 + L_2 \|W_0^T\| + L_2 \|B_0^T\|)^2}] \\
&\leq \E[e^{3\varepsilon(L_2^2 + L_2^2 \|W_0^T\|^2 + L_2^2 \|B_0^T\|^2)}] \\
&=e^{3\varepsilon L_2^2} \E[e^{3 \varepsilon L_2^2 \|W_0^T\|^2}] \E[e^{3 \varepsilon L_2^2 \|B_0^T\|^2}]\\
&=e^{3\varepsilon L_2^2} \E[e^{3 \varepsilon L_2^2 \|W_0^T\|^2}] \E[\sup\nolimits_{0\leq t\leq T} e^{3 \varepsilon L_2^2 B(t)^2}]\\
&\leq 4 e^{3\varepsilon L_2^2} \E[e^{3 \varepsilon L_2^2 \|W_0^T\|^2}] \E[e^{3 \varepsilon L_2^2 B(T)^2}],
\end{align*}
which, by Condition (f), is finite provided that $\varepsilon$ is small enough.

\section{Proof of Theorem~\ref{sampling-theorem-2}}  \label{proof-sampling-theorem-2}

We proceed in the following steps.

{\bf Step $\bf 1$.} In this step, we establish the theorem assuming that there exists $C > 0$ such that for all $w_0^T \in C[0, T]$ and all $y_0^T \in C[0, T]$,
\begin{equation} \label{case-1}
\int_0^T g^2(s, w_0^s, y_0^s) ds < C.
\end{equation}
By the definition of mutual information, (\ref{RN-1}) and (\ref{RN-2}), we have
{\small \begin{align*}
I(W_0^T; Y(\Delta_n)) &= \E\left[\log \frac{d\mu_{Y|W}}{d\mu_B}(Y(\Delta_n))\right]-\E\left[\log \frac{d\mu_{Y}}{d\mu_B}(Y(\Delta_n))\right]\\
&= -\E[\log \E[e^{-\int_0^T g(s) dY(s)+\frac{1}{2} \int_0^T g(s)^2 ds}|Y(\Delta_n), W_0^T]] + \E[\log \E[e^{-\int_0^T g(s) dY(s)+\frac{1}{2} \int_0^T g(s)^2 ds}|Y(\Delta_n)]]\\
&= -\E[F_n]+\E[G_n],
\end{align*}}
where, for notational simplicity, we have rewritten $g(s, W_0^s, Y_0^s)$ as $g(s)$.

{\bf Step $\bf 1.1$.} In this step, we prove that as $n$ tends to infinity,
\begin{equation} \label{Fn-Conv}
F_n \to -\int_0^T g(s) dY(s)+\frac{1}{2} \int_0^T g(s)^2 ds,
\end{equation}
in probability.

Let $\bar{Y}_{\Delta_n, 0}^T$ denote the piecewise linear version of $Y_0^T$ with respect to $\Delta_n$; more precisely, for any $i=0, 1, \dots, n$, $\bar{Y}_{\Delta_n}(t_{n,i})=Y(t_{n,i})$, and for any $t_{n, i-1} < s < t_{n, i}$ with $s=\lambda t_{n, i-1}+(1-\lambda) t_{n, i}$ for some $0 < \lambda < 1$, $\bar{Y}_{\Delta_n}(s)=\lambda Y(t_{n, i-1})+(1-\lambda) Y(t_{n, i})$. Let $\bar{g}_{\Delta_n}(s, W_0^s, \bar{Y}_{\Delta_n, 0}^s)$ denote the piecewise ``flat'' version of $g(s, W_0^s, \bar{Y}_{\Delta_n, 0}^s)$ with respect to $\Delta_n$; more precisely, for any $t_{n, i-1} \leq s < t_{n, i}$, $\bar{g}_{\Delta_n}(s, W_0^s, \bar{Y}_{\Delta_n, 0}^s)=g(t_{n, i-1}, W_0^{t_{n, i-1}}, \bar{Y}_{\Delta_n, 0}^{t_{n, i-1}})$.

Rewriting $\bar{g}_{\Delta_n}(s, W_0^s, \bar{Y}_{\Delta_n, 0}^s)$ as $\bar{g}_{\Delta_n}(s)$, we have
\begin{align*}
F_n & = -\log \E[e^{-\int_0^T g(s) dY(s)+\frac{1}{2} \int_0^T g^2(s) ds}|Y(\Delta_n), W_0^T]\\
    & = -\log \E[e^{-\int_0^T \bar{g}_{\Delta_n}(s) dY(s)+\frac{1}{2} \int_0^T \bar{g}_{\Delta_n}^2(s) ds-\int_0^T (g(s)-\bar{g}_{\Delta_n}(s)) dY(s) + \frac{1}{2} \int_0^T (g^2(s)-\bar{g}_{\Delta_n}^2(s)) ds}|Y(\Delta_n), W_0^T]\\
    & = -\log e^{-\int_0^T \bar{g}_{\Delta_n}(s) dY(s)+\frac{1}{2} \int_0^T \bar{g}_{\Delta_n}^2(s) ds} \E[e^{-\int_0^T (g(s)-\bar{g}_{\Delta_n}(s)) dB(s) - \frac{1}{2} \int_0^T (g(s)-\bar{g}_{\Delta_n}(s))^2(s) ds}|Y(\Delta_n), W_0^T]\\
    & = -\int_0^T \bar{g}_{\Delta_n}(s) dY(s)-\frac{1}{2} \int_0^T \bar{g}_{\Delta_n}^2(s) ds -\log \E[e^{-\int_0^T (g(s)-\bar{g}_{\Delta_n}(s)) dB(s) - \frac{1}{2} \int_0^T (g(s)-\bar{g}_{\Delta_n}(s))^2 ds}|Y(\Delta_n), W_0^T],
\end{align*}
where we have used the fact that
$$
\E[e^{-\int_0^T \bar{g}_{\Delta_n}(s) dY(s)+\frac{1}{2} \int_0^T \bar{g}_{\Delta_n}^2(s) ds}|Y(\Delta_n), W_0^T]=e^{-\int_0^T \bar{g}_{\Delta_n}(s) dY(s)+\frac{1}{2}\int_0^T \bar{g}_{\Delta_n}^2(s) ds},
$$
since $\bar{g}_{\Delta_n}(s)$ is a function depending only on $W_0^T$ and $Y(\Delta_n)$.

We now prove the following convergence:
\begin{equation} \label{conv-1}
\E\left[\left( \left(-\int_0^T \bar{g}_{\Delta_n}(s) dY(s)-\frac{1}{2} \int_0^T \bar{g}_{\Delta_n}^2(s) ds\right)-\left(-\int_0^T g(s) dY(s)-\frac{1}{2} \int_0^T g^2(s) ds\right)\right)^2\right] \to 0,
\end{equation}
which will imply that
$$
-\int_0^T \bar{g}_{\Delta_n}(s) dY(s)-\frac{1}{2} \int_0^T \bar{g}_{\Delta_n}^2(s) ds \to -\int_0^T g(s) dY(s)-\frac{1}{2} \int_0^T g^2(s) ds
$$
in probability. Apparently, to prove (\ref{conv-1}), we only need to prove that
\begin{equation} \label{conv-1a}
\E\left[\left(-\int_0^T (g(s)-\bar{g}_{\Delta_n}(s)) dB(s) - \frac{1}{2} \int_0^T (g(s)-\bar{g}_{\Delta_n}(s))^2 ds \right)^2 \right] \to 0.
\end{equation}
To establish (\ref{conv-1a}), notice that, by the It\^{o} isometry~\cite{ok95}, we have
$$
\E\left[\left(\int_0^T (g(s)-\bar{g}_{\Delta_n}(s)) dB(s) \right)^2\right] = \E\left[\int_0^T (g(s)-\bar{g}_{\Delta_n}(s))^2 ds \right],
$$
which means we only need to prove that as $n \to \infty$,
\begin{equation} \label{conv-1b}
\E\left[\left(\int_0^T (g(s)-\bar{g}_{\Delta_n}(s))^2 ds\right)^2 \right] \to 0.
\end{equation}
To see this, we note that, by Conditions (d) and (e), there exists $L_1 > 0$ such that for any $s \in [0, T]$ with $t_{n, i-1} \leq s < t_{n, i}$,
\begin{align}
& \hspace{-1cm} |g(s, W_0^s, \bar{Y}_{\Delta_n, 0}^s)-\bar{g}_{\Delta_n}(s, W_0^s, \bar{Y}_{\Delta_n, 0}^s)| \nonumber \\
& = |g(s, W_0^s, \bar{Y}_{\Delta_n, 0}^s)-g(t_{n, i-1}, W_0^{t_{n, i-1}}, \bar{Y}_{\Delta_n, 0}^{t_{n, i-1}})| \nonumber\\
& \leq L_1 (|s-t_{n, i-1}| + \|W_0^s-W_0^{t_{n, i-1}}\|+\| \bar{Y}_{\Delta_n, 0}^s - \bar{Y}_{\Delta_n, 0}^{t_{n, i-1}} \|) \nonumber\\
& \leq L_1 (|s-t_{n, i-1}| + \|W_0^s-W_0^{t_{n, i-1}}\|+ |Y(t_{n, i})-Y(t_{n, i-1})|) \label{diff-1} \\
& \leq  L_1 \delta_{\Delta_n}+ L_1 \sup\nolimits_{r \in [t_{n, i-1}, t_{n, i}]} |W(r)-W(t_{n, i-1})| \nonumber \\
& +L_1 \delta_{\Delta_n}+L_1 \delta_{\Delta_n} \|W_0^T\|+L_1 \delta_{\Delta_n} \|Y_0^T\|+|B(t_{n, i})-B(t_{n, i-1})|. \label{diff-2}
\end{align}
Moreover, by Lemma~\ref{improved-liptser-1} and Condition (f), both $\|Y_0^T\|^4$ and $\|W_0^T\|^4$ are integrable. And furthermore, by Condition (f), we deduce that for any $t_{n, i-1} \leq s < t_{n, i}$,
\begin{equation} \label{conv-1c}
\E[\sup\nolimits_{r \in [t_{n, i-1}, t_{n, i}]} (W(r)-W(t_{n, i-1}))^4] \leq L_2 \delta_{\Delta_n}^2,
\end{equation}
for some $L_2 > 0$, and one easily verifies that
\begin{equation} \label{conv-1d}
\EX[(B(t^{(n)}_i)-B(t^{(n)}_{i-1}))^4] = 3 (t^{(n)}_i-t^{(n)}_{i-1})^2 \leq 3 \delta^2_{\Delta_n}.
\end{equation}
It can be readily checked that (\ref{diff-2}), (\ref{conv-1c}) and (\ref{conv-1d}) imply (\ref{conv-1b}), which in turn implies (\ref{conv-1}), as desired.

We now prove that as $n$ tends to infinity,
\begin{equation} \label{conv-2}
\E[|\E[e^{-\int_0^T (g(s)-\bar{g}_{\Delta_n}(s)) dB(s) - \frac{1}{2} \int_0^T (g(s)-\bar{g}_{\Delta_n}(s))^2
ds}|Y(\Delta_n), W_0^T]-1|] \to 0,
\end{equation}
which will imply that
$$
\log \E[e^{-\int_0^T (g(s)-\bar{g}(s)) dB(s) - \frac{1}{2} \int_0^T (g(s)-\bar{g}(s))^2 ds}|Y(\Delta_n), W_0^T] \to 0
$$
in probability and furthermore (\ref{Fn-Conv}). To establish (\ref{conv-2}), we first note that
{\small \begin{align*}
&\hspace{-5mm} \E[|\E[e^{-\int_0^T (g(s)-\bar{g}_{\Delta_n}(s)) dB(s) - \frac{1}{2} \int_0^T (g(s)-\bar{g}_{\Delta_n}(s))^2
ds}|Y(\Delta_n), W_0^T]-1|]\\
&\hspace{-5mm} \leq \E[\E[|e^{-\int_0^T (g(s)-\bar{g}_{\Delta_n}(s)) dB(s) - \frac{1}{2} \int_0^T (g(s)-\bar{g}_{\Delta_n}(s))^2 ds}-1||Y(\Delta_n), W_0^T]]\\
&\hspace{-5mm} = \E[|e^{-\int_0^T (g(s)-\bar{g}_{\Delta_n}(s)) dB(s) - \frac{1}{2} \int_0^T (g(s)-\bar{g}_{\Delta_n}(s))^2
ds}-1|]\\
&\hspace{-5mm} \leq \E\left[\left|-\int_0^T (g(s)-\bar{g}_{\Delta_n}(s)) dB(s) - \frac{1}{2} \int_0^T (g(s)-\bar{g}_{\Delta_n}(s))^2
ds \right| e^{|-\int_0^T (g(s)-\bar{g}_{\Delta_n}(s)) dB(s) - \frac{1}{2} \int_0^T (g(s)-\bar{g}_{\Delta_n}(s))^2
ds|}\right]\\
&\hspace{-5mm} \leq \E\left[\left|-\int_0^T (g(s)-\bar{g}_{\Delta_n}(s)) dB(s) - \frac{1}{2} \int_0^T (g(s)-\bar{g}_{\Delta_n}(s))^2
ds\right|^2\right] \E\left[e^{2 |-\int_0^T (g(s)-\bar{g}_{\Delta_n}(s)) dB(s) - \frac{1}{2} \int_0^T
(g(s)-\bar{g}_{\Delta_n}(s))^2 ds|}\right].
\end{align*}}
By (\ref{conv-1}), we have that as $n$ tends to infinity,
$$
\E\left[\left|-\int_0^T (g(s)-\bar{g}_{\Delta_n}(s)) dB(s) - \frac{1}{2} \int_0^T (g(s)-\bar{g}_{\Delta_n}(s))^2
ds\right|^2\right] \to 0.
$$
It then follows that, to prove (\ref{conv-2}), we only need to prove that if $\delta_{\Delta_n}$ is small enough,
\begin{equation} \label{conv-2a}
\E\left[e^{2 |-\int_0^T (g(s)-\bar{g}_{\Delta_n}(s)) dB(s) - \frac{1}{2} \int_0^T (g(s)-\bar{g}_{\Delta_n}(s))^2 ds|}\right] < \infty.
\end{equation}
Since
$$
\E\left[e^{2 |-\int_0^T (g(s)-\bar{g}_{\Delta_n}(s)) dB(s) - \frac{1}{2} \int_0^T (g(s)-\bar{g}_{\Delta_n}(s))^2 ds|}\right]
$$
\begin{equation} \label{two-terms}
\leq \E\left[e^{2 (-\int_0^T (g(s)-\bar{g}_{\Delta_n}(s)) dB(s) - \frac{1}{2} \int_0^T (g(s)-\bar{g}_{\Delta_n}(s))^2 ds)}\right]+ \E\left[e^{2( \int_0^T (g(s)-\bar{g}_{\Delta_n}(s)) dB(s) + \frac{1}{2} \int_0^T (g(s)-\bar{g}_{\Delta_n}(s))^2 ds)}\right],
\end{equation}
we only have to prove that the two terms in the above upper bound are both finite provided that $\delta_{\Delta_n}$ is small enough. Note that for the first term, applying the Cauchy-Schwarz inequality, we have
\begin{align*}
\hspace{-1cm} \E[e^{2 (-\int_0^T (g(s)-\bar{g}_{\Delta_n}(s)) dB(s) - \frac{1}{2} \int_0^T (g(s)-\bar{g}_{\Delta_n}(s))^2 ds)}] & = \E[e^{ \int_0^T 2 (g(s)-\bar{g}_{\Delta_n}(s)) dB(s) - \int_0^T 4 (g(s)-\bar{g}_{\Delta_n}(s))^2 ds+3 \int_0^T (g(s)-\bar{g}_{\Delta_n}(s))^2 ds}]\\
& \leq \E[e^{ \int_0^T 4 (g(s)-\bar{g}_{\Delta_n}(s)) dB(s) - \int_0^T 8 (g(s)-\bar{g}_{\Delta_n}(s))^2 ds}] \E[e^{6 \int_0^T (g(s)-\bar{g}_{\Delta_n}(s))^2 ds}].
\end{align*}
It is well known that an application of Fatou's lemma yields that
\begin{equation}  \label{Fatou-Lemma}
\E[e^{ \int_0^T 4 (g(s)-\bar{g}_{\Delta_n}(s)) dB(s) - \int_0^T 8 (g(s)-\bar{g}_{\Delta_n}(s))^2 ds}] \leq 1,
\end{equation}
and by (\ref{diff-2}), we deduce that there exists $L_3 > 0$ such that
$$
\hspace{-1.5cm} \E[e^{6 \int_0^T (g(s)-\bar{g}_{\Delta_n}(s))^2 ds}] \leq e^{L_3 \delta_{\Delta_n}^2} \E[e^{L_3 \|B_0^{\delta_{\Delta_n}}\|^2}] \E[e^{L_3 \delta^2_{\Delta_n} \|Y_0^T\|^2}] \E[e^{L_3 \delta^2_{\Delta_n} \|W_{0}^{T}\|^2}] \E[e^{L_3 \sup_{|s-t| \leq \delta_{\Delta_n}} |W(s)-W(t)|^2}].
$$
Note that it follows from Doob's submartingale inequality that if $\delta_{\Delta_n}$ is small enough,
$$
\E[e^{L_3 \|B_0^{\delta_{\Delta_n}}\|^2}] < \infty,
$$
and by Lemma~\ref{improved-liptser-1}, we also deduce that if $\delta_{\Delta_n}$ is small enough,
$$
\E[e^{L_3 \delta^2_{\Delta_n} \|Y_0^T\|^2}] < \infty,
$$
which, together with Condition (f), yields that for the first term in (\ref{two-terms})
\begin{equation} \label{SongJianMethod}
\E[e^{2 (-\int_0^T (g(s)-\bar{g}_{\Delta_n}(s)) dB(s) - \frac{1}{2} \int_0^T (g(s)-\bar{g}_{\Delta_n}(s))^2 ds)}] < \infty.
\end{equation}
A completely parallel argument will yield that for the second term in (\ref{two-terms})
$$
\E\left[e^{(2 \int_0^T (g(s)-\bar{g}_{\Delta_n}(s)) dB(s) + \frac{1}{2} \int_0^T (g(s)-\bar{g}_{\Delta_n}(s))^2 ds)}\right] < \infty,
$$
which, together with (\ref{SongJianMethod}), immediately implies (\ref{conv-2a}), which in turn implies (\ref{conv-2}), as desired.

{\bf Step $\bf 1.2$.} In this step, we prove that as $n$ tends to infinity,
\begin{equation} \label{Gn-Conv}
G_n \to \log \E[e^{-\int_0^T g(s) dY(s)+\frac{1}{2} \int_0^T g(s)^2 ds}|Y_0^T],
\end{equation}
in probability.

First, note that by Theorem $7.23$ of~\cite{li01}, we have,
$$
\frac{d \mu_{Y}}{d\mu_B}(Y_0^T)  = \int \frac{d \mu_{Y|W=w}}{d\mu_B}(Y_0^T) d\mu_W(w),
$$
where
$$
\frac{d \mu_{Y|W=w}}{d\mu_B}(Y_0^T) = e^{\int_0^T g(w_0^s) dY(s)-\frac{1}{2} \int_0^T g^2(w_0^s) ds},
$$
where we have rewritten $g(s, w_0^s, Y_0^s)$ as $g(w_0^s)$ for notational simplicity. It then follows from (\ref{RN-2}) that
\begin{align*}
\log \E[e^{-\int_0^T g(s) dY(s)+\frac{1}{2} \int_0^T g^2(s) ds}|Y_0^T] & = -\log \int \frac{d \mu_{Y|W=w}}{d\mu_B}(Y_0^T) d\mu_W(w)\\
& = -\log \int e^{\int_0^T g(w_0^s) dY(s)-\frac{1}{2} \int_0^T g^2(w_0^s) ds} d \mu_W(w).
\end{align*}
Similarly, we have
\begin{align*}
\frac{d \mu_{Y}}{d\mu_B}(Y(\Delta_n)) &= \int \frac{d \mu_{Y|W=w}}{d\mu_B}(Y(\Delta_n)) d\mu_W(w)\\
                                      &= \int \frac{1}{\E[e^{-\int_0^T g(s) dY(s)+\frac{1}{2} \int_0^T g^2(s) ds}|Y(\Delta_n), W]|_{W=w}} d\mu_W(w).
\end{align*}
It then again follows from (\ref{RN-2}) that
\begin{align*}
G_n & =  \log \E[e^{-\int_0^T g(s) dY(s)+\frac{1}{2} \int_0^T g^2(s) ds}|Y(\Delta_n)]\\
    & = -\log \int \frac{1}{\E[e^{-\int_0^T g(s) dY(s)+\frac{1}{2} \int_0^T g^2(s) ds}|Y(\Delta_n), W]|_{W=w}} d\mu_W(w).
\end{align*}

Now, we consider the following difference:
\begin{align*}
&\hspace{-1cm} \int e^{\int_0^T g(w_0^s) dY(s)-\frac{1}{2} \int_0^T g^2(w_0^s) ds} d\mu_W(w)-\int \frac{1}{\left.\E\left[ \left. e^{-\int_0^T g(w_0^s) dY(s)+\frac{1}{2} \int_0^T g^2(w_0^s) ds} \right| Y(\Delta_n), W\right]\right|_{W=w}} d\mu_W(w)\\
&=\int e^{\int_0^T g(w_0^s)dY(s)-\frac{1}{2} \int_0^T g^2(w_0^s)ds}-e^{\int_0^T \bar{g}_{\Delta_n}(w_0^s) dY(s)-\frac{1}{2} \int_0^T \bar{g}_{\Delta_n}(w_0^s)^2 ds} d\mu_W(w) \\
&+\int e^{\int_0^T \bar{g}_{\Delta_n}(w_0^s) dY(s)-\frac{1}{2} \int_0^T \bar{g}_{\Delta_n}(w_0^s)^2 ds} \\
&\hspace{1cm} \times \frac{\E[e^{-(\int_0^T g(w_0^s)dY(s)-\frac{1}{2} \int_0^T g^2(w_0^s)ds)+(\int_0^T \bar{g}_{\Delta_n}(w_0^s) dY(s)-\frac{1}{2} \int_0^T \bar{g}_{\Delta_n}^2(w_0^s) ds)}|Y(\Delta_n), W]|_{W=w}-1}{\E[e^{-(\int_0^T g(w_0^s)dY(s)-\frac{1}{2} \int_0^T g^2(w_0^s)ds)+(\int_0^T \bar{g}_{\Delta_n}(w_0^s) dY(s)-\frac{1}{2} \int_0^T \bar{g}_{\Delta_n}^2(w_0^s) ds)}|Y(\Delta_n), W]|_{W=w}} d\mu_W(w)\\
&=I_n+J_n.
\end{align*}
Applying the inequality that for any $x, y \in \mathbb{R}$,
\begin{equation} \label{expo-inequality}
|e^x-e^y|=|e^y(e^{x-y}-1)| \leq e^y (|x-y| e^{x-y}+ |x-y| e^{y-x}) = |x-y| (e^x+e^{2y-x}),
\end{equation}
we have
\begin{align*}
\E[|I_n|] & \leq \int \E\left[\left|e^{\int_0^T g(w_0^s)dY(s)-\frac{1}{2} \int_0^T g^2(w_0^s)ds}-e^{\int_0^T \bar{g}_{\Delta_n}(w_0^s) dY(s)-\frac{1}{2} \int_0^T \bar{g}_{\Delta_n}^2(w_0^s) ds} \right| \right] d\mu_W(w)\\
& \hspace{-1cm} \leq \int \E\left[\left| \int_0^T g(w_0^s)-\bar{g}_{\Delta_n}(w_0^s) dY(s)-\frac{1}{2} \int_0^T g^2(w_0^s)-\bar{g}_{\Delta_n}^2(w_0^s) ds\right| \right.\\
& \hspace{-1cm} \times \left. \left(e^{\int_0^T g(w_0^s)dY(s)-\frac{1}{2} \int_0^T g^2(w_0^s)ds}+e^{(2\int_0^T \bar{g}_{\Delta_n}(w_0^s) dY(s)-\int_0^T \bar{g}_{\Delta_n}^2(w_0^s) ds)-(\int_0^T g(w_0^s)dY(s)-\frac{1}{2} \int_0^T g^2(w_0^s)ds)} \right) \right] d\mu_W(w)\\
& \hspace{-1cm} \leq \int \E\left[\left| \int_0^T (g(w_0^s)-\bar{g}_{\Delta_n}(w_0^s)) dB(s) \right|+\left|\int_0^T (g(w_0^s)-\bar{g}_{\Delta_n}(w_0^s))(g(s)-\frac{1}{2}g(w_0^s)-\frac{1}{2}\bar{g}_{\Delta_n}(w_0^s)) ds\right| \right.\\
& \hspace{-1cm} \times \left. \left(e^{\int_0^T g(w_0^s)dY(s)-\frac{1}{2} \int_0^T g^2(w_0^s)ds}+e^{(2\int_0^T \bar{g}_{\Delta_n}(w_0^s) dY(s)-\int_0^T \bar{g}_{\Delta_n}^2(w_0^s) ds)-(\int_0^T g(w_0^s)dY(s)-\frac{1}{2} \int_0^T g^2(w_0^s)ds)} \right) \right] d\mu_W(w)\\
& \hspace{-1cm} \leq \int \E\left[\left| \int_0^T (g(w_0^s)-\bar{g}_{\Delta_n}(w_0^s)) dB(s) \right|+ (L \delta_{\Delta_n}+ L \sup_{|s-t| \leq \delta_{\Delta_n}}|w(s)-w(t)|+L \delta_{\Delta_n} \right.\\
& \hspace{-1cm} \left.+L \delta_{\Delta_n} \|w_0^T\|+L \delta_{\Delta_n} \|Y_0^T\|+\sup_{|s-t| \leq \delta_{\Delta_n}} |B(s)-B(t)|) \left(\int_0^T \left| g(s)-\frac{1}{2}g(w_0^s)-\frac{1}{2}\bar{g}_{\Delta_n}(w_0^s) \right| ds\right) \right.\\
& \hspace{-1cm} \times \left. \left(e^{\int_0^T g(w_0^s)dY(s)-\frac{1}{2} \int_0^T g^2(w_0^s)ds}+e^{(2\int_0^T \bar{g}_{\Delta_n}(w_0^s) dY(s)-\int_0^T \bar{g}_{\Delta_n}^2(w_0^s) ds)-(\int_0^T g(w_0^s)dY(s)-\frac{1}{2} \int_0^T g^2(w_0^s)ds)} \right) \right] d\mu_W(w).
\end{align*}
Now, using (\ref{diff-2}), Condition (f) and the It\^{o} isometry, we deduce that as $n \to \infty$,
\begin{equation} \label{prev-1}
\int \E\left[\left| \int_0^T g(w_0^s)-\bar{g}_{\Delta_n}(w_0^s) dB(s)\right|^2 \right] d\mu_{W}(w) \to 0,
\end{equation}
and as $n$ tends to infinity,
$$
\int \E[(L \delta_{\Delta_n}+ L \sup_{|s-t| \leq \delta_{\Delta_n}}|w(s)-w(t)|+L \delta_{\Delta_n}
$$
\begin{equation} \label{prev-2}
+L \delta_{\Delta_n} \|w_0^T\| +L \delta_{\Delta_n} \|Y_0^T\|+\sup_{|s-t| \leq \delta_{\Delta_n}} |B(s)-B(t)|)^2] d\mu_{W}(w) \to 0.
\end{equation}
Now, using a similar argument as above with (\ref{case-1}) and Lemma~\ref{improved-liptser-1}, we can show that for any constant $K$,
\begin{equation} \label{prev-3}
\E[e^{\int_0^T K \bar{g}_{\Delta_n}^2(s) ds}] = \E[e^{\int_0^T K (\bar{g}_{\Delta_n}(s)-g(s)+g(s))^2 ds}] = \E[e^{\int_0^T K (2(\bar{g}_{\Delta_n}(s)-g(s))^2+2g^2(s)) ds}]< \infty,
\end{equation}
provided that $n$ is large enough, which, coupled with a similar argument as in the derivation of (\ref{SongJianMethod}), proves that for $n$ large enough,
\begin{equation}  \label{prev-4}
\hspace{-1cm} \int \E\left[ \left(e^{\int_0^T g(w_0^s)dY(s)-\frac{1}{2} \int_0^T g^2(w_0^s)ds}+e^{(2\int_0^T \bar{g}(w_0^s) dY(s)-\int_0^T \bar{g}^2(w_0^s) ds)-(\int_0^T g(w_0^s)dY(s)-\frac{1}{2} \int_0^T g^2(w_0^s)ds)} \right)^2 \right] d\mu_W(w) < \infty,
\end{equation}
and furthermore
$$
\int \left(\int_0^T \left| g(s)-\frac{1}{2}g(w_0^s)-\frac{1}{2}\bar{g}_{\Delta_n}(w_0^s) \right| ds\right)^2
$$
\begin{equation} \label{prev-5}
\hspace{-5mm} \times \left(e^{\int_0^T g(w_0^s)dY(s)-\frac{1}{2} \int_0^T g^2(w_0^s)ds}+e^{(2\int_0^T \bar{g}_{\Delta_n}(w_0^s) dY(s)-\int_0^T \bar{g}_{\Delta_n}^2(w_0^s) ds)-(\int_0^T g(w_0^s)dY(s)-\frac{1}{2} \int_0^T g^2(w_0^s)ds)} \right)^2 d\mu_W(w) < \infty,
\end{equation}
which further implies that as $n$ tends to infinity,
\begin{equation}  \label{S-prime}
\E[|I_n|] \to 0.
\end{equation}

Now, using the shorthand notations $A_n$, $A$ for $\int_0^T \bar{g}_{\Delta_n}(w_0^s) dY(s)-\frac{1}{2} \int_0^T \bar{g}_{\Delta_n}(w_0^s)^2 ds$, $\int_0^T g(w_0^s) dY(s)-\frac{1}{2} \int_0^T g(w_0^s)^2 ds$ respectively, we have
\begin{align*}
\E[|J_n|] & =\E\left[\left| \int e^{A_n} \frac{\E[e^{-A+A_n}|Y(\Delta_n), W]|_{W=w}-1}{\E[e^{-A+A_n}|Y(\Delta_n), W]|_{W=w}} d\mu_W(w) \right| \right] \\
& =\E\left[\left|\int \frac{\E[e^{-A+A_n}-1|Y(\Delta_n), W]|_{W=w}}{\E[e^{-A}|Y(\Delta_n), W]|_{W=w}} d\mu_W(w) \right|\right] \\
& \leq \E\left[\int \frac{\E[|e^{-A+A_n}-1||Y(\Delta_n), W]|_{W=w}}{\E[e^{-A}|Y(\Delta_n), W]|_{W=w}} d\mu_W(w)\right] \\
& \leq \E\left[\int \E[|A-A_n| e^{|A-A_n|} | Y(\Delta_n), W]|_{W=w} \E[e^{A}|Y(\Delta_n), W]|_{W=w} d\mu_W(w)\right]\\
& = \E\left[\E[|A-A_n| e^{|A-A_n|} | Y(\Delta_n), W] \E[e^{A}|Y(\Delta_n), W] \left(\frac{d\mu_Y}{d\mu_B}(Y_0^T)\right)/\left(\frac{d\mu_{Y|W}}{d\mu_B}(Y_0^T)\right)\right]\\
& = \E\left[|A-A_n| e^{|A-A_n|} \E[e^{A}|Y(\Delta_n), W] \E\left[\left(\frac{d\mu_Y}{d\mu_B}(Y_0^T)\right)/\left(\frac{d\mu_{Y|W}}{d\mu_B}(Y_0^T)\right) | Y(\Delta_n), W \right]\right].
\end{align*}
Now, a similar argument as in (\ref{prev-1})-(\ref{prev-5}), together with the well-known fact (see, e.g., Theorem $6.2.2$ in~\cite{ih93}) that
$$
\frac{d\mu_Y}{d\mu_B}(Y_0^T)=e^{\int_0^T \hat{g}(s) dY(s)-\frac{1}{2} \int_0^T \hat{g}^2(s) ds}, \quad \frac{d\mu_{Y|W}}{d\mu_B}(Y_0^T)=e^{\int_0^T g(s) dY(s)-\frac{1}{2} \int_0^T g^2(s) ds},
$$
where $\hat{g}(s)=\E[g(s)|Y_0^T]$, yields that
$$
\E[|J_n|] \to 0.
$$
Now, we are ready to conclude that as $n$ tends to infinity,
$$
\E[e^{-\int_0^T g(s) dY(s)+\frac{1}{2} \int_0^T g^2(s) ds}|Y(\Delta_n)] \to \E[e^{-\int_0^T g(s) dY+\frac{1}{2} \int_0^T g^2(s) ds}|Y_0^T]
$$
in probability and furthermore (\ref{Gn-Conv}), as desired.

{\bf Step $\bf 1.3$.} In this step, we show the convergence of $\{\E[F_n]\}$ and $\{\E[G_n]\}$ and further establish the theorem under the condition (\ref{case-1}).

Now, using the concavity of the $\log$ function and the fact that $\log x \leq x$, we can obtain the upper bounds and lower bounds of $F_n$ and $G_n$ as follows:
$$
F_n \leq \left|\E\left[\left.-\int_0^T g(s) dY(s)+\frac{1}{2} \int_0^T g^2(s) ds \right| Y_{\Delta_n}, W_0^T \right]\right|+\E\left[\left. e^{-\int_0^T g(s) dY(s)+\frac{1}{2} \int_0^T g^2(s) ds}\right|Y_{\Delta_n}, W_0^T \right],
$$
$$
F_n \geq -\left| \E\left[\left.-\int_0^T g(s) dY(s)+\frac{1}{2} \int_0^T g^2(s) ds \right|Y_{\Delta_n}, W_0^T \right] \right|- \E\left[\left.e^{-\int_0^T g(s) dY(s)+\frac{1}{2} \int_0^T g^2(s) ds} \right| Y_{\Delta_n}, W_0^T \right],
$$
and
$$
G_n \leq \left|\E\left[\left.-\int_0^T g(s) dY(s)+\frac{1}{2} \int_0^T g^2(s) ds \right| Y_{\Delta_n} \right]\right|+\E\left[\left. e^{-\int_0^T g(s) dY(s)+\frac{1}{2} \int_0^T g^2(s) ds}\right|Y_{\Delta_n}\right],
$$
$$
G_n \geq -\left| \E\left[\left.-\int_0^T g(s) dY(s)+\frac{1}{2} \int_0^T g^2(s) ds \right|Y_{\Delta_n}\right] \right|- \E\left[\left.e^{-\int_0^T g(s) dY(s)+\frac{1}{2} \int_0^T g^2(s) ds} \right| Y_{\Delta_n}\right].
$$
And furthermore, using a similar argument as in {\bf Step} $\bf 1.1$, we can show that as $n$ tends to infinity,
\begin{align*}
\E\left[\left.-\int_0^T g(s) dY(s)+\frac{1}{2} \int_0^T g^2(s) ds \right| Y(\Delta_n), W_0^T \right] & =\left(-\int_0^T \bar{g}_{\Delta_n}(s) dY(s)+\frac{1}{2} \int_0^T \bar{g}_{\Delta_n}^2(s) ds\right)\\
& \hspace{-6cm} \times \E\left[\left.-\int_0^T (g(s)- \bar{g}_{\Delta_n})(s) dY(s)+\frac{1}{2} \int_0^T (g^2(s)-\bar{g}_{\Delta_n}^2(s)) ds \right| Y(\Delta_n), W_0^T \right]\\
& \hspace{-6cm} \to \left(-\int_0^T g(s) dY(s)+\frac{1}{2} \int_0^T g^2(s) ds\right)
\end{align*}
and
\begin{align*}
&\hspace{-1cm} \E[e^{-\int_0^T g(s) dY(s)+\frac{1}{2} \int_0^T g^2(s) ds}|Y(\Delta_n), W_0^T] \\
&= \E[e^{-\int_0^T \bar{g}_{\Delta_n}(s) dY(s)+\frac{1}{2} \int_0^T \bar{g}^2_{\Delta_n}(s) ds-\int_0^T (g(s)-\bar{g}_{\Delta_n}(s)) dY(s) + \frac{1}{2} \int_0^T (g^2(s)-\bar{g}_{\Delta_n}^2(s)) ds}|Y(\Delta_n), W_0^T]\\
&= e^{-\int_0^T \bar{g}_{\Delta_n}(s) dY(s)+\frac{1}{2} \int_0^T \bar{g}^2_{\Delta_n}(s) ds} \E[e^{-\int_0^T (g(s)-\bar{g}_{\Delta_n}(s)) dB(s) - \frac{1}{2} \int_0^T (g(s)-\bar{g}_{\Delta_n}(s))^2 ds}|Y(\Delta_n), W_0^T]\\
& \to e^{-\int_0^T g(s) dY(s)+\frac{1}{2} \int_0^T g^2(s) ds}.
\end{align*}
It then follows from the general Lebesgue dominated convergence theorem that
$$
\lim_{n \to \infty} \E[F_n] \to \E\left[-\int_0^T g(s) dY(s)+\frac{1}{2} \int_0^T g(s)^2 ds\right].
$$
A parallel argument can be used to show that
$$
\lim_{n \to \infty} \E[G_n] = \E[\log \E[e^{-\int_0^T g(s) dY(s)+\frac{1}{2} \int_0^T g(s)^2 ds}|Y_0^T]].
$$
So, under the condition (\ref{case-1}), we have shown that
$$
\lim_{n \to \infty} I(W_0^T; Y(\Delta_n))=I(W_0^T; Y_0^T).
$$

{\bf Step $\bf 2$.} In this step, we will use the convergence in \textbf{Step} $\bf 1$ and establish the theorem without the condition (\ref{case-1}).

Following Page $264$ of~\cite{li01}, we define, for any $k$,
\begin{equation} \label{stopping-time}
\tau_k = \begin{cases}
\inf \{t \leq T: \int_0^t g^2(s, W_0^s, Y_0^s) ds \geq k\}, \mbox{ if } \int_0^T g^2(s, W_0^s, Y_0^s) ds \geq k\\
T, \mbox{ if } \int_0^T g^2(s, W_0^s, Y_0^s) ds < k.
\end{cases}
\end{equation}
Then, we again follow~\cite{li01} and define a truncated version of $g$ as follows:
\begin{equation*}
g_{(k)}(t, \gamma_0^t, \phi_0^t)=g(t, \gamma_0^t, \phi_0^t) \mathbf{1}_{\int_0^t g^2(s, \gamma_0^t, \phi_0^s) ds < k}.
\end{equation*}
Now, define a truncated version of $Y$ as follows:
\begin{equation*}
Y_{(k)}(t)=\rho \int_0^t g_{(k)}(s, W_0^s, Y_0^s) ds +B(t), \quad t \in [0, T],
\end{equation*}
which, as elaborated on Page $265$ in~\cite{li01}, can be rewritten as
\begin{equation} \label{fixed-n}
Y_{(k)}(t)=\rho \int_0^t g_{(k)}(s, W_0^s, Y_{(k), 0}^s) ds +B(t), \quad t \in [0, \tau_k \wedge T].
\end{equation}
Note that for fixed $k$, the system in (\ref{fixed-n}) satisfies the condition (\ref{case-1}), and so the theorem holds true. To be more precise, note that
$$
I(W_0^T; Y_0^{\tau_k})=\E\left[\log \frac{d\mu_{\tau_k, Y|W}}{d\mu_{\tau_k, B}}(Y_0^{\tau_k})\right]-\E\left[\log \frac{d\mu_{\tau_k, Y}}{d\mu_{\tau_k, B}}(Y_0^{\tau_k})\right],
$$
where $\mu_{\tau_k, Y}$ and $\mu_{\tau_k, B}$ respectively denote the truncated versions of $\mu_{Y}$ and $\mu_{B}$ (from time $0$ to time $\tau_n$). Applying Theorem $7.10$ in~\cite{li01}, we obtain
$$
\frac{d\mu_{\tau_k, Y|W}}{d\mu_{\tau_k, B}}(Y_0^{\tau_n})=e^{\int_0^{\tau_k} g(s) dY(s)-\frac{1}{2} \int_0^{\tau_k} g^2(s)ds},
$$
and
$$
\frac{d\mu_{\tau_k, Y}}{d\mu_{\tau_k, B}}(Y_0^{\tau_k})=e^{\int_0^{\tau_k} \hat{g}(s) dY(s)-\frac{1}{2} \int_0^{\tau_k} \hat{g}^2(s) ds},
$$
where
$$
\hat{g}(s)=\E[g(s, W_0^s, Y_0^s)|Y_0^s].
$$
It then follows that
$$
I(W_0^T; Y_0^{\tau_k}) =\frac{1}{2} \E\left[\int_0^{\tau_k} (g(s)-\hat{g}(s))^2 ds\right].
$$
Notice that it can be easily verified that $\tau_k \to T$ as $k$ tends to infinity, which, together with the monotone convergence theorem, further yields that monotone increasingly,
$$
I(W_0^T; Y_0^{\tau_k})=\frac{1}{2} \E\left[\int_0^{\tau_k} (g(s)-\hat{g}(s))^2 ds\right] \to I(W_0^T; Y_0^T)=\frac{1}{2} \E\left[\int_0^T (g(s)-\hat{g}(s))^2 ds\right],
$$
as $k$ tends to infinity. By \textbf{Step} $\bf 1$, for any fixed $k_i$,
$$
\lim_{n \to \infty} I(W_0^T; Y(\Delta_n \cap [0, \tau_{k_i}]))=I(W_0^T; Y_0^{\tau_{k_i}}),
$$
which means that there exists a sequence $\{n_i\}$ such that, as $i$ tends to infinity, we have, monotone increasingly,
$$
I(W_0^T; Y(\Delta_{n_i} \cap [0, \tau_{k_i}])) \to I(W_0^T; Y_{0}^{T}).
$$
Since, by the fact that $Y_0^{\tau_k}$ coincides with $Y_0^T$ on the interval $[0, \tau_k \wedge T]$, we have
$$
I(W_0^T; Y(\Delta_{n_i})) \geq I(W_0^T; Y(\Delta_{n_i} \cap [0, \tau_{k_i}])).
$$
Now, using the fact that
$$
I(W_0^T; Y(\Delta_{n_i})) \leq I(W_0^T; Y_0^T),
$$
we conclude that as $i$ tends to infinity,
$$
\lim_{i \to \infty} I(W_0^T; Y(\Delta_{n_i})) = I(W_0^T; Y_{0}^T).
$$
A similar argument can be readily applied to any subsequence of $\{I(W_0^T; Y(\Delta_n))\}$, which will establish the existence of its further subsubsequence that converges to $I(W_0^T; Y_0^T)$, which implies that
$$
\lim_{n \to \infty} I(W_0^T; Y(\Delta_n))=I(W_0^T; Y_0^T).
$$
The proof of the theorem is then complete.

\begin{rem}  \label{mmse-1}
The arguments in the proof of Theorem~\ref{sampling-theorem-2} can be adapted to yield a sampling theorem for continuous-time minimum mean square error (MMSE), a quantity of central importance in estimation theory.

More precisely, consider the following continuous-time Gaussian feedback channel under the assumptions of Theorem~\ref{sampling-theorem-2}:
$$
Y(t)=\int_0^t X(s, M, Y_0^{s}) ds + B(t), \quad t \in [0, T].
$$
The MMSE is the limit of the MMSE based on the samples with respect to $\Delta_n$, namely,
$$
\int_0^T \E[(X(s)-\E[X(s)|Y_0^T])^2] ds=\lim_{n \to \infty} \int_0^T \E[(X(s)-\E[X(s)|Y(\Delta_n))^2] ds.
$$
To see this, note that the above-mentioned convergence follows from the fact that
$$
\E[\E^2[X(s)|Y_0^T]]=\E\left[\left(\frac{\int X(s, m_0^s, Y_0^s) d\mu_{Y|M=m}(Y_0^T)/d\mu_B d\mu_M(m)}{d\mu_{Y}(Y_0^T)/d\mu_B}\right)^2\right],
$$
and
$$
\E[\E^2[X(s)|Y(\Delta_n)]]=\E\left[\left(\frac{\int X(s, m_0^s, Y_0^s) d\mu_{Y|M=m}(Y(\Delta_n))/d\mu_{B} d\mu_M(m)}{d\mu_{Y}(Y(\Delta_n))/d\mu_{B}}\right)^2\right],
$$
and the proven fact that $d\mu_{Y}(Y(\Delta_n))/d\mu_{B}$ and $d\mu_{Y|M}(Y(\Delta_n))/d\mu_{B}$ respectively converge to $d\mu_{Y}(Y_{0}^T)/d\mu_{B} \mbox{ and } d\mu_{Y|M}(Y_{0}^T)/d\mu_{B}$ and a parallel argument as in establishing the convergence of $\{\E[F_n]\}$ and $\{\E[G_n]\}$ in the proof of Theorem~\ref{sampling-theorem-2}.

Similarly, we can also conclude that under the assumptions of Theorem~\ref{sampling-theorem-2}, the causal MMSE is the limit of the sampled causal MMSE, namely,
$$
\int_0^T \E[(X(s)-\E[X(s)|Y_0^s])^2] ds=\lim_{n \to \infty} \int_0^T \E[(X(s)-\E[X(s)|Y(\Delta_n \cap [0, s])])^2] ds.
$$
\end{rem}

\section{Proof of Theorem~\ref{approximation-theorem-1}} \label{proof-approximation-theorem-1}

In this section, we give the detailed proof of Theorem~\ref{approximation-theorem-1}.

We will first need the following lemma, which is parallel to Lemma~\ref{improved-liptser-1}.
\begin{lem} \label{improved-liptser-2}
Assume Conditions (d)-(f). Then, there exists $\varepsilon > 0$ and a constant $C > 0$ such that for all $n$,
\begin{equation} \label{exponential-finiteness-2}
\E [e^{\varepsilon \|Y_0^{(n), T}\|^2}] < C.
\end{equation}
\end{lem}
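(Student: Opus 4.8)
The plan is to mirror the proof of Lemma~\ref{improved-liptser-1}, but to track carefully that every constant produced is independent of $n$; the whole point is that the discretization does not spoil this uniformity. First I would unwind the recursion (\ref{Euler-Maruyama-Sequence}): summing over $i$, the Brownian increments telescope, so that for each sample point
$$
Y^{(n)}(t_{n,k})=\sum_{i=0}^{k-1} \int_{t_{n,i}}^{t_{n,i+1}} g(s, W_0^{t_{n,i}}, Y_0^{(n),t_{n,i}})\, ds + B(t_{n,k}).
$$
This telescoping is essential: it replaces the sum of \emph{absolute} Brownian increments (which would blow up with $n$) by the single quantity $\|B_0^T\|$, exactly as in the exact-solution case.

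Next I would apply the uniform linear growth condition (e) to bound each integrand by $L(1+\|W_0^T\|+\|Y_0^{(n),t_{n,i}}\|)$. Since $Y^{(n)}$ is the piecewise-linear interpolation (\ref{linear-interpolation}), the absolute value of a linear function attains its maximum on each interval at an endpoint, so $\sup_{0\le t\le T}|Y^{(n)}(t)|=\max_{0\le k\le n}|Y^{(n)}(t_{n,k})|$ and $\|Y_0^{(n),t_{n,i}}\|=\max_{0\le j\le i}|Y^{(n)}(t_{n,j})|$. Writing $a_k$ for this running maximum and noting that the right-hand bound above is nondecreasing in $k$, I obtain the discrete inequality
$$
a_k \le LT(1+\|W_0^T\|)+\|B_0^T\|+L\sum_{i=0}^{k-1}(t_{n,i+1}-t_{n,i})\,a_i.
$$
Because the weights $(t_{n,i+1}-t_{n,i})$ sum to $T$, the discrete Gronwall inequality yields $a_n\le e^{LT}\bigl(LT(1+\|W_0^T\|)+\|B_0^T\|\bigr)$, hence a pathwise bound of the form $\sup_{0\le t\le T}|Y^{(n)}(t)|\le L_2+L_2\|W_0^T\|+L_2\|B_0^T\|$ with $L_2$ depending only on $L$ and $T$, \emph{not} on $n$.

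Finally, with this bound in hand the exponential estimate is identical to the end of the proof of Lemma~\ref{improved-liptser-1}: expanding the square, using $(a+b+c)^2\le 3(a^2+b^2+c^2)$ and the independence of $W$ and $B$, I would factor
$$
\E[e^{\varepsilon\sup_{0\le t\le T}(Y^{(n)}(t))^2}]\le e^{3\varepsilon L_2^2}\,\E[e^{3\varepsilon L_2^2\|W_0^T\|^2}]\,\E[e^{3\varepsilon L_2^2\|B_0^T\|^2}],
$$
and both expectations on the right are finite for $\varepsilon$ small by Condition (f) and Doob's submartingale inequality (precisely as in Appendix~\ref{proof-improved-liptser-1}). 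Since this upper bound does not depend on $n$, it serves as the required constant $C$.

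The step I expect to be the crux is establishing the \emph{uniform} Gronwall bound: the proof of Lemma~\ref{improved-liptser-1} applies Gronwall in continuous time, and here one must verify that the discrete analogue produces a constant controlled by $T$ alone. The telescoping of the Brownian increments is what makes this possible, and checking that the frozen-coefficient evaluation $g(s,W_0^{t_{n,i}},Y_0^{(n),t_{n,i}})$ still obeys the linear-growth bound uniformly over all partitions is the one place where the Euler--Maruyama structure must be handled with care.
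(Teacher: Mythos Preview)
Your proposal is correct and is precisely the ``discrete-time version of the proof of Lemma~\ref{improved-liptser-1}'' that the paper invokes without spelling out: you carry out the telescoping of Brownian increments, the discrete Gronwall bound with an $n$-independent constant, and the reduction of $\sup_{0\le t\le T}|Y^{(n)}(t)|$ to $\max_k|Y^{(n)}(t_{n,k})|$ via the piecewise-linear interpolation, followed by the same exponential estimate as in Appendix~\ref{proof-improved-liptser-1}. The paper's own proof simply asserts these steps in two lines, so your write-up is a faithful expansion of the same argument.
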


\begin{proof}
A discrete-time version of the proof of Lemma~\ref{improved-liptser-1} implies that there exists $\varepsilon > 0$ and a constant $C > 0$ such that for all $n$
$$
\E [e^{\varepsilon \sup_{i \in \{0, 1, \dots, n\}} (Y^{(n)}(t_{n, i}))^2}] < C,
$$
which, together with (\ref{linear-interpolation}), immediately implies (\ref{exponential-finiteness-2}).
\end{proof}

We also need the following lemma, which is parallel to Theorem $10.2.2$ in~\cite{kl92}.
\begin{lem}  \label{Y-Y}
Assume Conditions (d)-(f). Then, there exists a constant $C > 0$ such that for all $n$,
$$
\E[\|Y_0^{(n), T}-Y_0^T\|^2] \leq C \delta_{\Delta_n}.
$$
\end{lem}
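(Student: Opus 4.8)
The plan is to adapt the classical proof of strong order-$1/2$ convergence for the Euler--Maruyama scheme (Theorem $10.2.2$ of~\cite{kl92}) to the present path-dependent setting, the decisive simplification being that the Brownian term in~(\ref{feedback-memory}) is additive rather than modulated by a diffusion coefficient. Write $\eta_n(s)$ for the largest point of $\Delta_n$ not exceeding $s$ and abbreviate $g^{(n)}(s)=g(s,W_0^{\eta_n(s)},Y_0^{(n),\eta_n(s)})$. The first observation is that the error carries no stochastic integral at all: the increments $B(t_{n,i+1})-B(t_{n,i})$ in~(\ref{Euler-Maruyama-Sequence}) coincide with the corresponding increments of $B$ in~(\ref{feedback-memory}), so after telescoping they cancel, and at the sampling points
$$
Y^{(n)}(t_{n,i})-Y(t_{n,i})=\int_0^{t_{n,i}}\big(g^{(n)}(s)-g(s,W_0^s,Y_0^s)\big)\,ds .
$$
Thus the whole discrepancy is of bounded variation and driven purely by the drift mismatch, which lets me dispense with the Burkholder--Davis--Gundy/Doob machinery that the multiplicative-noise case would demand.

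The core of the argument is a Gronwall estimate for $E(t):=\E[\sup_{0\le u\le t}|Y^{(n)}(u)-Y(u)|^2]$. Applying the uniform Lipschitz condition (d) to $g^{(n)}(s)-g(s,W_0^s,Y_0^s)$ and reading off the distance~(\ref{UV-distance}) at the two times $\eta_n(s)\le s$, the history dependence collapses into three pieces: (i) the frozen-time mismatch $\|W_0^{\eta_n(s)}-W_0^s\|=\sup_{r\in[\eta_n(s),s]}|W(\eta_n(s))-W(r)|$; (ii) the running error $\sup_{u\le s}|Y^{(n)}(u)-Y(u)|$; and (iii) the within-interval oscillation $\sup_{r\in[\eta_n(s),s]}|Y(\eta_n(s))-Y(r)|$ of the true solution. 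Since $\eta_n(s)$ and $s$ lie in a common subinterval of width at most $\delta_{\Delta_n}$, pieces (i) and (iii) are oscillations over a window of width $\delta_{\Delta_n}$, while (ii) is precisely what Gronwall will absorb. Squaring, taking expectations, and applying Cauchy--Schwarz to the $ds$-integrals yields an inequality of the form $E(t)\le C\,(\text{oscillation terms})+C\int_0^t E(s)\,ds$.

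It then remains to show the oscillation terms are $O(\delta_{\Delta_n})$ uniformly in $n$. For (i) this is immediate from the third part of Condition (f), which gives $\E[\sup_{|u-v|\le\delta_{\Delta_n}}(W(u)-W(v))^2]\le L\delta_{\Delta_n}$. For (iii) I write $Y(r)-Y(\eta_n(s))=\int_{\eta_n(s)}^r g\,ds+(B(r)-B(\eta_n(s)))$: the Brownian part contributes $O(\delta_{\Delta_n})$ in $L^2$ by the It\^o isometry, while the drift part is controlled through the linear growth condition (e) together with the exponential integrability of $\|Y_0^T\|$ and $\|W_0^T\|$ from Lemma~\ref{improved-liptser-1} and its uniform-in-$n$ counterpart Lemma~\ref{improved-liptser-2}. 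Feeding these bounds into the displayed inequality and invoking Gronwall's lemma gives $E\le C\delta_{\Delta_n}$ along the grid. To pass from the grid to the supremum over all of $[0,T]$, I compare the piecewise-linear interpolant~(\ref{linear-interpolation}) with the continuous Euler path on each subinterval; their discrepancy is a drift term of size $O(\delta_{\Delta_n})$ (again bounded via (e) and Lemma~\ref{improved-liptser-2}) plus a Brownian-bridge term whose supremum over the $n$ subintervals is handled by a maximal inequality.

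The main obstacle is the path dependence itself: because the drift at time $s$ depends on the entire history, the Lipschitz bound couples the local error to the \emph{running} supremum error, and the delicate point is to verify that this coupling still closes under Gronwall while keeping every constant independent of $n$. This is exactly where the uniform exponential moment of Lemma~\ref{improved-liptser-2} and the quantitative modulus-of-continuity bound in Condition (f) are indispensable, since they convert the frozen-path and growth terms into genuine $O(\delta_{\Delta_n})$ contributions with $n$-free constants. A secondary subtlety is the final interpolation step: the Brownian-bridge discrepancy between the piecewise-linear interpolant and the genuine Euler path, taken as a maximum over the $n$ subintervals, is the one term whose uniform control is least routine and is what must be watched if one insists on the exact rate $O(\delta_{\Delta_n})$ rather than mere convergence to zero.
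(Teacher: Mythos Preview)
Your proposal is correct and follows essentially the same route as the paper's proof: both exploit that the error $Y-Y^{(n)}$ carries no stochastic integral (the Brownian increments cancel), bound the drift mismatch via the uniform Lipschitz condition~(d), feed the running supremum error back through Gronwall, and control the leftover oscillation terms in $L^2$ using Condition~(f), the linear growth condition~(e), and Lemmas~\ref{improved-liptser-1}--\ref{improved-liptser-2}. The only cosmetic difference is that in the triangle-inequality split you route through the within-interval oscillation of the \emph{true} solution $Y$, whereas the paper routes through the oscillation of the \emph{approximation} $Y^{(n)}$ (bounding $\E\|Y_0^{(n),s}-Y_0^{(n),t_{n,i}}\|^2=O(\delta_{\Delta_n})$ directly from the Euler increment); either choice closes under Gronwall with the same ingredients. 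Your caution about the interpolation step is well placed---the paper's proof is terse on exactly this point---but it does not reflect a different method.
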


\begin{proof}
Note that for any $n$, we have
$$
Y(t_{n, i+1})=Y(t_{n, i})+\int_{t_{n, i}}^{t_{n, i+1}} g(s, W_0^s, Y_0^s) ds + B(t_{n, i+1})-B(t_{n, i}),
$$
and
$$
Y^{(n)}(t_{n,i+1})=Y^{(n)}(t_{n,i})+\int_{t_{n,i}}^{t_{n,i+1}} g(s, W_0^{t_{n, i}}, Y_0^{(n),t_{n,i}}) ds+ B(t_{n,i+1})-B(t_{n,i}).
$$
It then follows that
\begin{equation} \label{Y-diff}
Y(t_{n,i+1})-Y^{(n)}(t_{n, i+1})=Y(t_{n, i})-Y^{(n)}(t_{n, i})+\int_{t_{n, i}}^{t_{n, i+1}} (g(s, W_0^s, Y_0^s)-g(s, W_0^{t_{n, i}}, Y_0^{(n), t_{n, i}})) ds.
\end{equation}
Now, for any $t$, choose $n_0$ such that $t_{n, n_0} \leq t < t_{n, n_0+1}$. Now, a recursive application of (\ref{Y-diff}), coupled with Conditions (d) and (e), yields that for some $L > 0$,
{\small \begin{align*}
\hspace{-2.2cm} Y(t)-Y^{(n)}(t)&=\sum_{i=0}^{n_0} \int_{t_{n,i}}^{t_{n,i+1}} (g(s, W_0^s, Y_0^s)-g(t_{n,i}, W_0^{t_{n, i}}, Y_0^{(n),t_{n,i}})) ds+\int_{t_{n, n_0+1}}^{t} (g(s, W_0^s, Y_0^s)-g(t_{n,i}, W_0^{t_{n, n_0+1}}, Y_0^{(n),t_{n, n_0+1}})) ds\\
&\leq \sum_{i=0}^{n_0} \int_{t_{n,i}}^{t_{n,i+1}} L |s-t_{n, i}| + L \|W_0^s-W_0^{t_{n, i}}\|+ L \|Y_0^s-Y_0^{(n),s}\|+L \|Y_0^{(n),s}-Y_0^{(n),t_{n,i}}\| ds\\
&+\int_{t_{n, n_0+1}}^{t} L |s-t_{n, n_0+1}| + L \|W_0^s-W_0^{t_{n, n_0+1}}\|+ L \|Y_0^s-Y_0^{(n),s}\|+L \|Y_0^{(n),s}-Y_0^{(n),t_{n, n_0+1}}\| ds.
\end{align*}}
Noticing that for any $s$ with $t_{n, i} \leq s < t_{n, i+1}$, we have
$$
\hspace{-1cm} \|Y_0^{(n),s}-Y_0^{(n),t_{n,i}}\|^2 \leq |Y^{(n)}(t_{n,i+1})-Y^{(n)}(t_{n,i})|^2 \leq 2 \left| \int_{t_{n,i}}^{t_{n,i+1}} g(s, W_0^{t_{n,i}}, Y_0^{(n),t_{n,i}}) ds \right|^2+ 2 |B(t_{n,i+1})-B(t_{n,i})|^2,
$$
which, together with Condition (e) and the fact that for all $n$ and $i$,
\begin{equation} \label{BB}
\E[|B(t_{n,i+1})-B(t_{n,i})|^2] = O(\delta_{\Delta_n}),
\end{equation}
implies that
\begin{equation} \label{YY}
\E[\|Y_0^{(n),s}-Y_0^{(n),t_{n,i}}\|^2] = O(\delta_{\Delta_n}).
\end{equation}
Noting that the constants in the two terms $O(\delta_{\Delta_n})$ in (\ref{BB}) and (\ref{YY}) can be chosen uniform over all $n$, a usual argument with the Gronwall inequality and Condition (f) applied to $\E[\|Y_0^t-Y_0^{(n), t}\|^2]$ completes the proof of the theorem.
\end{proof}

We are now ready for the proof of Theorem~\ref{approximation-theorem-1}.
\begin{proof}[Proof of Theorem~\ref{approximation-theorem-1}]
We proceed in two steps.

{\bf Step $\bf 1$.} In this step, we establish the theorem assuming that there exists a constant $C > 0$ such that for all $w_0^T \in C[0, T]$ and all $y_0^T \in C[0, T]$,
\begin{equation} \label{approximating-case-1}
\int_0^T g^2(s, w_0^s, y_0^s) ds < C.
\end{equation}

We first note that straightforward computations yield
\begin{align*}
\hspace{-2cm} f_{Y^{(n)}(\Delta_n)|W}(y^{(n)}(\Delta_n)|w_0^T) &=\prod_{i=1}^n f(y^{(n)}_{t_{n,i}}|y_{t_{n,0}}^{(n), t_{n,i-1}},w_0^{t_{n, i-1}})\\
&=\prod_{i=1}^{n} \frac{1}{\sqrt{2\pi(t_{n, i}-t_{n, i-1})}}\exp \left({-\frac{(y^{(n)}_{t_{n, i}}-y^{(n)}_{t_{n, i-1}}-  \int_{t_{n, i-1}}^{t_{n, i}} g(s, w_0^{t_{n, i-1}}, y_0^{(n), t_{n, i-1}})ds)^2}{2(t_{n, i}-t_{n, i-1})}} \right),
\end{align*}
(here we have used the shorter notations $y^{(n)}_{t_{n, i}}, y^{(n)}_{t_{n, i-1}}$ for $y^{(n)}(t_{n, i}), y^{(n)}(t_{n, i-1})$, respectively) and
$$
\hspace{-2cm} f_{Y^{(n)}(\Delta_n)}(y^{(n)}(\Delta_n))=\int \prod_{i=1}^{n} \frac{1}{\sqrt{2\pi(t_{n, i}-t_{n, i-1})}}\exp \left({-\frac{(y^{(n)}_{t_{n, i}}-y^{(n)}_{t_{n, i-1}}- \int_{t_{n, i-1}}^{t_{n, i}} g(s, w_0^{t_{n, i-1}}, y_0^{(n), t_{n, i-1}})ds)^2}{2(t_{n, i}-t_{n, i-1})}} \right) d\mu_W(w),
$$
which further lead to
\begin{equation} \label{YgivenM}
\hspace{-2cm} f_{Y^{(n)}(\Delta_n)|W}(Y^{(n)}(\Delta_n)|W_0^T)=\prod_{i=1}^{n} \frac{1}{\sqrt{2\pi(t_{n, i}-t_{n, i-1})}}\exp \left({-\frac{(Y^{(n)}_{t_{n, i}}-Y^{(n)}_{t_{n, i-1}}- \int_{t_{n, i-1}}^{t_{n, i}} g(s, W_0^{t_{n, i-1}}, Y_0^{(n), t_{n, i-1}})ds)^2}{2(t_{n, i}-t_{n, i-1})}} \right),
\end{equation}
and
\begin{equation} \label{justY}
\hspace{-2.3cm} f_{Y^{(n)}(\Delta_n)}(Y^{(n)}(\Delta_n))=\int \prod_{i=1}^{n} \frac{1}{\sqrt{2\pi(t_{n, i}-t_{n, i-1})}}\exp \left({-\frac{(Y^{(n)}_{t_{n, i}}-Y^{(n)}_{t_{n, i-1}}- \int_{t_{n, i-1}}^{t_{n, i}} g(t_{n, i-1}, w_0^{t_{n, i-1}}, Y^{(n), t_{n, i-1}}_{0}) ds)^2}{2(t_{n, i}-t_{n, i-1})}} \right) d\mu_W(w).
\end{equation}
With (\ref{YgivenM}) and (\ref{justY}), we have
{\small \begin{align*}
\hspace{-3mm} I(W_0^T; Y^{(n)}(\Delta_n)) & =\E[\log f_{Y^{(n)}(\Delta_n)|W}(Y^{(n)}(\Delta_n)|W_0^T)]-\E[\log f_{Y^{(n)}(\Delta_n)}(Y^{(n)}(\Delta_n))] \\
&\hspace{-4.3cm}=\E\left[\log \prod_{i=1}^n \exp \left({-\frac{-2\int_{t_{n, i-1}}^{t_{n, i}} g(s, W_0^{t_{n, i-1}}, Y_0^{(n), t_{n, i-1}}) ds \; (Y^{(n)}_{t_{n, i}}-Y^{(n)}_{t_{n, i-1}})+ (\int_{t_{n, i-1}}^{t_{n, i}} g(s, W_0^{t_{n, i-1}}, Y_0^{(n), t_{n, i-1}})ds)^2}{2(t_{n, i}-t_{n, i-1})}} \right)\right]\\
& \hspace{-4.3cm} -\E\left[\log \int \prod_{i=1}^n \exp \left({-\frac{-2\int_{t_{n, i-1}}^{t_{n, i}} g(s, w_0^{t_{n, i-1}}, Y^{(n), t_{n, i-1}}_{0}) ds \; (Y^{(n)}_{t_{n, i}}-Y^{(n)}_{t_{n, i-1}})+ (\int_{t_{n, i-1}}^{t_{n, i}} g(s, w_0^{t_{n, i-1}}, Y^{(n), t_{n, i-1}}_{0}) ds)^2}{2(t_{n, i}-t_{n, i-1})}} \right) d\mu_W(w) \right] \\
&\hspace{-4.3cm} =\E\left[\sum_{i=1}^n \left({-\frac{-2\int_{t_{n, i-1}}^{t_{n, i}} g(s, W_0^{t_{n, i-1}}, Y^{(n), t_{n, i-1}}_{0}) ds \; (Y^{(n)}_{t_{n, i}}-Y^{(n)}_{t_{n, i-1}})+ (\int_{t_{n, i-1}}^{t_{n, i}} g(s, W_0^{t_{n, i-1}}, Y^{(n), t_{n, i-1}}_{0}) ds)^2}{2(t_{n, i}-t_{n, i-1})}} \right)\right]\\
&\hspace{-4.3cm} -\E\left[\log \int \exp \sum_{i=1}^n \left({-\frac{-2\int_{t_{n, i-1}}^{t_{n, i}} g(s, w_0^{t_{n, i-1}}, Y^{(n), t_{n, i-1}}_{0}) ds \; (Y^{(n)}_{t_{n, i}}-Y^{(n)}_{t_{n, i-1}})+ (\int_{t_{n, i-1}}^{t_{n, i}} g(s, w_0^{t_{n, i-1}}, Y^{(n), t_{n, i-1}}_{0}) ds)^2}{2(t_{n, i}-t_{n, i-1})}} \right) d\mu_W(w) \right].
\end{align*}}
On the other hand, it is well known (see, e.g.,~\cite{ih93}) that
\begin{align*}
I(W; Y_0^T) & =\E\left[\log \frac{d\mu_{Y|W}}{d\mu_B}(Y_0^T)\right]-\E\left[\log \frac{d\mu_{Y}}{d\mu_B}(Y_0^T)\right]\\
& =\E \left[\log \exp\left[ \int_0^T g(s, W_0^s, Y_0^s) dY(s)-\frac{1}{2} \int_0^T g^2(s, W_0^s, Y_0^s) ds \right] \right] \\
& -\E \left[\log \int \exp\left[ \int_0^T g(s, w_0^s, Y_0^s) dY(s)-\frac{1}{2} \int_0^T g^2(s, w_0^s, Y_0^s) ds d\mu_W(w) \right] \right] \\
& =\E \left[\int_0^T g(s, W_0^s, Y_0^s) dY(s)-\frac{1}{2} \int_0^T g^2(s, W_0^s, Y_0^s) ds\right] \\
& -\E \left[\log \int \exp\left[ \int_0^T g(s, w_0^s, Y_0^s) dY(s)-\frac{1}{2} \int_0^T g^2(s, w_0^s, Y_0^s) ds d\mu_W(w) \right] \right].
\end{align*}
Now, we compute
\begin{align*}
& \int_0^T g(s, W_0^s, Y_0^s) dY(s)-\sum_{i=1}^{n} \frac{\int_{t_{n, i-1}}^{t_{n, i}} g(s, W_0^{t_{n, i-1}}, Y_0^{(n), t_{n, i-1}}) ds (Y^{(n)}_{t_{n, i}}-Y^{(n)}_{t_{n, i-1}})}{t_{n, i}-t_{n, i-1}}\\
& =\int_0^T g(s, W_0^s, Y_0^s) dY(s)-\sum_{i=1}^{n} g(t_{n, i-1}, W_0^{t_{n, i-1}}, Y_0^{(n), t_{n, i-1}}) (Y^{(n)}_{t_{n, i}}-Y^{(n)}_{t_{n, i-1}})\\
& -\sum_{i=1}^{n} \frac{\int_{t_{n, i-1}}^{t_{n, i}} (g(s, W_0^{t_{n, i-1}}, Y_0^{(n), t_{n, i-1}})-g(t_{n, i-1}, W_0^{t_{n, i-1}}, Y_0^{(n), t_{n, i-1}})) ds (Y^{(n)}_{t_{n, i}}-Y^{(n)}_{t_{n, i-1}})}{t_{n, i}-t_{n, i-1}}.
\end{align*}
It can be easily checked that the second term of the right hand side of the above equality converges to $0$ in mean. For the first term, we have
\begin{align*}
&\int_0^T g(s, W_0^s, Y_0^s) dY(s)-\sum_{i=1}^{n} g(t_{n, i-1}, W_0^{t_{n, i-1}}, Y_0^{(n), t_{n, i-1}}) (Y^{(n)}_{t_{n, i}}-Y^{(n)}_{t_{n, i-1}})\\
&=\sum_{i=1}^n \int_{t_{n, i-1}}^{t_{n, i}} g(s, W_0^s, Y_0^s) dY(s)-\sum_{i=1}^n g(t_{n, i-1}, W_0^{t_{n, i-1}}, Y_0^{(n), t_{n, i-1}}) (Y_{t_{n, i}}-Y_{t_{n, i-1}})\\
&+\sum_{i=1}^n g(t_{n, i-1}, W_0^{t_{n, i-1}}, Y_0^{(n), t_{n, i-1}}) ((Y_{t_{n, i}}-Y_{t_{n, i-1}})-(Y^{(n)}_{t_{n, i}}-Y^{(n)}_{t_{n, i-1}}))\\
&=\sum_{i=1}^n \int_{t_{n, i-1}}^{t_{n, i}} g(s, W_0^s, Y_0^s) dY(s)-\sum_{i=1}^n \int_{i=1}^n g(t_{n, i-1}, W_0^{t_{n, i-1}}, Y_0^{(n), t_{n, i-1}}) dY(s)\\
&+\sum_{i=1}^n g(t_{n, i-1}, W_0^{t_{n, i-1}}, Y_0^{(n), t_{n, i-1}}) ((Y_{t_{n, i}}-Y_{t_{n, i-1}})-(Y^{(n)}_{t_{n, i}}-Y^{(n)}_{t_{n, i-1}}))\\
&=\sum_{i=1}^n \int_{t_{n, i-1}}^{t_{n, i}} (g(s, W_0^s, Y_0^s)-g(t_{n, i-1}, W_0^{t_{n, i-1}}, Y_0^{(n), t_{n, i-1}})) dY(s)\\
&+\sum_{i=1}^n g(t_{n, i-1}, W_0^{t_{n, i-1}}, Y_0^{(n), t_{n, i-1}}) ((Y_{t_{n, i}}-Y_{t_{n, i-1}})-(Y^{(n)}_{t_{n, i}}-Y^{(n)}_{t_{n, i-1}})).
\end{align*}
It then follows from Conditions (d) and (e), Lemmas~\ref{improved-liptser-1},~\ref{improved-liptser-2} and~\ref{Y-Y} that
\begin{equation}  \label{first-half}
\E\left[\left|\sum_{i=1}^{n} \frac{\int_{t_{n, i-1}}^{t_{n, i}} g(s, W_0^{t_{n, i-1}}, Y_0^{(n), t_{n, i-1}}) ds (Y^{(n)}_{t_{n, i}}-Y^{(n)}_{t_{n, i-1}})}{t_{n, i}-t_{n, i-1}} - \int_0^T g(s, W_0^s, Y_0^s) dY(s)\right| \right]=O(\delta^{\frac{1}{2}}_{\Delta_n}).
\end{equation}
And using a similar argument as above, we deduce that
\begin{equation}  \label{second-half}
\E\left[\left|\frac{1}{2} \sum_{i=1}^n \frac{(\int_{t_{n, i-1}}^{t_{n, i}} g(s, W_0^{t_{n, i-1}}, Y^{(n),t_{n, i-1}}_{0})ds)^2}{t_{n, i}-t_{n, i-1}}-\frac{1}{2} \int_0^T g(s, W_0^s, Y_0^s)^2 ds \right|\right] = O(\delta^{\frac{1}{2}}_{\Delta_n}).
\end{equation}
It then follows from (\ref{first-half}) and (\ref{second-half}) that as $n$ tends to infinity,
$$
\E\left[ \left|\sum_{i=1}^n \left({-\frac{-2\int_{t_{n, i-1}}^{t_{n, i}} g(s, W_0^{t_{n, i-1}}, Y_0^{(n), t_{n,i-1}}) ds(Y^{(n)}_{t_{n, i}}-Y^{(n)}_{t_{n, i-1}})+(\int_{t_{n, i-1}}^{t_{n, i}} g(s, W_0^{t_{n, i-1}}, Y_0^{(n), t_{n,i-1}}) ds)^2}{2(t_{n, i}-t_{n, i-1})}} \right) \right. \right.
$$
\begin{equation} \label{M-conv}
\left. \left. - \int_0^T g(s, W_0^s, Y_0^s) dY(s) + \frac{1}{2} \int_0^T g(s, W_0^s, Y_0^s)^2 ds \right|\right] = O(\delta^{\frac{1}{2}}_{\Delta_n}).
\end{equation}

We now establish the following convergence:
{\small \begin{equation} \label{m-conv}
\E\left[\log \int \exp A^{(n)}(w)d\mu_W(w) \right] \to \E\left[\log \int \exp A(w)d\mu_W(w) \right].
\end{equation}}
where
$$
\hspace{-1cm} A^{(n)}(w)=\sum_{i=1}^n \left({-\frac{-2\int_{t_{n, i-1}}^{t_{n, i}} g(s, w_0^{t_{n, i-1}}, Y^{(n), t_{n,i-1}}_0) ds(Y^{(n)}_{t_{n, i}}-Y^{(n)}_{t_{n, i-1}})+ (\int_{t_{n, i-1}}^{t_{n, i}} g(s, w_0^{t_{n, i-1}}, Y^{(n), t_{n,i-1}}_0) ds)^2}{2(t_{n, i}-t_{n, i-1})}} \right).
$$
and let
$$
A(w)={\int_0^T g(s, w_0^s, Y_0^s) dY(s)-\frac{1}{2} \int_0^T g^2(s, w_0^s, Y_0^s) ds}.
$$
Note that using a parallel argument as the derivation of (\ref{M-conv}), we can establish
\begin{equation}  \label{m-conv-1}
\E \int \left| A^{(n)}(w)-A(w) \right| d\mu_W(w) \to 0,
\end{equation}
as $n$ tends to infinity; and similarly as in the derivation of (\ref{S-prime}), from Conditions (d), (e) and (f), Lemmas~\ref{improved-liptser-1},~\ref{improved-liptser-2} and~\ref{Y-Y}, we deduce that
\begin{equation} \label{m-conv-2}
\E\left[\int \left| \exp A^{(n)}(w)-\exp A(w) \right| d\mu_W(w)\right] \to 0
\end{equation}
as $n$ tends to infinity. And note that we always have
\begin{equation} \label{m-conv-3}
\left| \log \int \exp A^{(n)}(w) d\mu_W(w) \right| \leq \int \exp A^{(n)}(w) d\mu_W(w)+ \left| \int A^{(n)}(w) d\mu_W(w) \right|.
\end{equation}
So, by the general Lebesgue dominated convergence theorem with (\ref{m-conv-1}), (\ref{m-conv-2}) and (\ref{m-conv-3}), we have
$$
\E\left[\log \int \exp A^{(n)}(w) d\mu_W(w)\right] \to \E\left[\log \int \exp A(w) d\mu_W(w)\right].
$$
So, under the condition (\ref{approximating-case-1}), we have established the theorem.

{\bf Step $\bf 2$.}  In this step, we will use the convergence in \textbf{Step} $\bf 1$ and establish the theorem without the condition (\ref{approximating-case-1}).

Defining the stopping $\tau_k$, $g_{(k)}$ and $Y_{(k)}$ as in the proof of Theorem~\ref{sampling-theorem-2}, we again have:
\begin{equation*}
Y_{(k)}(t)=\rho \int_0^t g_{(k)}(s, W_0^s, Y_{(k), 0}^s) ds +B(t), \quad t \in [0, \tau_k \wedge T].
\end{equation*}
For any fixed $k$, applying the Euler-Maruyama approximation as in (\ref{Euler-Maruyama-Sequence}) and (\ref{linear-interpolation}) to the above channel with respect to $\Delta_n$, we obtain the process $Y_{(k)}^{(n)}(\cdot)$.

Now, by the fact that
\begin{align*}
I(W_0^T; Y^{(n)}(\Delta_n))& =\E[\log f_{Y^{(n)}(\Delta_n)|W}(Y^{(n)}(\Delta_n)|W_0^T)]-\E[\log f_{Y^{(n)}(\Delta_n)}(Y^{(n)}(\Delta_n))] \\
&= \E[A^{(n)}(W)] - \E\left[\log \int \exp A^{(n)}(w) d\mu_W(w)\right]\\
& \geq 0,
\end{align*}
we deduce that
\begin{align*}
& \hspace{-0.5cm}\E[\log f_{Y^{(n)}(\Delta_n)}(Y^{(n)}(\Delta_n))] \\
& \hspace{-0.5cm} \leq \E[\log f_{Y^{(n)}(\Delta_n)|W}(Y^{(n)}(\Delta_n)|W_0^T)]\\
&\hspace{-0.5cm} =\E\left[\sum_{i=1}^n \left({-\frac{-2\int_{t_{n, i-1}}^{t_{n, i}} g(s, W_0^{t_{n, i-1}}, Y^{(n), t_{n, i-1}}_{0}) ds \; (Y^{(n)}_{t_{n, i}}-Y^{(n)}_{t_{n, i-1}})+ (\int_{t_{n, i-1}}^{t_{n, i}} g(s, W_0^{t_{n, i-1}}, Y^{(n), t_{n, i-1}}_{0}) ds)^2}{2(t_{n, i}-t_{n, i-1})}} \right)\right] \\
& \hspace{-0.5cm}  = \E\left[\sum_{i=1}^n {\frac{(\int_{t_{n, i-1}}^{t_{n, i}} g(s, W_0^{t_{n, i-1}}, Y^{(n), t_{n, i-1}}_{0}) ds)^2}{2(t_{n, i}-t_{n, i-1})}}\right]\\
& \hspace{-0.5cm}  \leq \E\left[\sum_{i=1}^n {\frac{(\int_{t_{n, i-1}}^{t_{n, i}} g(s, W_0^{t_{n, i-1}}, Y^{(n), t_{n, i-1}}_{(k), 0})^2 ds)\int_{t_{n, i-1}}^{t_{n, i}} ds}{2(t_{n, i}-t_{n, i-1})}}\right]\\
& \hspace{-0.5cm} = \E\left[\sum_{i=1}^{n} \int_{t_{n, i-1}}^{t_{n, i}} g(s, W_0^{t_{n, i-1}}, Y^{(n), t_{n, i-1}}_{(k), 0})^2 ds\right]\\
& \hspace{-0.5cm} = \E\left[\int_{0}^{T} g(s, W_0^{\lfloor s \rfloor_{\Delta_n}}, Y^{(n), \lfloor s \rfloor_{\Delta_n}}_{0})^2 ds\right],
\end{align*}
where $\lfloor s \rfloor_{\Delta_n}$ denote the unique number $n_0$ such that $t_{n, n_0} \leq s < t_{n, n_0+1}$.
Now, using the easily verifiable fact that
$$
\frac{1}{\int \exp A^{(n)}(w) d\mu_W(w)} =\E[\exp(-A^{(n)}(W))|Y_0^T],
$$
and Jensen's inequality, we deduce that
$$
\E\left[\log \frac{1}{\int \exp A^{(n)}(w) d\mu_W(w)}\right] = \E \left[\log \E[\exp(-A^{(n)}(W))|Y_0^T] \right] \leq \log \E[\exp(-A^{(n)}(W))] \leq 0,
$$
where for the last inequality, we have applied Fatou's lemma as in deriving (\ref{Fatou-Lemma}). It then follows that
$$
0 \leq \E[\log f_{Y^{(n)}(\Delta_n)}(Y^{(n)}(\Delta_n))] \leq \E\left[\sum_{i=1}^{n} \int_{t_{n, i-1}}^{t_{n, i}} g(s, W_0^{t_{n, i-1}}, Y^{(n), t_{n, i-1}}_{0})^2 ds\right],
$$
which further implies that
$$
I(W_0^T; Y^{(n)}(\Delta_n)) \leq \E\left[\sum_{i=1}^{n} \int_{t_{n, i-1}}^{t_{n, i}} g(s, W_0^{t_{n, i-1}}, Y^{(n), t_{n, i-1}}_{0})^2 ds\right].
$$
Now, using the fact that $Y^{(n)}$ and $Y^{(n)}_{(k)}$ coincide over $[0, \tau_k \wedge T]$, one verifies that for any $\varepsilon > 0$,
\begin{align*}
I(W_0^T; Y^{(n)}(\Delta_n))-I(W_0^T; Y^{(n)}_{(k), \Delta_n}) & \leq \E\left[\int_{\tau_k}^{T} g(s, W_0^{\lfloor s \rfloor_{\Delta_n}}, Y^{(n), \lfloor s \rfloor_{\Delta_n}}_{0})^2 ds\right]\\
& \hspace{-5cm} \leq \E\left[\int_{\tau_k}^{T} g(s, W_0^{\lfloor s \rfloor_{\Delta_n}}, Y^{(n), \lfloor s \rfloor_{\Delta_n}}_{0})^2 ds; T-\tau_k \leq \varepsilon \right]+\E\left[\int_{\tau_k}^{T} g(s, W_0^{\lfloor s \rfloor_{\Delta_n}}, Y^{(n), \lfloor s \rfloor_{\Delta_n}}_{0})^2 ds; T-\tau_k > \varepsilon \right]\\
& \hspace{-5cm} \leq \int_{T-\varepsilon}^{T} \E\left[ g(s, W_0^{\lfloor s \rfloor_{\Delta_n}}, Y^{(n), \lfloor s \rfloor_{\Delta_n}}_{0})^2 \right] ds +\E\left[\int_{\tau_k}^{T} g(s, W_0^{\lfloor s \rfloor_{\Delta_n}}, Y^{(n), \lfloor s \rfloor_{\Delta_n}}_{0})^2 ds; T-\tau_k > \varepsilon \right].
\end{align*}
Using the easily verifiable fact that $\{\tau_k\}$ converges to $T$ in probability uniformly over all $n$ and the fact that $\varepsilon$ can be arbitrarily small, we conclude that as $k$ tends to infinity, uniformly over all $n$,
\begin{equation} \label{absolute-value-bounded}
I(W_0^T; Y^{(n)}_{(k)}(\Delta_n)) \to I(W_0^T; Y^{(n)}(\Delta_n)).
\end{equation}
Next, an application of the monotone convergence theorem, together with the fact that $\tau_k \to T$ as $k$ tends to infinity, yields that monotone increasingly
$$
I(W_0^T; Y_0^{\tau_n})=\frac{1}{2} \E\left[\int_0^{\tau_n} (g(s)-\hat{g}(s))^2 ds\right] \to I(W_0^T; Y_0^T)=\frac{1}{2} \E\left[\int_0^T (g(s)-\hat{g}(s))^2 ds\right]
$$
as $n$ tends to infinity. By \textbf{Step} $\bf 1$, for any fixed $k_i$,
$$
\lim_{n \to \infty} I(W_0^T; Y_{(k_i)}^{(n)}(\Delta_n))=I(W_0^T; Y_0^{\tau_{k_i}}),
$$
which means that there exists a sequence $\{n_i\}$ such that, as $i$ tends to infinity,
$$
I(W; Y_{(k_i)}^{(n_i)}(\Delta_n)) \to I(W; Y_{0}^{T}).
$$
Moreover, by (\ref{absolute-value-bounded}),
$$
\lim_{i \to \infty } I(W_0^T; Y_{(k_i)}^{(n_i)}(\Delta_{n_i})) = \lim_{i \to \infty } I(W_0^T; Y^{(n_i)}(\Delta_n)),
$$
which further implies that
$$
\lim_{i \to \infty} I(W_0^T; Y^{(n_i)}(\Delta_n))=I(W_0^T; Y_0^T).
$$
The theorem then follows from a usual subsequence argument as in the proof of Theorem~\ref{sampling-theorem-2}.
\end{proof}

\begin{rem} \label{mmse-2}
Parallel to Remark~\ref{mmse-1}, the arguments in the proof of Theorem~\ref{approximation-theorem-1} can be adapted to yield an approximation theorem in estimation theory.

More precisely, consider the following continuous-time Gaussian feedback channel under the assumptions in Theorem~\ref{approximation-theorem-1}:
$$
Y(t)=\int_0^t X(s, M, Y_0^{s}) ds + B(t), \quad t \in [0, T].
$$
The MMSE is the limit of the approximated MMSE, namely,
$$
\int_0^T \E[(X(s)-\E[X(s)|Y_0^T])^2] ds=\lim_{n \to \infty} \int_0^T \E[(X^{(n)}(s)-\E[X^{(n)}(s)|Y^{(n)}(\Delta_n)])^2] ds.
$$
In more detail, the above-mentioned convergence follows from the fact that
$$
\int_0^T \E[(X^{(n)}(s))^2] ds \to \int_0^T \E[(X(s))^2] ds
$$
and the fact that
$$
\E[\E^2[X(s)|Y_0^T]]=\E\left[\left(\frac{\int X(s, m_0^s, Y_0^s) \exp(A(m)) d\mu_M(m)}{\int \exp(A(m)) d\mu_M(m)}\right)^2\right],
$$
and the fact that
$$
\E[\E^2[X^{(n)}(s)|Y^{(n)}(\Delta_n)]]=\E\left[\left(\frac{\int X^{(n)}(s, m_0^s, Y_0^s) \exp(A^{(n)}(m)) d\mu_M(m)}{\int \exp(A^{(n)}(m)) d\mu_M(m)}\right)^2\right].
$$
Then, using a similar argument as in the proof of Theorem~\ref{approximation-theorem-1}, we can show
$$
\lim_{n \to \infty} \E[\E^2[X^{(n)}(s)|Y^{(n)}(\Delta_n)]]=\E[\E^2[X(s)|Y_0^T]],
$$
which implies the claimed convergence.

Similarly, we can also conclude that with the assumptions in Theorem~\ref{approximation-theorem-1}, the causal MMSE is the limit of the approximated causal MMSE, namely,
$$
\int_0^T \E[(X(s)-\E[X(s)|Y_0^s])^2] ds=\lim_{n \to \infty} \int_0^T \E[(X^{(n)}(s)-\E[X^{(n)}(s)|Y^{(n)}(\Delta_n \cap [0, s])^2] ds.
$$
\end{rem}

\section{Proof of Theorem~\ref{Theorem-MAC}} \label{proof-Theorem-MAC}

In the section, we give the proof of Theorem~\ref{Theorem-MAC}. For notational convenience only, we will assume $m=2$, the case with a generic $m$ being completely parallel. We will first need the following lemma, which is a key component in our treatment of both continuous-time Gaussian MACs.
\begin{lem} \label{independent-OUs}
For any $\eps > 0$, there exist two independent Ornstein-Uhlenbeck processes $\{X_i(s): s \geq 0\}$, $i=1, 2$, satisfying the following power constraint:
\begin{equation} \label{a-p-c}
\mbox{for $i=1, 2$, there exists $P_i > 0$ such that for all $t > 0$,  } \frac{1}{t} \int_0^t E[X_i^2(s)] ds = P_i,
\end{equation}
such that for all $T$,
\begin{equation} \label{comma-sum}
|I_T(X_1, X_2; Y)/T-(P_1+P_2)/2| \leq \eps,
\end{equation}
and
\begin{equation} \label{conditional}
|I_T(X_1; Y|X_2)/T-P_1/2| \leq \eps, \quad |I_T(X_2; Y|X_1)/T-P_2/2| \leq \eps,
\end{equation}
moreover,
\begin{equation} \label{treat-as-noise}
|I_T(X_1; Y)/T-P_1/2| \leq \eps, \quad |I_T(X_2; Y)/T-P_2/2| \leq \eps,
\end{equation}
where
\begin{equation}  \label{to-be-interpreted-as-channel}
Y(t)=\int_0^t X_1(s) ds + \int_0^t X_2(s) ds + B(t), \quad t \geq 0.
\end{equation}
Here (and often in the remainder of the paper) the subscript $T$ means that the (conditional) mutual information is computed over the time period $[0, T]$.
\end{lem}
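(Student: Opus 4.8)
The plan is to realize $X_1,X_2$ as independent stationary Ornstein--Uhlenbeck processes and to push their mean-reversion rates to infinity (``fast oscillation''), so that every mutual information rate collapses onto the corresponding power. Concretely, for a mean-reversion parameter $\theta_i>0$ let $X_i$ solve $dX_i=-\theta_i X_i\,dt+\sqrt{2\theta_i P_i}\,dW_i$ in its stationary regime, with $W_1,W_2,B$ independent; then $\E[X_i^2(s)]=P_i$ for every $s$, so the power constraint (\ref{a-p-c}) holds exactly. I extend $X_1,X_2,B$ to two-sided stationary processes on $\mathbb{R}$ (keeping $Y$ defined from time $0$), which is needed only for the lower bounds below. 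The whole argument then rests on Duncan's formula: for each scalar channel $dY=U\,dt+dB$ arising here, $I_T(U;Y)=\tfrac12\int_0^T \mathrm{cmmse}_U(s)\,ds$, where $\mathrm{cmmse}_U(s)$ is the causal mean-square error of estimating the input $U(s)$ from the output up to time $s$.

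The core estimate is a sandwich for the causal MMSE that is uniform in $s$. For a stationary input $U$ (which will be $X_1$, $X_2$, or $Z:=X_1+X_2$), the causal estimate uses strictly less information than the two-sided smoother, so by monotonicity of the explained variance of conditional expectations in the Gaussian case,
\[
\mathrm{mmse}^{\mathrm{nc}}_U \;\le\; \mathrm{cmmse}_U(s) \;\le\; \mathrm{Var}(U(0)),
\]
the upper bound being the trivial ``estimate by the mean'' bound. The non-causal (Wiener smoothing) error is stationary and computable spectrally: $\mathrm{mmse}^{\mathrm{nc}}_U=\frac{1}{2\pi}\int_{\mathbb{R}}\frac{S_U(\omega)}{1+S_U(\omega)}\,d\omega=\mathrm{Var}(U(0))-\frac{1}{2\pi}\int_{\mathbb{R}}\frac{S_U(\omega)^2}{1+S_U(\omega)}\,d\omega$. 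Since $S_{X_i}(\omega)=\frac{2\theta_iP_i}{\theta_i^2+\omega^2}$ gives $\frac{1}{2\pi}\int_{\mathbb{R}}S_{X_i}^2\,d\omega=\frac{P_i^2}{\theta_i}\to0$, and $S_Z=S_{X_1}+S_{X_2}$ satisfies $\frac{1}{2\pi}\int_{\mathbb{R}}S_Z^2\,d\omega\le \frac{2P_1^2}{\theta_1}+\frac{2P_2^2}{\theta_2}\to0$, the correction term vanishes as $\theta_1,\theta_2\to\infty$. Hence $\mathrm{cmmse}_U(s)\to\mathrm{Var}(U(0))$ uniformly in $s$, so $\frac1T I_T(U;Y)\in[\tfrac12\mathrm{mmse}^{\mathrm{nc}}_U,\ \tfrac12\mathrm{Var}(U(0))]$ can be forced within any prescribed tolerance of $\tfrac12\mathrm{Var}(U(0))$, uniformly in $T$.

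Applying this three times yields the lemma. For (\ref{comma-sum}) I use $I_T(X_1,X_2;Y)=I_T(Z;Y)$ (valid because $Y$ depends on $(X_1,X_2)$ only through $Z$) with $U=Z$ and $\mathrm{Var}(Z(0))=P_1+P_2$. For (\ref{conditional}) I subtract the known input: conditioned on $X_2$, the channel is the single-user channel $Y-\int_0^{\,\cdot}X_2\,ds=\int_0^{\,\cdot}X_1\,ds+B$, so $I_T(X_1;Y\mid X_2)$ reduces to the scalar case with $U=X_1$ and target $P_1/2$, and symmetrically for $X_2$. Finally (\ref{treat-as-noise}) needs no new estimate: the chain rule $I_T(X_1;Y)=I_T(X_1,X_2;Y)-I_T(X_2;Y\mid X_1)$ with the two previous limits gives $\frac1T I_T(X_1;Y)\to\frac{P_1+P_2}{2}-\frac{P_2}{2}=\frac{P_1}{2}$, and symmetrically for $X_2$; choosing $\theta_1,\theta_2$ large enough makes all six inequalities hold simultaneously.

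The main obstacle is the lower bound on the causal MMSE, i.e.\ showing that fast oscillation genuinely prevents the integrated output from tracking the input. The clean way to avoid an explicit (and, for the sum, matrix-valued) Riccati computation is precisely the observation above that the causal error dominates the two-sided smoothing error, which is a single spectral integral whose ``loss term'' $\frac{1}{2\pi}\int_{\mathbb{R}}S_U^2/(1+S_U)\,d\omega$ is $O(1/\theta)$. The only delicate points are setting up the two-sided stationary extension so that $\sigma(Y_0^s)\subseteq\sigma(\tilde Y_{-\infty}^\infty)$ (legitimizing the information-monotonicity step) and noting that the sandwich, being pointwise in $s$, automatically yields the uniformity in $T$ that the statement demands.
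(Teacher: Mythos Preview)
Your argument is correct and follows the same overall architecture as the paper's proof: realize $X_1,X_2$ as independent stationary Ornstein--Uhlenbeck processes, apply the Duncan/Kadota--Zakai--Ziv causal-MMSE representation, reduce $I_T(X_1,X_2;Y)$ to $I_T(X_1+X_2;Y)$, handle the conditional terms by subtracting the known input, and obtain (\ref{treat-as-noise}) from (\ref{comma-sum}) and (\ref{conditional}) via the chain rule.

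The one substantive difference is in how the key limit ``causal MMSE $\to$ variance uniformly in $s$'' is established. The paper takes the same parameter $a$ for both processes (so $X_1+X_2$ is itself an OU process) and then invokes Theorem~6.4.1 of Ihara as a black box, applied separately to $X_1$, $X_2$, and $X_1+X_2$. You instead sandwich the causal MMSE between $\mathrm{Var}(U)$ and the two-sided Wiener-smoothing error, and show the latter converges to $\mathrm{Var}(U)$ via the explicit spectral bound $\tfrac{1}{2\pi}\int S_U^2/(1+S_U)\,d\omega \le \tfrac{1}{2\pi}\int S_U^2\,d\omega = O(1/\theta)$. This buys you a self-contained argument that avoids both the Ihara reference and any Riccati computation, and it works just as well for $X_1+X_2$ when $\theta_1\ne\theta_2$ (where the sum is no longer OU), so you do not need to tie the two parameters together. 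The two-sided extension you set up is exactly what is needed to make the information-monotonicity step $\sigma(Y_0^s)\subset\sigma(d\tilde Y_{-\infty}^{\infty})$ legitimate, and the pointwise-in-$s$ sandwich indeed delivers the required uniformity in $T$.
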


\begin{proof}
For $a > 0$, consider the following two independent Ornstein-Uhlenbeck processes $X_i(t)$, $i=1, 2$, given by
$$
X_i(t)=\sqrt{2a P_i} \int_{-\infty}^t e^{-a(t-s)} dB_i(s),
$$
where $B_i$, $i=1, 2$, are independent standard Brownian motions. Obviously, for $X_i$ defined as above, (\ref{a-p-c}) is satisfied. A parallel version of the proof of Theorem $6.2.1$ of~\cite{ih93} yields that
$$
I_T(X_1, X_2; Y) = I_T(X_1+X_2; Y)= \frac{1}{2} \int_0^T E[(X_1(t)+X_2(t)-E[X_1(t)+X_2(t)|Y_0^t])^2] dt.
$$
It then follows from Theorem $6.4.1$ in~\cite{ih93} (applied to the Ornstein-Uhlenbeck process $X_1(t)+X_2(t)$ that as $a \rightarrow \infty$,
$$
I_T(X_1, X_2; Y)/T = I_T(X_1+X_2; Y)/T \rightarrow (P_1+P_2)/2,
$$
uniformly in $T$, which establishes (\ref{comma-sum}).

For $i=1, 2$, define
$$
\tilde{Y}_i(t)=\int_0^t X_i(s) ds + B(t), \quad t > 0.
$$
As in the proof of Theorem $6.4.1$ in~\cite{ih93}, we deduce that for $i=1, 2$, $I_T(X_i; \tilde{Y}_i)/T$ tend to $P_i/2$ uniformly in $T$. Now, since $X_1$ and $X_2$ are independent, we have for any fixed $T$,
$$
I_T(X_1; Y|X_2)=I_T(X_1; \tilde{Y}_1|X_2)=I_T(X_1; \tilde{Y}_1),
$$
and
$$
I_T(X_2; Y|X_1)=I_T(X_2; \tilde{Y}_2|X_1)=I_T(X_2; \tilde{Y}_2),
$$
which immediately implies (\ref{conditional}).

Now, by the chain rule of mutual information,
$$
I_T(X_1, X_2; Y)=I_T(X_1; Y)+ I_T(X_2; Y|X_1)=I_T(X_2; Y)+ I_T(X_1; Y|X_2),
$$
which, together with (\ref{comma-sum}) and (\ref{conditional}), implies (\ref{treat-as-noise}).

\end{proof}

\begin{rem}  \label{xianming-explanation}
With $X_i$, $i=1, 2$, regarded as channel inputs, (\ref{to-be-interpreted-as-channel}) can be reinterpreted as a white Gaussian MAC. For $i \neq j$, $I(X_i; Y)$, the reliable transmission rate of  $X_i$ when $X_j$ is not known can be arbitrarily close to $I(X_i; Y|X_j)$, the reliable transmission rate of $X_i$ when $X_j$ is known. In other words, for white Gaussian MACs, knowledge about other user's inputs will not help to achieve faster transmission rate, and therefore, they can be simply treated as noises. An more intuitive explanation of this result is as follows: for the Ornstein-Uhlenbeck process $X_i$ as specified in the proof, its power spectral density can be computed as
$$
f_i(\lambda)=\frac{2 a P_i}{2 \pi (\lambda^2+a^2)},
$$
which is ``negligible'' compared to that of the white Gaussian noise (which is the constant $1$) as $a$ tends to infinity. Lemma~\ref{independent-OUs} is a key ingredient for deriving the capacity regions of white Gaussian MACs.
\end{rem}

We also need some result on the information stability of continuous-time Gaussian processes. Let $(U, V)  = \{(U(t), V(t)), t \geq 0\}$ be a continuous Gaussian system (which means $U(t), V(t)$ are pairwise Gaussian stochastic processes). Define
$$
\varphi^{(T)}(u, v)=\frac{d\mu_{UV}^{(T)}}{d\mu_U^{(T)} \times \mu_V^{(T)}}(u, v), \qquad (u, v) \in C[0, T] \times C[0, T],
$$
where $\mu_U^{(T)}$, $\mu_V^{(T)}$ and $\mu_{UV}^{(T)}$ denote the probability distributions of $U_0^T$, $V_0^T$ and their joint distribution, respectively. For any $\varepsilon > 0$, we denote by $\mathcal{T}_{\varepsilon}^{(T)}$ the $\varepsilon$-typical set:
$$
\mathcal{T}^{(T)}_{\varepsilon}=\left\{(u, v) \in C[0, T] \times C[0, T]; \frac{1}{T} |\log \varphi^{(T)}(u, v)-I_T(U, V)| \leq \varepsilon \right\}.
$$
The pair $(U, V)$ is said to be {\it information stable}~\cite{pi64} if for any $\varepsilon > 0$,
$$
\lim_{T \to \infty} \mu^{(T)}_{UV}(\mathcal{T}_{\varepsilon}) = 1.
$$

The following theorem is a rephrased version of Theorem 6.6.2. in~\cite{ih93}.
\begin{lem} \label{information-stability}
The Gaussian system $(U, V)$ is information stable provided that
$$
\lim_{T \rightarrow \infty} \frac{I_T(U; V)}{T^2}=0.
$$
\end{lem}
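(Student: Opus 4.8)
The plan is to recognize information stability as a weak law of large numbers for the information density $\log\varphi^{(T)}(U,V)$ and to prove it via Chebyshev's inequality together with a sharp variance bound. Observe first that, by the very definition of mutual information, $\E_{\mu_{UV}^{(T)}}[\log\varphi^{(T)}(U,V)]=I_T(U,V)$, so the complement of the typical set is exactly the event that $\log\varphi^{(T)}$ deviates from its mean by more than $\varepsilon T$:
\begin{equation*}
\mu_{UV}^{(T)}\big((\mathcal{T}_\varepsilon^{(T)})^c\big)=\PX\big(|\log\varphi^{(T)}(U,V)-I_T(U,V)|>\varepsilon T\big)\leq\frac{\mathrm{Var}_{\mu_{UV}^{(T)}}(\log\varphi^{(T)}(U,V))}{\varepsilon^2 T^2}.
\end{equation*}
Hence it suffices to establish the variance bound $\mathrm{Var}_{\mu_{UV}^{(T)}}(\log\varphi^{(T)})\leq 2\,I_T(U,V)$, since then the hypothesis $I_T(U,V)/T^2\to 0$ forces the right-hand side to vanish. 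Note that $I_T(U,V)/T^2\to 0$ in particular forces $I_T(U,V)<\infty$ for all large $T$, which is what I shall assume below.

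Second, I would diagonalize the jointly Gaussian pair $(U_0^T,V_0^T)$ by its canonical-correlation (Gelfand--Yaglom / Hida) decomposition: there exist mutually independent pairs $(\xi_k,\eta_k)$ of jointly standard Gaussian variables with canonical correlations $\rho_k\in[0,1)$ such that the information density splits additively, $\log\varphi^{(T)}(U,V)=\sum_k t_k$, where the summands are independent and
\begin{equation*}
t_k=-\tfrac{1}{2}\log(1-\rho_k^2)+\frac{\rho_k}{1-\rho_k^2}\,\xi_k\eta_k-\frac{\rho_k^2}{2(1-\rho_k^2)}(\xi_k^2+\eta_k^2),\qquad \E[t_k]=-\tfrac{1}{2}\log(1-\rho_k^2);
\end{equation*}
summing the means recovers the Gelfand--Yaglom formula $I_T(U,V)=-\tfrac12\sum_k\log(1-\rho_k^2)$. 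A direct second-moment computation for a correlated standard Gaussian pair via Isserlis' theorem (so that $\mathrm{Var}(\xi_k^2+\eta_k^2)=4+4\rho_k^2$, $\mathrm{Var}(\xi_k\eta_k)=1+\rho_k^2$, and $\mathrm{Cov}(\xi_k^2+\eta_k^2,\xi_k\eta_k)=4\rho_k$) then collapses to the clean identity $\mathrm{Var}(t_k)=\rho_k^2$. By independence of the modes, $\mathrm{Var}_{\mu_{UV}^{(T)}}(\log\varphi^{(T)})=\sum_k\rho_k^2$, and since $\rho_k^2\leq-\log(1-\rho_k^2)$ for each $k$, this is at most $\sum_k(-\log(1-\rho_k^2))=2\,I_T(U,V)$, which is precisely the desired bound.

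The routine steps are the Chebyshev estimate and the per-mode variance computation, which pleasantly yields exactly $\rho_k^2$, so the bound $2I_T$ is matched to the hypothesis $I_T/T^2\to 0$ with nothing to spare. The hard part will be the rigorous justification of the canonical decomposition in the infinite-dimensional continuous-time setting: one must invoke the spectral theory of the relevant covariance operators to produce the correlations $\{\rho_k\}$, verify that the associated operator is trace class (equivalently $\sum_k\rho_k^2<\infty$, guaranteed by $I_T<\infty$), and confirm that the series $\sum_k t_k$ converges to $\log\varphi^{(T)}$ both almost surely and in $L^2(\mu_{UV}^{(T)})$, so that the termwise variance summation is legitimate. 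Once this structural decomposition is in place---the content borrowed from the Gaussian-channel machinery underlying Theorem~$6.6.2$ of~\cite{ih93}---the remaining estimates are elementary.
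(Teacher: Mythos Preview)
Your argument is correct: the Chebyshev reduction together with the canonical-correlation decomposition and the clean identity $\mathrm{Var}(t_k)=\rho_k^2$ yields $\mathrm{Var}(\log\varphi^{(T)})\leq 2I_T(U;V)$, which is exactly what is needed. The per-mode computation checks out (in particular $\mathrm{Cov}(\xi_k\eta_k,\xi_k^2+\eta_k^2)=4\rho_k$ and the quadratic collapse $(1+\rho_k^2)^2-4\rho_k^2=(1-\rho_k^2)^2$ give the stated $\rho_k^2$).

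As for comparison: the paper does not supply a proof of this lemma at all. It simply records the statement as ``a rephrased version of Theorem~6.6.2 in~\cite{ih93}'' and uses it as a black box in the achievability argument for the MAC. What you have written is essentially the classical proof one finds in Ihara (and, going further back, in Pinsker), so there is no methodological divergence to discuss---you have filled in precisely the argument the paper defers to the reference. Your own caveat is the right one: the substantive work lies in justifying the Gelfand--Yaglom canonical decomposition for the infinite-dimensional pair $(U_0^T,V_0^T)$ and the $L^2$-convergence of the series for $\log\varphi^{(T)}$; once that structural input is granted, the remainder is routine.
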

\noindent Lemma~\ref{information-stability} will be used in the proof of Theorem~\ref{Theorem-MAC} to establish, roughly speaking, that almost all sequences are jointly typical.

With Lemmas~\ref{independent-OUs} and~\ref{information-stability}, Theorem~\ref{Theorem-MAC} largely follows from a lengthy yet almost routine argument, which is included below due to a number of technical challenges in the proof.
\begin{proof}[Proof of Theorem~\ref{Theorem-MAC}]

\textbf{The converse part.} In this part, we will show that for any sequence of $(T, (e^{T R_1}, e^{T R_2}), (P_1, P_2))$-codes with $P_e^{(T)} \rightarrow 0$ as $T \rightarrow \infty$, the rate pair $(R_1, R_2)$ will have to satisfy
$$
R_1 \leq P_1/2, \qquad R_2 \leq P_2/2.
$$

Fix $T$ and consider the above-mentioned $(T, (e^{T R_1}, e^{T R_2}), (P_1, P_2))$-code. By the code construction, it is possible to estimate the messages $(M_1, M_2)$ from the channel output $Y_0^T$ with a low probability of error. Hence, the conditional entropy of $(M_1, M_2)$ given $Y_0^T$ must be small; more precisely, by Fano's inequality,
$$
H(M_1,M_2|Y_0^T) \leq T(R_1+R_2)P^{(T)}_e +H(P^{(T)}_e) = T \varepsilon_T,
$$
where $\varepsilon_T \rightarrow 0$ as $T \rightarrow \infty$. Then, we have $$
H(M_1|Y^T) \leq H(M_1, M_2|Y^T) \leq T\varepsilon_T, \quad
H(M_2|Y^T) \leq H(M_1, M_2|Y^T) \leq T\varepsilon_T.
$$
Now, we can bound the rate $R_1$ as follows:
\begin{align*}
T R_1 & = H(M_1)\\
      & = I(M_1; Y_0^T)+H(M_1|Y_0^T)\\
      & \leq I(M_1; Y_0^T) + T \varepsilon_T\\
      & \leq H(M_1)-H(M_1|Y_0^T) + T \varepsilon_T \\
      & \leq H(M_1|M_2)-H(M_1|Y_0^T, M_2)+T \varepsilon_T \\
      & = I(M_1; Y_0^T|M_2) + T \varepsilon_T.
\end{align*}

Conditioning on $M_2$ and applying Theorem $6.2.1$ in~\cite{ih93}, we have
\begin{align*}
I(M_1; Y_0^T|M_2) &=\frac{1}{2} E\left[\int_0^T E[(X_1(t)+X_2(t)-\hat{X}_1(t)-\hat{X}_2(t))^2|M_2] dt \right]\\
&=\frac{1}{2} \int_0^T E[(X_1(t)+X_2(t)-\hat{X}_1(t)-\hat{X}_2(t))^2] dt,
\end{align*}
where $\hat{X}_i(t)=E[X_i(t)|Y_0^T, M_2]$, $i=1, 2$. Noticing that $X_2=\hat{X}_2$, we then have
$$
I(M_1; Y_0^T|M_2)=\frac{1}{2} \int_0^T E[(X_1(t)-\hat{X}_1(t))^2] dt,
$$
which, together with (\ref{PowerConstraint-MAC}), implies that $R_1 \leq P_1/2$. A completely parallel argument will yield that  $R_2 \leq P_2/2$.

\textbf{The achievability part.} In this part, we will show that as long as $(R_1, R_2)$ satisfying
\begin{equation} \label{strictly-less-than}
0 \leq R_1 < P_1/2, \quad 0 \leq R_2 < P_2/2,
\end{equation}
we can find a sequence of $(T, (e^{T R_1}, e^{T R_2}), (P_1, P_2))$-codes with $P_e^{(T)} \rightarrow 0$ as $T \rightarrow \infty$. The argument consists of several steps as follows.

\emph{Codebook generation}: For a fixed $T > 0$ and $\varepsilon > 0$, assume that $X_1$ and $X_2$ are independent Ornstein-Uhlenbeck processes over $[0, T]$ with respective variances $P_1-\varepsilon$ and $P_2-\varepsilon$, and that $(R_1,R_2)$ satisfying (\ref{strictly-less-than}). Generate $e^{T R_1}$ independent codewords $X_{1, i}$, $i \in \{1, 2, \ldots, e^{T R_1}\}$, of length $T$, according to the distribution of $X_1$. Similarly, generate $e^{T R_2}$ independent codewords $X_{2, j}$, $j \in \{1, 2, \ldots, e^{T R_2}\}$, of length $T$, according to the distribution of $X_2$. These codewords (which may not satisfy the power constraint in (\ref{PowerConstraint-MAC})) form the codebook, which is revealed to the senders and the receiver.

\emph{Encoding:} To send message $i \in \mathcal{M}_1$, sender $1$ sends the codeword $X_{1, i}$. Similarly, to send $j \in \mathcal{M}_2$, sender $2$ sends $X_{2, j}$.

\emph{Decoding:} For any fixed $\varepsilon > 0$, let $\mathcal{T}_{\varepsilon}^{(T)}$ denote the set of {\it jointly typical} $(x_1, x_2, y)$ sequences, which is defined as follows:
$$
\mathcal{T}_{\varepsilon}^{(T)}=\{(x_1,x_2, y) \in C[0, T] \times C[0, T] \times C[0, T]:|\log
\varphi_1(x_1, x_2, y)-I_T(X_1,X_2; Y)|\leq  T \varepsilon,
$$
$$
|\log
\varphi_2(x_1, x_2, y)-I_T(X_1; X_2, Y)|\leq  T \varepsilon, |\log \varphi_3(x_1, x_2, y)-I_T(X_2; X_1,Y)|\leq  T \varepsilon \},
$$
where
$$
\varphi_1(x_1, x_2, y)=\frac{d\mu_{X_1 X_2 Y}}{d\mu_{X_1 X_2} \times \mu_{Y}}(x_1, x_2, y),
$$
$$
\varphi_2(x_1, x_2, y)=\frac{d\mu_{X_1 X_2 Y}}{d\mu_{X_1} \times \mu_{X_2 Y}}(x_1,x_2,y),
$$
$$
\varphi_3(x_1, x_2, y)=\frac{d\mu_{X_1 X_2 Y}}{d\mu_{X_2} \times \mu_{X_1 Y}}(x_1, x_2, y).
$$
Here we remark that it is easy to check that the above Radon-Nykodym derivatives are all well-defined; see, e.g., Theorem $7.7$ of~\cite{li01} for sufficient conditions for their existence. Based on the received output $y \in C[0, T]$, the receiver chooses the pair $(i, j)$ such that
$$
(x_{1, i}, x_{2, j}, y) \in \mathcal{T}_{\varepsilon}^{(T)},
$$
if such a pair $(i, j)$ exists and is unique; otherwise, an error is declared. Moreover, an error will be declared if the chosen codeword does not satisfy the power constraint in (\ref{PowerConstraint-MAC}).

\emph{Analysis of the probability of error:} Now, for fixed $T, \varepsilon > 0$, define
$$
E_{ij}=\{(X_{1, i} ,X_{2, j} ,Y) \in \mathcal{T}_{\varepsilon}^{(T)}\}.
$$
By symmetry, we assume, without loss of generality, that (1,1) was
sent. Define $\pi^{(T)}$ to be the event that
$$
\int_0^T (X_{1, 1}(t))^2 dt > P_1 T, \quad \int_0^T (X_{2, 1}(t))^2 dt > P_2 T.
$$
Then, $\hat{P}_e^{(T)}$, the error probability for the above coding scheme (where codewords violating the power constraint are allowed), can be upper bounded as follows:
$$
\hat{P}_e^{(T)} = P(\pi^{(T)} \cup E_{11}^c \bigcup \cup_{(i, j) \neq (1, 1)} E_{ij})
$$
$$
\leq P(\pi^{(T)})+P(E_{11}^c)+\sum_{i \neq 1, j =1} P(E_{i1})+\sum_{i=1, j \neq 1} P(E_{1j})+\sum_{i \neq 1, j \neq 1} P(E_{ij}).
$$
So, for any $i, j \neq 1$, we have
$$
\hat{P}^{(T)}_e \leq P(\pi^{(T)})+ P(E^c_{11})+e^{T R_1} P(E_{i1})+e^{T R_2} P(E_{1j})+ e^{T R_1 + T R_2} P(E_{ij})
$$
Using the well-known fact that an Ornstein-Uhlenbeck process is ergodic~\cite{le08, ku04}, we deduce that $P(\pi^{(T)}) \to 0$ as $T \to \infty$. And by Lemma~\ref{information-stability} and Theorem $6.2.1$ in~\cite{ih93}, we have
$$
\lim_{T \rightarrow \infty} P((X_{1, 1}, X_{2, 1}, Y) \in \mathcal{T}_{\varepsilon}^{(T)})=1 \mbox{ and thus } \lim_{T\rightarrow \infty} P(E^c_{11})=0.
$$
Now, we have for any $i \neq 1$,
\begin{align*}
P(E_{i1}) & = P((X_{1, i}, X_{2, 1}, Y) \in \mathcal{T}^{(T)}_{\varepsilon}) \\
&= \int_{(x_1, x_2, y) \in \mathcal{T}_{\varepsilon}^{(T)}} d\mu_{X_1}(x_1) d\mu_{X_2 Y}(x_2, y) \\
&= \int_{\mathcal{T}_{\varepsilon}^{(T)}}\frac{1}{\varphi_1(x_1, x_2, y)}d\mu_{X_1 X_2 Y}(x_1, x_2, y) \\
&\leq \int_{\mathcal{T}_{\varepsilon}^{(T)}}e^{-I_T(X_1; X_2, Y)+\varepsilon T}d\mu_{X_1 X_2 Y}(x_1, x_2, y) \\
&= e^{-I_T(X_1; Y|X_2)+\varepsilon T},
\end{align*}
where we have used the independence of $X_1$ and $X_2$, and the consequent fact that
$$
I_T(X_1; X_2, Y)=I_T(X_1; X_2)+I_T(X_1; Y| X_2)=I_T(X_1; Y|X_2).
$$
Similarly, we have, for $j \neq 1$,
$$
P(E_{1j}) \leq e^{-I_T(X_2; Y|X_1)+\varepsilon T},
$$
and for $i, j \neq 1$,
$$
P(E_{ij}) \leq e^{-I_T(X_1, X_2; Y)+\varepsilon T}.
$$
It then follows that
$$
\hat{P}^{(T)}_e \leq P(\pi^{(T)})+P(E_{11}^c)+ e^{T R_1+ \varepsilon T-I_T(X_1; Y|X_2)}+e^{T R_2+ \varepsilon T-I_T(X_2; Y|X_1)}+e^{T R_1+T R_2+ \varepsilon T-I_T(X_1, X_2; Y)}.
$$
By Lemma~\ref{independent-OUs}, one can choose independent OU processes $X_1, X_2$ such that
$I_T(X_1; Y|X_2)/T \rightarrow (P_1-\eps)/2$, $I_T(X_2; Y|X_1)/T \rightarrow (P_2-\eps)/2$ and $I_T(X_1, X_2; Y)/T \rightarrow (P_1+P_2-2 \eps)$ uniformly in $T$. This implies that with $\eps$ chosen sufficiently small, we have $\hat{P}^{(T)}_e \rightarrow 0$, as $T \rightarrow \infty$. In other words, there exists a sequence of good codes (which may not satisfy the power constraint) with low average error probability. Now, from each of the above codes, we delete the worse half of the codewords (any codeword violating the power constraint will be deleted since it must have error probability $1$). Then, with only slightly decreased transmission rate, the remaining codewords will satisfy the power constraint and will have small maximum error probability (and thus small average error probability $P_e^{(T)}$), which implies that the rate pair $(R_1,R_2)$ is achievable.
\end{proof}

\begin{rem}
The achievability part can be proven alternatively, which will be roughly described as follows: for arbitrarily small $\eps > 0$, by Lemma~\ref{independent-OUs}, one can choose independent Ornstein-Uhlenbeck processes $X_i$ with respective variances $P_i-\eps$, $i=1, 2$, such that $I_T(X_i; Y)/T$ approaches $(P_i-\eps)/2$. Then, a parallel random coding argument with $X_j$, $j \neq i$, being treated as noise at receiver $i$ shows that the rate pair $((P_1-\eps)/2, (P_2-\eps)/2)$ can be approached, which yields the achievability part.
\end{rem}

\section{Proof of Theorem~\ref{Theorem-IC-Without-Feedback}}  \label{proof-Theorem-IC-Without-Feedback}

For notational convenience only, we only prove the case when $n=2$; the case when $n$ is generic is similar.

{\bf The converse part.} In this part, we will show that for any sequence of $(T, (e^{T R_1}, e^{T R_2}), (P_1, P_2))$ codes with
$P_e^{(T)} \rightarrow 0$, the rate pair $(R_1, R_2)$ will have to satisfy
\begin{equation} \label{less-than-squared}
R_1 \leq a_{11}^2 P_1/2, \quad R_2 \leq a_{22}^2 P_2/2.
\end{equation}

Fix $T$ and consider the above-mentioned $(T, (e^{T R_1}, e^{T R_2}), (P_1, P_2))$ code. By the code construction, for $i=1, 2$, it is possible to estimate the messages $M_i$ from the channel output $Y_{i, 0}^T$ with an arbitrarily low probability of error. Hence, by Fano's inequality, for $i=1, 2$,
$$
H(M_i|Y_{i, 0}^T) = T \varepsilon_{i, T},
$$
where $\varepsilon_{i, T} \to 0$ as $T \to \infty$. We then have
$$
T R_1=H(M_1)=H(M_1|M_2)=I(M_1; Y_1|M_2)+H(M_1|M_2, Y_1) \leq I(M_1; Y_1|M_2)+ T \varepsilon_{1, T},
$$
As in the proof of Theorem~\ref{Theorem-MAC}, we have
$$
I(M_1; Y_{1, 0}^T|M_2)= \frac{a^2_{11}}{2} \int_0^T E[(X_1(s)-E[X_1(s)|M_2, Y_{1, 0}^s])^2] ds.
$$
It then follows that
$$
T R_1 \leq \frac{a_{11}^2}{2} \int_0^T E[(X_1(s)-E[X_1(s)|M_2, Y_{1, 0}^s])^2] ds + T \varepsilon_{1, T},
$$
which implies that $R_1 \leq a_{11}^2 P_1/2$. With a parallel argument, one can derive that $R_2 \leq a_{22}^2 P_2/2$. The proof for the converse part is then complete.

{\bf The achievability part.} We only sketch the proof of this part. For arbitrarily small $\eps > 0$, by Lemma~\ref{independent-OUs}, one can choose independent Ornstein-Uhlenbeck processes $X_i$ with respective variances $P_i-\eps$, $i=1, 2$, such that $I_T(X_i; Y)/T$ approaches $a^2_{ii}(P_i-\eps)/2$. Then, a parallel random coding argument as in the proof of Theorem~\ref{Theorem-MAC} with $X_j$, $j \neq i$, being treated as noise at receiver $i$ shows that the rate pair $(a^2_{11}(P_1-\eps)/2, a^2_{22}(P_2-\eps)/2)$ can be approached, which yields the achievability part.

\section{Proof of Theorem~\ref{Theorem-BC-Without-Feedback}} \label{proof-Theorem-BC-Without-Feedback}

One of the important tools that plays a key role in discrete-time network information theory is the entropy power inequality~\cite{co2006, el11}, which can be applied to compare information-theoretic quantities involving different users.
The following lemma, which, despite its strikingly different form, serves the typical function of a discrete-time entropy power inequality.
\begin{lem} \label{Xianming-Lemma}
Consider a continuous-time white Gaussian channel characterized by the following equation
$$
Y(t)=\sqrt{snr} \int_0^t X(s) ds + B(t), \quad t \geq 0,
$$
where $snr \geq 0$ denotes the signal-to-noise ratio in the channel and $M$ is the message to be transmitted through the channel. Then, for any fixed $T$, $I_T(M; Y)/snr$ is a monotone decreasing function of $snr$.
\end{lem}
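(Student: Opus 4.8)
The plan is to reduce the statement to two structural properties of the map $snr \mapsto I_T(M;Y)$, namely concavity and vanishing at the origin, and then to read off the monotonicity of the ratio as a purely analytic consequence. Write $I(snr) \triangleq I_T(M;Y)$, with $T$ and the law of the input $X(\cdot)=X(\cdot,M)$ held fixed, and let $Y^{(snr)}$ denote the corresponding output path. First I would invoke the continuous-time I-MMSE relationship~\cite{gu05}, which in the present notation reads
\[
\frac{d}{d\,snr}\,I(snr)=\frac{1}{2}\int_0^T \E\big[(X(t)-\E[X(t)\mid Y_0^T])^2\big]\,dt \triangleq \frac{1}{2}\,\mathrm{mmse}(snr).
\]
Integrating gives $I(snr)=\tfrac12\int_0^{snr}\mathrm{mmse}(\gamma)\,d\gamma$; in particular $I(0)=0$, which is also clear directly since at $snr=0$ the output is the noise $B$ alone, independent of $M$.

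The main ingredient, and the only nontrivial one, is that $\mathrm{mmse}(snr)$ is nonincreasing in $snr$. I would prove this by a pathwise stochastic-degradation argument. Fix $0\le snr_1\le snr_2$ and set
\[
\tilde Y(t)=\sqrt{snr_1/snr_2}\;Y^{(snr_2)}(t)+\sqrt{1-snr_1/snr_2}\;B'(t),
\]
where $B'$ is a standard Brownian motion independent of everything else. A direct computation shows the noise term recombines into a standard Brownian motion independent of $(M,X)$, so that $(M,\tilde Y)$ has the same law as $(M,Y^{(snr_1)})$, while by construction $M\to Y^{(snr_2)}\to \tilde Y$ is a Markov chain. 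The data-processing property of the MMSE then gives, via the orthogonal decomposition $X-\E[X\mid\tilde Y]=(X-\E[X\mid Y^{(snr_2)}])+(\E[X\mid Y^{(snr_2)}]-\E[X\mid\tilde Y])$ whose cross term vanishes by the Markov property, that estimating $X$ from the degraded $\tilde Y$ is no better than from $Y^{(snr_2)}$; integrating in $t$ yields $\mathrm{mmse}(snr_1)\ge \mathrm{mmse}(snr_2)$.

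With these two facts the conclusion is immediate. Since $I'(snr)=\tfrac12\,\mathrm{mmse}(snr)$ is nonincreasing, $I$ is concave on $[0,\infty)$, and together with $I(0)=0$ this forces $snr\mapsto I(snr)/snr$ to be nonincreasing: for $0<snr_1<snr_2$, putting $\lambda=snr_1/snr_2\in(0,1)$ and using concavity with $I(0)=0$,
\[
I(snr_1)=I\big(\lambda\,snr_2+(1-\lambda)\cdot 0\big)\ge \lambda\,I(snr_2)=\frac{snr_1}{snr_2}\,I(snr_2),
\]
which rearranges to $I(snr_1)/snr_1\ge I(snr_2)/snr_2$, exactly the asserted monotonicity.

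I expect the main obstacle to lie in making the degradation step and the invocation of I-MMSE fully rigorous at the level of path measures: one must check that $\tilde Y$ and $Y^{(snr_1)}$ genuinely induce the same measure on $C[0,T]$, that the non-causal MMSE is finite and $I(snr)$ differentiable in $snr$ (so that the I-MMSE identity of~\cite{gu05} applies under the standing integrability and regularity hypotheses on $X$), and that the $L^2$-projection manipulations are justified when $X$ is only square-integrable rather than bounded. The concavity-to-ratio step is then entirely elementary.
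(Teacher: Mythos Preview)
Your proof is correct but takes a different route from the paper's. The paper computes the derivative of the ratio directly:
\[
\left(\frac{I_T(snr)}{snr}\right)'=\frac{1}{snr}\left(I_T'(snr)-\frac{I_T(snr)}{snr}\right),
\]
and then identifies the two pieces via two distinct formulas: the I-MMSE relation gives $I_T'(snr)=\tfrac12\int_0^T\E[(X(s)-\E[X(s)\mid Y_0^T])^2]\,ds$ (non-causal MMSE), while the Duncan/Kadota--Zakai--Ziv formula gives $I_T(snr)/snr=\tfrac12\int_0^T\E[(X(s)-\E[X(s)\mid Y_0^s])^2]\,ds$ (causal MMSE). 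The sign then follows immediately from the trivial fact that conditioning on all of $Y_0^T$ can only reduce the error compared to conditioning on $Y_0^s$, i.e.\ a comparison of causal versus non-causal estimation at the \emph{same} $snr$. Your argument, by contrast, stays entirely with the non-causal MMSE and compares it across \emph{different} $snr$ values via a degradation construction, yielding concavity of $I_T$ and then the monotonicity of the ratio from $I_T(0)=0$. Both are valid; the paper's version is shorter and avoids the auxiliary Brownian motion and the degradation/data-processing step, at the cost of invoking the causal mutual-information formula in addition to I-MMSE. Your version has the virtue of using only one identity (I-MMSE) plus a self-contained monotonicity argument, and it makes the concavity of $snr\mapsto I_T(snr)$ explicit, which is of independent interest.
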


\begin{proof}
For notational convenience, in this proof, we write $I_T(M; Y)$ as $I_T(snr)$.
By Theorem $6.2.1$ in~\cite{ih93}, we have
$$
I_T(snr)=\frac{snr}{2} \int_0^T E[(X(s)-E[X(s)|Y_0^s])^2] ds,
$$
and Theorem $6$ in~\cite{gu05}, we have (the derivative is with respect to $snr$)
$$
I'_T(snr)=\frac{1}{2} \int_0^T E[(X(s)-E[X(s)|Y_0^T])^2] ds.
$$
It then follows that
\begin{align*}
\left(\frac{I_T(snr)}{snr}\right)' &=\frac{1}{snr}\left(I'_T(snr)-\frac{I_T(snr)}{snr}\right) \\
&=\frac{1}{2 snr} \left(\int_0^T E[(X(s)-E[X(s)|Y_0^T])^2] ds-\int_0^T E[(X(s)-E[X(s)|Y_0^s])^2] ds \right) \leq 0,
\end{align*}
which immediately implies the lemma.
\end{proof}

We are now ready for the proof of Theorem~\ref{Theorem-BC-Without-Feedback}.
\begin{proof}[Proof of Theorem~\ref{Theorem-BC-Without-Feedback}]
For notational convenience only, we prove the case when $n=2$, the case when $n$ is generic being parallel.

{\bf The converse part.} Without loss of generality, we assume that
$$
snr_1 \geq snr_2.
$$
We will show that for any sequence of $(T, (e^{T R_1}, e^{T R_2}), P)$ codes with
$P_e^{(T)} \rightarrow 0$ as $T \rightarrow \infty$, the rate pair $(R_1, R_2)$ will have to satisfy
\begin{equation} \label{divided-by-snr}
\frac{R_1}{snr_1}+\frac{R_2}{snr_2} \leq \frac{P}{2}.
\end{equation}

Fix $T$ and consider the above-mentioned $(T, (e^{T R_1}, e^{T R_2}), P)$-code. By the code construction, for $i=1, 2$, it is possible to estimate the messages $M_i$ from the channel output $Y_{i, 0}^T$ with an arbitrarily low probability of error. Hence, by Fano's inequality, for $i=1, 2$,
$$
H(M_i|Y_{i, 0}^T) \leq T R_i P^{(T)}_e +H(P^{(T)}_e) = T \varepsilon_{i, T},
$$
where $\varepsilon_{i, T} \rightarrow 0$ as $T \rightarrow \infty$. It then follows that
\begin{equation} \label{eq-1}
T R_1 = H(M_1) = H(M_1|M_2) \leq I(M_1; Y_{1, 0}^T|M_2) + T \varepsilon_{1, T},
\end{equation}
\begin{equation} \label{eq-2}
T R_2 = H(M_2) \leq I(M_2; Y_{2, 0}^T) + T \varepsilon_{2, T}.
\end{equation}
By the chain rule of mutual information, we have
\begin{equation} \label{eq-3}
I(M_1, M_2; Y_{2, 0}^T)=I(M_2; Y_{2, 0}^T)+I(M_1; Y_{2, 0}^T|M_2) \geq I(M_2; Y_{2, 0}^T) + \frac{snr_2}{snr_1} I(M_1; Y_{1, 0}^T|M_2),
\end{equation}
where, for the inequality above, we have applied Lemma~\ref{Xianming-Lemma}.
Now, by Theorem $6.2.1$ in~\cite{ih93}, we have
$$
I(M_1, M_2; Y_{2, 0}^T) = \frac{snr_2}{2} \int_0^T E[(X(s)-E[X(s)|Y_{2, 0}^s])^2] ds \leq \frac{snr_2}{2} \int_0^T E[X^2(s)] ds,
$$
which, together with (\ref{eq-1}), (\ref{eq-2}), (\ref{eq-3}) and (\ref{PowerConstraint-BC}), immediately implies the converse part.

{\bf The achievability part.} We only sketch the proof of this part. For an arbitrarily small $\eps > 0$, by Theorem $6.4.1$ in~\cite{ih93}, one can choose an Ornstein-Uhlenbeck processes $\tilde{X}$ with variance $P-\eps$, such that $I_T(\tilde{X}; Y_i)/T$ approaches $snr_i (P-\eps)/2$. For any $0 \leq \lambda \leq 1$, let
$$
X(t)=\sqrt{\lambda} X_1(t)+ \sqrt{1-\lambda} X_2(t), \quad t \geq 0,
$$
where $X_1$ and $X_2$ are independent copies of $\tilde{X}$. Then, by a similar argument as in the proof of Lemm~\ref{independent-OUs}, we deduce that $I_T(X_1; Y_1)/T, I_T(X_2; Y_2)/T$ approach $snr_1 \lambda (P-\eps)/2$, $snr_2 (1-\lambda)(P-\eps)/2$, respectively. Then, a parallel random coding argument as in the proof of Theorem~\ref{Theorem-MAC} such that
\begin{itemize}
\item when encoding, $X_i$ only carries the message meant for receiver $i$;
\item when decoding, receiver $i$ treats $X_j$, $j \neq i$, as noise,
\end{itemize}
shows that the rate pair $(snr_1 \lambda (P-\eps)/2, snr_2 (1-\lambda) (P-\eps)/2)$ can be approached, which immediately establishes the achievability part.
\end{proof}

\begin{rem}
For the achievability part, instead of using the power sharing scheme as in the proof, one can also employ the following time sharing scheme: set $X$ to be $X_1$ for $\lambda$ fraction of the time, and $X_2$ for $1-\lambda$ fraction of the time. Then, it is straightforward to check this scheme also achieves the rate pair $(snr_1 \lambda (P-\eps)/2, snr_2 (1-\lambda) (P-\eps)/2)$. This, from a different perspective, echoes the observation in~\cite{la03} that time sharing achieves the capacity region of a white Gaussian BC as the bandwidth limit tends to infinity.
\end{rem}

\end{document}